\begin{document}

\begin{frontmatter}

\title{Importance Sampling: Intrinsic Dimension and Computational Cost}
\runtitle{Importance Sampling} 


\begin{aug}
\author{\fnms{S.} \snm{Agapiou}\thanksref{t1}\ead[label=e1]{first@somewhere.com}},
\author{\fnms{O.} \snm{Papaspiliopoulos}\ead[label=e2]{second@somewhere.com}}\thanksref{t2}
\author{\fnms{D.} \snm{Sanz-Alonso}\ead[label=e2]{second@somewhere.com}}\thanksref{t3}
\and
\author{\fnms{A. M.} \snm{Stuart}\thanksref{t4}
\ead[label=e3]{third@somewhere.com}
\ead[label=u1,url]{www.foo.com}}
\thankstext{t1}{ Department of Mathematics and Statistics, University of Cyprus, 1 University Avenue, 2109 Nicosia, Cyprus.}
\thankstext{t2}{ICREA \& Department of Economics, Universitat Pompeu Fabra, Ramon Trias Fargas 25-27, 08005 Barcelona, Spain.}
\thankstext{t3}{Division of Applied Mathematics, Brown University, Providence RI02906, USA.}
\thankstext{t4}{Computing and Mathematical Sciences, California Institute of
Technology, Pasadena CA91125, USA.}

\address{agapiou.sergios@ucy.ac.cy, omiros.papaspiliopoulos@upf.edu, \text{daniel\_ \hspace{-3pt}sanz-alonso1@brown.edu},  astuart@caltech.edu}
\end{aug}

\runauthor{Agapiou, Papaspiliopoulos, Sanz-Alonso, Stuart}




\begin{abstract}

The basic idea of importance sampling is to use independent samples from
a proposal measure in order to approximate expectations with respect to a target
measure. It is key to understand how many samples are required in order to guarantee accurate approximations.
Intuitively, some notion of distance between the target and the proposal should determine the
computational cost of the method. A major challenge is to quantify this distance
in terms of parameters or statistics that are pertinent for the practitioner. The subject
has attracted substantial interest from within a variety of
communities. The objective of this paper is to overview and unify the
resulting literature by creating an overarching framework.  
A general theory is presented, with a focus on the use of importance sampling in
Bayesian inverse problems and filtering.
\end{abstract}

\end{frontmatter}

\section{Introduction}
\label{sec:INT}

\subsection{Our Purpose}
\label{ssec:pur}

Our purpose in this paper is to overview various ways of measuring
the computational cost of importance sampling, to link them 
to one another through transparent mathematical reasoning, and to create
  cohesion in  the vast published literature on this subject. 
In addressing these issues we will study importance sampling
in a general abstract setting, and then in the particular cases
of Bayesian inversion and filtering. These two application
settings are particularly important as there are many pressing scientific,
technological and societal problems which can be formulated via
inversion or filtering. An example of such an inverse problem is the determination
of subsurface properties of the Earth from surface measurements;
an example of a filtering problem is the assimilation of atmospheric
measurements into numerical weather forecasts. We now proceed to overview the subject of importance sampling, and
the perspective on it that is our focus. In subsection \ref{ssec:org} we describe
the organization of the paper and our main contributions. 
Subsection \ref{ssec:rev} then collects all the references linked to the material in 
the introduction, as well as other general references on importance 
sampling. Each subsequent section of the paper contains its own literature 
review subsection providing further elaboration of the literature,
and linking it to the details of the material that we present in that
section.

The general abstract setting in which we work is as follows. 
We let $\tm$ and $\prm$  be two probability measures on a measurable space $(\X, {\mathcal F})$
related via the expression
\begin{equation}\label{abscont}
\frac{d\tm}{d\prm}(u):=g(u)\Big/\int_{\X} g(u)\prm(du).\end{equation}
Here, $g$ is the unnormalised  {\em density (or Radon-Nikodym derivative)} 
of $\tm$ with respect to $\prm.$ 
Note that the very existence of the density implies that the
target is {\em absolutely continuous} with respect to the proposal;
absolute continuity will play an important role in our subsequent
developments. 

Importance sampling is a method for using independent samples from
the {\em proposal} $\prm$ to approximately compute expectations with
respect to the {\em target} $\tm.$ The way importance sampling
  (and more generally Monte Carlo integration methods) is used within
  Bayesian statistics and Bayesian inverse problems is as an
  approximation of the target measure $\tm$ by a random probability
  measure using weighted samples that are generated from $\prm$.  (This perspective differs from that arising in other disciplines, e.g. in certain applications in mathematical finance, such as option pricing).  Our perspective is dictated by the need to use the samples to estimate expectations and quantiles of a wide range of functions defined on the state space, e.g. functions of a single variable or pairs of variables, or marginal likelihood quantities. The resulting approximation is typically called a particle approximation. Our perspective on importance sampling as a probability measure approximation dictates in turn the tools for studying its performance. Its computational cost is measured by the number of samples required to control the worst error made when approximating expectations within a class of test functions. In this article, and following existing foundational work,  we primarily focus on a total variation metric between random measures
for assessing the particle
approximation error.  Intuitively, the size of the error  is related to how far
the target measure is from the proposal measure. We make this
intuition precise, and connect the particle approximation error to  a
key quantity, the second moment of $d \tm/d \prm$ under the proposal,
which we denote by $\rho$: 
\[
\rho = \pi(g^2) / \pi(g)^2\,.
\] 
As detailed below, $\rho$ is essentially the $\chi^2$ divergence between the
target and the proposal.

The first application of this setting that we study
is the linear inverse problem to determine $u \in \X$ from $y$ where 
\begin{equation}\label{inverseproblem}
y = \fp u + \eta, \quad \eta\sim \G(0,\nc).
\end{equation}
We adopt a Bayesian approach in which we place a prior 
$u\sim \prior= \G(0,\prc)$, assume that $\eta$ is independent
of $u$, and seek the posterior $u|y \sim \post.$ We study importance sampling
with $\post$ being the target $\tm$ and $\prior$ being
the proposal $\prm$.

The second application is the linear filtering problem of sequentially
updating the distribution of $v_j \in \X$ given $\{y_{i}\}_{i=1}^{j}$ where 
\begin{align}\label{Filteringproblem}
\begin{split}
v_{j+1} &= M v_j + \xi_j, \quad \xi_j \sim \G(0,Q),\quad j\ge0,\\
y_{j+1} &= H v_{j+1} + \zeta_{j+1}, \quad \zeta_{j+1} \sim  \G(0,R), \quad j\ge 0.
\end{split}
\end{align} 
We assume that the problem has a Markov structure.
We study the approximation of one step of the filtering update by means
of particles, building on the study of importance
sampling for the linear inverse problem. 
To this end it is expedient to work on the product space $\X \times \X$,
and consider importance sampling for $(v_j,v_{j+1}) \in \X \times \X$. 
It then transpires that, for two different proposals, which are commonly termed
the {\em standard proposal} and the  {\em optimal proposal},
the cost of one step of particle filtering 
may be understood by the study of a linear inverse problem on $\X$;
we show this for both proposals, and then use the link to an
inverse problem to derive results about the cost of particle filters based on these
 two proposals. 

The linear Gaussian models that we study can —and typically should— be
treated by direct analytic calculations or efficient simulation of Gaussians.
However, it is possible to analytically study the dependence of $\rho$
on key parameters within these model classes, and furthermore they are
flexible enough to incorporate formulations on function spaces, and  
their finite dimensional approximations. Thus, they are an excellent
framework for obtaining insight into the performance of importance sampling
for inverse problems and filtering.

For the abstract importance sampling problem we will
relate $\rho$
to a number of other natural quantities. These include the {\em effective
sample size} $\ESS$, used heuristically in many application domains,
and a variety of {\em distance metrics} between $\prm$ and $\tm$. 
Since the existence of a density between target and proposal plays an important role
in this discussion, we will also investigate what happens as this absolute
continuity property breaks down. We study this first in 
 {\em high dimensional problems}, and second in
{\em singular parameter limits} (by which we mean limits of
important parameters defining the problem).
The ideas behind these two different ways of breaking absolute continuity
 are presented in the general framework, and then substantially 
developed in the inverse problem and filtering settings. The motivation for studying these limits
can be appreciated by considering 
the two examples mentioned at
the start of this introduction: inverse problems from the Earth's
subsurface, and filtering for numerical weather prediction. In
both cases the unknown which we are trying to determine from data
is best thought of as a spatially varying field for subsurface
properties such as permeability, or atmospheric properties, such
as temperature. In practice the field will be discretized
and represented as a high dimensional vector, for computational
purposes, but for these types of application the state dimension
can be of order $10^9$. Furthermore as computer power advances
there is pressure to resolve more physics, and hence for
the state dimension to increase. Thus, it is important to understand 
infinite dimensional problems, and sequences of approximating
finite dimensional problems which approach the infinite dimensional
limit. 
A motivation for studying singular parameter
limits arises, for example, from problems in which the noise
is small and the relevant log-likelihoods scale inversely with the noise
variance.

This paper aims in particular to contribute towards a better understanding of the recurrent claim that importance sampling 
suffers from the curse of dimensionality. Whilst there is some empirical 
truth in this statement, there is a great deal of confusion in the literature 
about what exactly makes importance sampling hard. In fact such a statement 
about the role of dimension is vacuous unless ``dimension'' is defined 
precisely. We will substantially clarify these issues in the contexts of inverse problems and filtering. Throughout this paper we use the following conventions:
\begin{itemize}
\item State space dimension is the dimension of the measurable space where the measures $\tm$ and $\propo$ are defined. We will be mostly interested in the case where the measurable space $\X$ is a separable Hilbert space, in which case the state space dimension is the cardinality of an orthonormal basis of the space. In the context of inverse problems and filtering, the state space dimension is the dimension of the unknown.
\item Data space dimension is the dimension of the space where the data lives.
\item Nominal dimension is the minimum of the state space dimension and the data state dimension.
\item Intrinsic dimension: we will use two notions of intrinsic dimension for linear Gaussian inverse problems, denoted by  $\effd$ and $\tau$. These combine state/data dimension and small noise parameters. They can be interpreted as a measure of how informative the data is relative to the prior.
\end{itemize}

We show that the intrinsic dimensions are
natural when studying the computational cost of importance
sampling for inverse problems. In particular we show how these intrinsic dimensions relate to the 
parameter $\rho$ introduced above, a parameter that we show to be 
central to the computational cost, and to the breakdown of absolute continuity. 
Finally we apply our understanding of linear inverse problems
to particle filters, translating the results from one to the other
via an interesting correspondence between the two problems, for both standard
and optimal proposals, that we describe here.
In studying these quantities, and their inter-relations,
we aim to achieve the purpose set out at the start of this introduction.

\subsection{Organization of the Paper and Main Contributions} 
\label{ssec:org}

Section \ref{sec:IS} describes importance sampling in abstract form.
In sections \ref{sec:BIP} and \ref{sec:FIL} the linear Gaussian
inverse problem and the linear Gaussian filtering problem are
studied. Our aim is to provide a digestible narrative
and hence all proofs ---and all technical matters related to studying measures in infinite dimensional spaces--- are left to the Supplementary Material. 

Further to providing a unified narrative of the existing literature, this paper contains some original contributions that shed new light on the use of importance sampling for inverse problems and filtering. Our main new results are:

\begin{itemize}
\item Theorem \ref{non-asymptotictheorem} bounds the error of
  importance sampling for bounded test functions. The main appeal of
  this theorem is its non-asymptotic nature, together with its clean
  interpretation in terms of: (i) the key quantity $\rho$; (ii)
  effective sample size; (iii) metrics between probability measures;
  (iv) existing asymptotic results.  According to  the perspective on
    importance sampling as an approximation of one probability measure
    by another, the metric used in Theorem \ref{non-asymptotictheorem} is natural and it has already been
    used in important theoretical developments in the field as we
    discuss in section \ref{ssec:lit}. On the other hand, the result
    is less useful for quantifying the error for a specific test
    function of interest, such as linear, bilinear or quadratic
    functions, typically used for computing moments and
    covariances. We discuss extensions and generalizations  in section \ref{sec:IS}.
\item Theorem \ref{maintheorem} studies importance sampling for inverse problems. It is  formulated in the linear 
Gaussian setting to allow a clear and full development of the 
connections that it makes between heretofore disparate notions. In particular we highlight the following. (i) It provides the first clear connection between finite intrinsic dimension and absolute continuity between posterior and prior. (ii) It demonstrates the relevance of the intrinsic dimension 
--rather than the state space or the nominal dimension-- in the performance of 
importance sampling, by linking the intrinsic dimension and the 
parameter $\rho$; thus it shows the combined effect of the prior, 
the forward map, and the noise model in the efficacy of the method. 
(iii) It provides theoretical support for the use of algorithms based 
on importance sampling for posterior inference in function space, 
provided that the intrinsic dimension is finite and the value 
of $\rho$ is moderate.
\item Theorems \ref{definitionrelatedinverseproblem} and \ref{t:2} are proved by studying the inverse problem at the heart of importance
sampling based particle filters. These theorems, together with Theorem \ref{theoremcomparisoncollapse} and Example \ref{EX}, provide an improved understanding of the advantages 
of the optimal proposal over the standard proposal in the context of filtering.
\end{itemize}

\subsection{Literature Review}
\label{ssec:rev}
In this subsection we provide a historical review of the 
literature in importance sampling. Each of the following
sections \ref{sec:IS}, \ref{sec:BIP}, and \ref{sec:FIL} will contain a further literature
review subsection providing detailed references linked
explicitly to the theory as outlined in those sections.

Early developments of importance sampling as a method to reduce the variance in Monte Carlo estimation date back to the early 1950's \cite{kahn1953methods}, \cite{kahn1955use}. In particular the paper 
\cite{kahn1953methods} demonstrates how to optimally choose the proposal density for given test function $\phi$ and target density. Standard text book references for importance sampling include \cite{doucet2001introduction} and \cite{liu2008monte}. Important methodological improvements were introduced in \cite{li2013two}, \cite{owen2000safe}, \cite{liu1998sequential}, and \cite{tan2004likelihood}. A modern view of importance sampling in the general framework \eqref{abscont} is given in \cite{omirosb}. A comprehensive description of Bayesian inverse problems in finite state/data space dimensions can be found in \cite{KS05}, and its formulation in infinite dimensional spaces in \cite{DS15,SL07,SL12a,SL12b,AS10}. Text books
overviewing the subject of filtering and particle filters include \cite{del2004feynman,BC09},
and the article \cite{CD02} provides a readable introduction to the
area. For an up-to-date 
and in-depth  survey of nonlinear filtering see
\cite{crisan2011oxford}. The linear Gaussian inverse problem and the
linear Gaussian filtering problem have been extensively studied  because they arise naturally
in many applications, lead to considerable algorithmic
tractability, and  provide theoretical insight. 
For references concerning linear Gaussian inverse problems
see \cite{JF70,AM84,LPS89,KLS15}. 
The linear Gaussian filter --the Kalman filter-- was introduced
in \cite{kalman1960new}; see \cite{lancaster1995algebraic} for
further analysis.  The inverse problem of determining subsurface properties of the Earth from surface measurements is discussed in \cite{ORL08}, while the filtering problem of assimilating atmospheric measurements for numerical weather prediction is discussed in \cite{K03}.

The key role of $\rho$, the second moment of the Radon-Nikodym derivative between the target and the proposal,  has long been acknowledged \cite{liu1996metropolized},  \cite{pitt1999filtering}. The crucial question of how to choose a proposal measure that leads to small value of $\rho$ has been widely studied, and we refer to \cite{liang2007stochastic} and references therein.    In this vein,  our theory in sections \ref{sec:BIP} and \ref{sec:FIL} shows precise conditions that guarantee $\rho<\infty$ in inverse problems and filtering settings, in terms of well-defined basic concepts such as absolute continuity of 
the target with respect to the proposal. Our study of importance sampling for inverse problems in section \ref{sec:BIP} is limited to the choice of prior as proposal, which is of central theoretical relevance.  In practice, however, 
more sophisticated proposals are often used, potentially leading to reduced parameter $\rho$;  two novel ideas include the implicit
sampling method described in \cite{tu2013implicit}, and the use of proposals based on the ensemble Kalman filter suggested in \cite{le2009large}.
 The value of $\rho$  is known to be asymptotically linked to the effective sample size \cite{kong1992note}, \cite{kong1994sequential}, \cite{liu1996metropolized}. Recent justification for the use of the effective sample size within particle filters is given in \cite{whiteley2016role}. We provide a further non-asymptotic justification of the relevance of $\rho$
through its appearance in error bounds on the error in importance sampling;
a relevant related paper is \cite{CDL98}
which proved non-asymptotic bounds on the error in the importance-sampling
based particle filter algorithm. In this paper we will also bound the 
importance sampling error in terms of different notions of distance 
between the target and the proposal measures. Our theory is based on the $\chi^2$ divergence ---as in \cite{chen2005another}--- while the recent complementary analysis of importance sampling in \cite{CP15} highlights the advantages of the Kullback-Leibler divergence; a useful overview of the subject of distances between probability measures is \cite{gibbs2002choosing}.

We formulate problems in both finite dimensional and infinite dimensional state
spaces. We refer to \cite{OK02} for a modern presentation of probability
appropriate for understanding the material in this article. Some of our 
results are built on the rich area of Gaussian measures in Hilbert space; 
we include all the required background in the Supplementary Material, and references are included there.
However we emphasize that the presentation in the
main body of the text is designed to keep technical material to a 
minimum and to be accessible to readers who are not versed in the
theory of probability in infinite dimensional spaces.
Absolute continuity of the target with respect to the proposal ---or
the existence of a density of the target with respect to the
proposal--- is central to our developments. This concept
also plays a pivotal role in the understanding of 
Markov chain Monte Carlo (MCMC) methods in high 
and infinite dimensional spaces \cite{Tie}. A key
idea in MCMC is that  breakdown of absolute continuity on sequences
of problems of increasing state space dimension is responsible for poor
algorithmic performance with respect to increasing dimension; this should
be avoided if possible, such as for problems with a well-defined
infinite dimensional limit  \cite{cotter2013mcmc}. Similar
ideas will come into play in this paper.

As well as the breakdown of absolute continuity through increase
in dimension, small noise limits  can also lead to sequences of
proposal/target measures which
are increasingly close to mutually singular and for which
absolute continuity breaks down.
Small noise regimes are of theoretical and computational interest for both inverse problems and filtering. For instance, in inverse problems there is a growing interest in the study of the concentration rate of the posterior in the small observational noise limit, see \cite{KVZ12}, \cite{ALS13}, \cite{KVZ13}, \cite{ASZ14}, \cite{KR13}, \cite{SV13}, \cite{KLS15}.
In filtering and multiscale diffusions, the analysis and development of improved proposals in small noise limits is an active research area  \cite{vanden2012data}, \cite{zhang2013importance}, \cite{dupuis2012importance}, \cite{spiliopoulos2013large} \cite{tu2013implicit}.

In order to quantify the computational cost of a problem, a recurrent concept is that of intrinsic dimension. Several notions of intrinsic
dimension have been used in different fields, including dimension of 
learning problems  \cite{bishop}, \cite{zhang2002effective}, \cite{zhang2005learning}, of statistical inverse problems \cite{lu2014discrepancy}, of functions in the context of quasi Monte Carlo (QMC) integration in finance applications \cite{C97}, \cite{MC96}, \cite{kuo2005lifting}, and of data assimilation problems \cite{chorin2013conditions}. The underlying theme is that in many application areas where models are formulated in high dimensional state spaces, there is often a small subspace which captures most of the features of the system. It is the dimension of this subspace that effects the cost of the problem. The recent  subject of \emph{active subspaces} shows promise in finding such low dimensional subspace of interest in certain applications \cite{constantine2015active}.  
 In the context of inverse problems, the paper \cite{BBL08} proposed a 
notion of intrinsic dimension that was shown to have a direct connection with the performance of importance sampling. We introduce a further notion of intrinsic dimension for Bayesian inverse problems which agrees with the notion of effective number of parameters used in machine learning and statistics \cite{bishop}. We also establish that this notion of dimension and the one in \cite{BBL08} are finite, or otherwise, at the same time. Both intrinsic 
dimensions account for three key features of the cost of the
inverse problem: the nominal dimension (i.e. the minimum of the
dimension of the state space and the data), the size of the
observational noise and the regularity of the prior relative to the
observation noise. Varying the parameters related to 
these three features may cause a breakdown of absolute continuity. The deterioration of importance sampling in large nominal dimensional limits has been widely investigated \cite{BBL08}, \cite{BLB08}, \cite{snyder2008obstacles}, \cite{snyderberngtsson}, \cite{snyder2011alone}, \cite{slivinskipractical}. In particular, the key role of the intrinsic dimension, rather than the nominal one, in explaining this deterioration was studied in \cite{BBL08}.
 Here we study the different behaviour of importance sampling as absolute continuity is broken in the three regimes above, and we investigate 
whether, in all these regimes, the deterioration of importance sampling 
may be quantified by the various intrinsic dimensions that we introduce.

We emphasize that, whilst the theory and discussion in section \ref{sec:IS}
is quite general, the applications to Bayesian inverse problems (section
\ref{sec:BIP}) and filtering (section \ref{sec:FIL})
are in the case of linear problems with additive Gaussian noise. This
linear Gaussian setting allows substantial explicit calculations and yields
considerable insight. However empirical evidence related to the behaviour
of filters and Monte Carlo based methods  when applied to nonlinear
problems and non-Gaussian target measures suggests
that similar ideas may apply in those situations;
see \cite{snyder2008obstacles,cotter2013mcmc,bui2012analysis,cui2014likelihood,constantine2015active}. Quantifying this empirical experience
more precisely
is an interesting and challenging direction for future study. We note
in particular that extensions of the intrinsic dimension quantity that
we employ have been provided in the literature for Bayesian
hierarchical non-Gaussian models, more specifically within the so-called
deviance information criterion of \cite{dic}, see Section
\ref{ssec:stat-dim} for more discussion.

\subsection{Notation}\label{ssec:notation}
Given a probability measure $\nu$ on a  measurable space $(\X, {\mathcal F})$  expectations of a measurable function $\test: \X\to \R$ with respect to $\nu$ will be written as both $\nu(\test)$ and $\E_\nu [\test].$ When it is clear which measure is being used we may drop the suffix $\nu$ and write simply $\E[\test].$ Similarly, the variance will be written as $\Var_\nu(\test)$ and again we may drop the suffix when no confusion arises from doing so. All test functions $\phi$ appearing in the paper are assumed to be measurable.

We will be interested in sequences of measures indexed by time or by the state space dimension. These are denoted with a subscript, e.g. $\nu_t$, $\nu_d.$ Anything to do with samples from a measure is denoted with a superscript: $N$ for the number of samples, and $n$ for the 
indices of the samples. The $i$-th coordinate of a vector $u$ is denoted by $u(i).$ Thus, $u_t^n(i)$ denotes $i$-th coordinate of the $n$-th sample from the measure of interest at time $t.$ 
Finally, the law of a random variable $v$ will be denoted by $\law_v.$

\section{Importance Sampling}
\label{sec:IS}

In subsection \ref{ssec:gen} we define importance sampling and in subsection \ref{ssec:mom} we demonstrate
the role of the second moment of the target-proposal density, $\rho$; we
prove two non-asymptotic theorems showing 
${\mathcal O}\bigl((\rho/N)^{\frac12}\bigr)$ 
convergence rate of importance sampling
with respect to the number $N$ of particles. Then in subsection \ref{ssec:ess} we show how $\rho$
relates to the effective sample size $\ESS$ as often defined by practitioners,
whilst in subsection \ref{ssec:met} we link
$\rho$ to various distances  between probability measures.
In subsection \ref{ssec:con} we highlight the role of the breakdown of
absolute continuity in the growth of $\rho$, as the dimension of the
space $\X$ grows. Subsection \ref{ssec:sin} follows with a similar
discussion relating to singular limits of the density between
target and proposal. Subsection \ref{ssec:lit} contains a literature 
review and, in particular, sources for all the material in this section.

\subsection{General Setting}
\label{ssec:gen}

We consider target $\tm$ and proposal $\propo$, both probability measures on 
the  measurable space $(\X, {\mathcal F})$, related by \eqref{abscont}. 
In many statistical applications interest lies in estimating expectations 
under $\tm$, for a collection of test functions, using samples from $\prm$. 
For a test function $\test:\X\to\R$ such that $\tm(|\test|) < \infty$, 
the identity 
\begin{equation*}
\tm(\test)= \frac{\propo(\test \rn)}{\const} ,
\end{equation*}
leads to the {\em autonormalized importance sampling} estimator:
\begin{align}\label{autonormalizedestimator}
\is(\test) &: =\frac{\frac1N \sum_{n=1}^N \test(\un)\rn(\un)}{\frac1N \sum_{m=1}^N \rn(\um)},\quad \un\sim\propo\quad \text{i.i.d.} \\
&=\sum_{n=1}^N\wn  \test(\un), \quad \quad \wn:=\frac{\rn(\un)}{\sum_{m=1}^N
  \rn(\um)};  \nonumber
\end{align} 
here the $\wn$'s are called the \emph{normalized weights}.  As suggested by the
notation, it is useful to view \eqref{autonormalizedestimator} as
integrating a function $\test$ with respect to the random probability
measure $\is := \sum_{n=1}^N \wn \delta_{\un}.$ 
Under this perspective, importance sampling consists of approximating
the target $\tm$ by the measure $\is$, which is typically called the \emph{particle
approximation of $\tm$}. Note that, while $\is$ depends on
the proposal $\propo$,  we suppress this
dependence for economy of notation.  
Our aim is to understand the quality of the approximation $\is$ of $\tm$.
In particular we would like to know how large to choose $N$ in order
to obtain small error. This will quantify the computational cost
of importance sampling.

\subsection{A Non-asymptotic Bound on Particle Approximation Error} 
\label{ssec:mom}

A fundamental quantity in addressing this issue is $\rho$, defined by
\begin{equation}\label{rhodef}
\rho:=\frac{\propo(\rn^2)}{\const^2}\,.
\end{equation}
Thus $\rho$ is the second moment of the Radon-Nikodym derivative 
of the target with respect to the proposal. 
The Cauchy-Schwarz inequality shows that $\pi(g)^2 \le \pi(g^2)$ and 
hence that $\rho \ge 1.$ 
Our first non-asymptotic result shows that, for bounded test functions
$\test$, both the bias and the mean square error (MSE) of the autonormalized importance
sampling estimator are $\bigO {N^{-1}}$ with constant of proportionality
linear in $\rho.$ 
 
\begin{theorem}\label{non-asymptotictheorem}
Assume that $\tm$ is absolutely
continuous with respect to $\propo$, with square-integrable density
$\rn$, that is, $\propo(\rn^2) < \infty$.
The bias and MSE of importance sampling over bounded
test functions may be characterized as follows:
\begin{equation*}
\sup_{|\test|\le 1}\Bigl|\E \bigl[\is(\test) - \tm(\test)\bigr]\Bigr| \le
\frac{12}{N}\rho \,, 
\end{equation*}
and
\begin{equation*}
\sup_{|\test|\le 1}\E \left[\left(\is(\test) - \tm(\test)\right)^2\right]
\le \frac4N\rho. 
\end{equation*}
\end{theorem}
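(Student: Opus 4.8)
The plan is to treat \eqref{autonormalizedestimator} as a ratio estimator and to control its numerator and denominator separately. Since both $\tm$ and $\rho$ are invariant under rescaling of $\rn$, I may assume without loss of generality that $\const=1$, so that $\tm(\test)=\propo(\test\rn)$ and $\rho=\propo(\rn^2)$. Writing $\testc:=\tm(\test)$ and noting that $|\test|\le1$ forces $|\testc|\le1$ and hence $|\test-\testc|\le2$, I introduce the sample denominator $G:=\frac1N\sum_{n=1}^N\rn(\un)$ and the centred sample numerator $H:=\frac1N\sum_{n=1}^N\bigl(\test(\un)-\testc\bigr)\rn(\un)$, so that
\[
\is(\test)-\tm(\test)=\frac{H}{G}.
\]
Three facts drive everything. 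First, since $\propo\bigl((\test-\testc)\rn\bigr)=\tm(\test)-\testc=0$, the numerator has mean zero and, by independence, $\Var(H)=\tfrac1N\propo\bigl((\test-\testc)^2\rn^2\bigr)\le\tfrac4N\rho$. Second, $\E[G]=1$ and $\Var(G)=\tfrac1N\bigl(\propo(\rn^2)-1\bigr)=\tfrac{\rho-1}{N}$. Third, because $\wn\ge0$ and $\sum_n\wn=1$, the deterministic a priori bound
\[
\bigl|\is(\test)-\tm(\test)\bigr|=\Bigl|\sum_{n=1}^N\wn\bigl(\test(\un)-\testc\bigr)\Bigr|\le2
\]
holds pointwise; equivalently $|H|\le2G$.

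For the bias I expand the reciprocal of the denominator about its mean via $\tfrac1G=1-(G-1)+\tfrac{(G-1)^2}{G}$, which gives
\[
\E\Bigl[\frac{H}{G}\Bigr]=\E[H]-\E\bigl[H(G-1)\bigr]+\E\Bigl[\frac{H(G-1)^2}{G}\Bigr].
\]
The first term vanishes. The second equals $-\Cov(H,G)$, and since cross terms between distinct independent samples cancel it reduces to $-\tfrac1N\propo\bigl((\test-\testc)\rn^2\bigr)$, whose absolute value is at most $\tfrac2N\rho$. For the third term the pointwise bound $|H|\le2G$ gives $\bigl|\tfrac{H(G-1)^2}{G}\bigr|\le2(G-1)^2$, so its expectation is at most $2\Var(G)=\tfrac{2(\rho-1)}{N}$. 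Collecting terms yields a bias of the stated form $O(\rho/N)$; the explicit constant follows by keeping track of these contributions, and the bound is comfortably within $12\rho/N$.

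For the mean square error I write $\tfrac{H}{G}=H-H\,\tfrac{G-1}{G}$ and expand the square:
\[
\E\Bigl[\Bigl(\tfrac{H}{G}\Bigr)^2\Bigr]=\E[H^2]-2\,\E\Bigl[H^2\,\tfrac{G-1}{G}\Bigr]+\E\Bigl[H^2\,\tfrac{(G-1)^2}{G^2}\Bigr].
\]
The leading term is $\E[H^2]=\Var(H)\le\tfrac4N\rho$, and this is precisely the source of the stated constant: it coincides with the classical delta-method asymptotic variance $\tfrac1N\propo\bigl((\test-\testc)^2\rn^2\bigr)$ of a self-normalized ratio estimator. The two remaining terms are mixed moments of the mean-zero fluctuations $H$ (variance $O(N^{-1})$) and $G-1$ (variance $(\rho-1)/N$), and the pointwise bound $|H|\le2G$ removes the surviving factors $1/G$ and $1/G^2$, leaving expressions in $|H|\,|G-1|$ and $(G-1)^2$.

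The genuine obstacle is a tension in bounding these two correction terms. The random denominator forces control of inverse powers of $G$, and the only robust tool for that is the pointwise bound $|\is(\test)-\tm(\test)|\le2$; but using it replaces the $O(N^{-1})$ smallness of $H^2$ by a constant, leaving terms of size $\Var(G)=O(N^{-1})$ and hence the correct rate with an inflated constant. In truth these corrections are mixed moments of independent-sum fluctuations and are of genuine order $N^{-2}$. Reconciling the two —so as to keep the leading constant at $4$— is where the real work lies: one must estimate them by a direct third-moment calculation exploiting the independence of the samples, rather than by Cauchy--Schwarz or the pointwise bound alone, while still using the a priori bound to dispatch the small-denominator event $\{G\approx0\}$ that absolute continuity does not exclude for finite $N$. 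The bias, by contrast, is comparatively easy, since in the expansion of $1/G$ the pointwise bound cancels every inverse power of $G$ without discarding essential smallness.
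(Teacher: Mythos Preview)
Your bias argument is correct and in fact cleaner than the paper's: the expansion $1/G=1-(G-1)+(G-1)^2/G$ together with $|H|\le2G$ gives $|\E[H/G]|\le|\Cov(H,G)|+2\Var(G)\le(4\rho-2)/N$, a better constant than the paper's $12$. The paper instead splits on the event $\{2G>1\}$ versus $\{2G\le1\}$, bounding $1/G$ by $2$ on the first and using Chebyshev on the second; your route avoids the event split entirely.

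The MSE argument, however, has a genuine gap that you yourself flag. Your decomposition $H/G=H-H(G-1)/G$ leaves correction terms that, with only the tools available (the pointwise bound $|H|\le2G$ and Cauchy--Schwarz), are of size $O(\rho/N)$, not $O(N^{-2})$: for instance $|\E[H^2(G-1)/G]|\le2\E[|H|\,|G-1|]\le4\rho/N$. So you can get a bound $c\rho/N$ with $c$ around $12$--$16$, but not $4$. Your proposed fix via ``direct third-moment calculations'' would require $\propo(g^3)<\infty$, which is not assumed.

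The paper sidesteps this difficulty with a different decomposition. Starting from $\is(\test)=\mc(\test g)/G$ and $\tm(\test)=\propo(\test g)$ (with $\const=1$), write
\[
\is(\test)-\tm(\test)=(1-G)\,\is(\test)+\bigl(\mc(\test g)-\propo(\test g)\bigr).
\]
The point is that the factor $1/G$ has been absorbed into $\is(\test)$, which is deterministically bounded by $1$. Now $(a+b)^2\le2a^2+2b^2$ gives the MSE $\le 2\Var(G)+2\Var\bigl(\mc(\test g)\bigr)\le 4\rho/N$ in one line, using only second moments of $g$. No inverse powers of $G$, no third moments, no small-denominator event to handle.
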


\begin{remark}
For a bounded test function $|\test|\leq1$, we trivially get $|\is(\test) -
\tm(\test)| \leq 2$;  hence the bounds on bias and MSE provided in Theorem
\ref{non-asymptotictheorem}  are useful only when they are smaller than
2 and 4, respectively. 
\end{remark}
The upper bounds stated in this result suggest that it is good practice
to keep $\rho/N$ small in order to obtain good importance sampling 
approximations. This heuristic dominates the developments in
the remainder of the paper, and in particular our wish to study the
behaviour of $\rho$ in various limits.
 The result trivially
  extends to provide bounds on the mean square error for
  functions bounded by any other known bound different from 1. For
  practical purposes the Theorem is directly applicable to instances where
  importance sampling is used to estimate probabilities, such as in
  rare event simulation. However, its primary role is in providing a
  bound on the particle approximation error, which is naturally
  defined over bounded functions, as is common with weak convergence
  results. It is also important to realise that such a result will not
  hold without more assumptions on the weights for unbounded test
  functions; for example when $g$ has third moment but not fourth
  under $\pi$, then $\tm(g^2)<\infty$, $\prm(g^2)<\infty$ but the
  importance sampling estimator of $\tm(g^2)$ has infinite
  variance. We return to extensions of the Theorem for unbounded test
  functions in section \ref{thm:nonasmom} below.

\subsection{Connections, Interpretations and Extensions}

Theorem \ref{non-asymptotictheorem} clearly demonstrates the role
of $\rho$, the second moment of the target density with respect to the
proposal, in determining the number of samples required to effectively
approximate expectations. 
Here we link $\rho$ to other quantities used in analysis and monitoring of importance sampling algorithms, and we discuss some limitations
of thinking entirely in terms of $\rho.$

\subsubsection{Asymptotic Consistency}

It is interesting to contrast Theorem \ref{non-asymptotictheorem} to a well-known elementary
asymptotic result. First, note that 
\[
\is(\test) - \tm(\test) = {N^{-1} \sum_{n=1}^N {\rn(\un) \over
      \propo(\rn)} \bigr[\test(\un) - \tm(\test)\bigl]  \over N^{-1} \sum_{n=1}^N {\rn(\un) \over
      \propo(\rn)}}\,.
\]
Therefore, under the condition $\propo\bigl(\rn^2\bigr)<\infty$, and provided
additionally that
$\propo\bigl(\rn^2 \test^2\bigr)<\infty$, an application of the Slutsky
lemmas gives that 
\begin{equation}\label{non-asymptoticchopin}
\sqrt{N}\bigl( \is(\test) - \tm(\test) \bigl) \implies \G
\Bigg(0,{\propo\bigl(\rn^2 \testc^2\bigr) \over \propo(\rn)^2}  \Bigg)\,,
\quad \textrm{where}\,\, \testc := \test - \tm(\test)\,.
      \end{equation}
For bounded $|\test|\le 1$, the only condition needed for appealing to
the asymptotic result is $\propo\bigl(\rn^2\bigr) < \infty$. Then \eqref{non-asymptoticchopin} gives that,
for  large $N$ and since $|\testc| \le 2$,
\begin{equation*}
\E \left[\left(\is(\test) - \tm(\test)\right)^2\right]  \lessapprox \frac4N\rho\,,
\end{equation*}
which is in precise agreement with Theorem \ref{non-asymptotictheorem}.

\subsubsection{Effective Sample Size}
\label{ssec:ess}

Many practitioners define the  \emph{effective sample size} by the formula 
\[
\ESS := \left( \sum_{n=1}^N (\wn)^2\right)^{-1} = {\left(\sum_{n=1}^N
    \rn(\un)\right)^2 \over \sum_{n=1}^N \rn(\un)^2} = N {\mc(g)^2
  \over \mc\bigl(g^2\bigr)} \,,
\] where $\mc$ is  the empirical Monte Carlo random measure 
\begin{equation*}
\mc:=\frac{1}{N}\sum_{n=1}^N \delta_{\un}, \quad \un\sim \propo.
\end{equation*}
By the Cauchy-Schwarz inequality it follows that $\ESS \le N$.
Furthermore, since the weights lie in $[0,1]$, we have
$$\sum_{n=1}^N (\wn)^2 \le \sum_{n=1}^N \wn=1$$
so that $\ESS \ge 1.$ These upper and lower bounds may be
attained as follows.
If all the weights are equal, and hence take value $N^{-1}$,
then $\ESS=N$, the optimal situation. On the other hand if exactly $k$
weights take the same value, with the remainder then zero,
$\ESS=k$; in particular the lower bound of $1$ is attained if precisely
one weight takes the value $1$ and all others are zero. 

For large enough $N$, and provided $\propo\bigl(\rn^2\bigr)<\infty$, 
the strong law of large numbers gives
\[
{\ESS} \approx N /\rho\,.
\]
Recalling that $\rho \ge 1$ we see that $\rho^{-1}$ quantifies the
proportion of particles that effectively characterize the sample size,
in the large particle size asymptotic. Furthermore, by Theorem \ref{non-asymptotictheorem}, we
have  that, for large $N$,
\[ \sup_{|\test|\le 1}\E \left[\left(\is(\test) - \tm(\test)\right)^2\right]
\lessapprox  {4 \over \ESS}\,.
\]
This provides a further justification for the use of $\ESS$ as an 
effective sample size, in the large $N$ asymptotic regime.

\subsubsection{Probability Metrics} 
\label{ssec:met}

Intuition tells us that importance sampling will perform
well when the distance between proposal $\prm$ and target $\tm$ 
is not too large.  Furthermore we have shown the role of $\rho$ 
in measuring the rate of convergence of importance sampling. 
It is hence of interest to explicitly link $\rho$ to distance metrics 
between $\prm$ and $\tm.$ In fact we consider asymmetric divergences
as distance measures; these are not strictly metrics, but certainly
represent useful distance measures in many contexts in probability. 
First consider the $\chi^2$ divergence, which satisfies
\begin{equation}
\label{eq:chi}
D_{\chi^2}(\tm\|\propo) := \pi \left( \left[{g \over \pi(g)}-1
  \right]^2 \right) = \rho -1\,.
\end{equation}
The Kullback-Leibler divergence is given by 
\[ \dkl(\tm\|\propo) := \pi \left( {g \over \pi(g)} \log  {g \over
      \pi(g)} \right)\,,
\]
and may be shown to satisfy 
\begin{equation}
\label{eq:kl}
\rho \geq e^{\dkl(\tm\|\propo) }\,. 
\end{equation}
Thus Theorem \ref{non-asymptotictheorem} suggests that the number of
particles required for accurate importance sampling scales
exponentially with the Kullback-Leibler
divergence between proposal and target and linearly
with the $\chi^2$ divergence.

\subsubsection{Beyond Bounded Test Functions}

In contrast to Theorem \ref{non-asymptotictheorem}, the asymptotic result \eqref{non-asymptoticchopin},
establishes the convergence rate $N^{-1/2}$ (asymptotically) under the
weaker moment assumption on the test function $\propo\bigl(\rn^2
\phi^2 \bigr)<\infty$.    It is thus of interest to derive non-asymptotic 
bounds on the MSE and bias for
much larger classes of test functions. This can be achieved
at the expense of more assumptions on the importance weights. 
The next theorem addresses the issue
of relaxing the class of test functions, whilst still deriving
non-asymptotic bounds. By including the result we also highlight
the fact that, whilst $\rho$ plays an important role in quantifying 
the difficulty of importance sampling, other quantities may be
relevant in the analysis of 
importance sampling for unbounded test functions.  Nonetheless, the sufficiency and necessity of scaling the number of samples with 
$\rho$ is understood in certain settings, as will be discussed in
 the bibliography at the end of this
section.

To simplify the statement we first introduce the following notation.
We write $m_t[h]$ for the $t$-th {\em central moment} with respect to $\propo$ of a function $h:\X\to\R$. That is,
\[m_t[h]:=\propo(|h(u)-\propo(h)|^t ).\]
We also define, as above, $\testc:=\test-\tm(\test).$

\begin{theorem}\label{thm:nonasmom}
Suppose that $\test$ and $\rn$ are such that $C_{\rm MSE}$ defined below is finite:
\begin{align*}
C_{\rm MSE}:=\frac3{\const^2}m_2[\test
\rn]+\frac{3}{\const^4}&\propo(| \test \rn|^{2d})^\frac1d{C}_{2e}^\frac1e m_{2e}[\rn]^\frac1e \\
&+\frac{3}{\const^{2(1+\frac1p)}}\propo(|\test|^{2p})^\frac1p{C}_{2q(1+\frac1p)}^\frac1qm_{2q(1+\frac1p)}[\rn]^\frac1q.
\end{align*}
Then the bias and MSE of importance sampling when applied to
approximate $\mu(\phi)$ may be characterized as follows:
\begin{align*}
\Bigl|\E\bigl[\is(\test)-\tm(\test)\bigr]\Bigr|\leq \frac{1}{N}
\Biggl(\frac{2}{\Zpi^2}m_2[\rn]^\frac12m_2[\testc \rn]^\frac12+2C_{\rm MSE}^\frac12\frac{\propo(\rn^2)^\frac12}{\const}\Biggr)
\end{align*}
and
\begin{align*}
\E\Bigl[\bigr(\is(\test)-\tm(\test)\bigr)^2\Bigr]\leq \frac{1}{N}C_{\rm MSE}.
\end{align*}
The constants ${C}_t>0, t \ge 2$, satisfy $C_t^\frac1t\leq t-1$
and the two pairs of parameters $d,e$, and $p,q$ are conjugate 
pairs of indices satisfying $d,e,p,q \in (1,\infty)$ and $d^{-1}+e^{-1}=1$,
$p^{-1}+q^{-1}=1.$
\end{theorem}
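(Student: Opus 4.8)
The plan is to exploit the self-normalised structure of the estimator: I would write the error as a ratio whose numerator has zero mean, expand the reciprocal of the denominator to second order, and estimate the three resulting pieces using H\"older's inequality (the source of the conjugate pairs $d,e$ and $p,q$) together with a Marcinkiewicz--Zygmund/Burkholder moment inequality for sums of i.i.d. mean-zero variables (the source of the constants $C_t$, which for such sums satisfy $C_t^{1/t}\le t-1$). First I would record the identity
\[
\is(\test)-\tm(\test)=\sum_{n=1}^N \wn\,\testc(\un)=\frac{\mc(\testc\rn)}{\mc(\rn)},
\]
which holds because $\sum_n\wn=1$, and note that $\propo(\testc\rn)=\propo(\test\rn)-\tm(\test)\const=0$ while $\E[\mc(\rn)]=\const$. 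Setting $A_N:=\mc(\testc\rn)/\const$ and $D_N:=\bigl(\mc(\rn)-\const\bigr)/\const$, so that $\E[A_N]=\E[D_N]=0$ and $\mc(\rn)/\const=1+D_N$, the error equals $A_N/(1+D_N)$.

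Using $\tfrac{1}{1+D_N}=1-D_N+\tfrac{D_N^2}{1+D_N}$ and $\tfrac{A_N}{1+D_N}=\is(\test)-\tm(\test)$ yields
\[
\is(\test)-\tm(\test)=A_N-D_N A_N+D_N^2\bigl(\is(\test)-\tm(\test)\bigr),
\]
whence $(a+b+c)^2\le 3(a^2+b^2+c^2)$ produces the factor $3$ and the three summands of $C_{\rm MSE}$. The leading term is exact: $\E[A_N^2]=\tfrac{1}{N\const^2}\Var_\propo(\testc\rn)=\tfrac{1}{N\const^2}m_2[\testc\rn]$, which is the first term of the bound (the statement writes $\test$ in place of $\testc$ here). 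For the second term I would bound $\E[D_N^2A_N^2]$ by H\"older with exponents $d,e$; then $\E[|A_N|^{2d}]$ is estimated by Jensen's inequality applied to the average $A_N$, converting the moment of a sum into the per-sample moment $\propo(|\testc\rn|^{2d})$, while $\E[|D_N|^{2e}]$ is estimated by the Marcinkiewicz--Zygmund inequality, which supplies the $N^{-1}$ decay, the constant $C_{2e}$ and the central moment $m_{2e}[\rn]$.

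The delicate point is the third term, $3\,\E\bigl[D_N^4\bigl(\is(\test)-\tm(\test)\bigr)^2\bigr]$: the error still carries the random normaliser $\mc(\rn)$ in its denominator, and the reciprocal moments of $\mc(\rn)$ are \emph{not} controlled by the hypotheses (for instance $\E[\mc(\rn)^{-1}]=\infty$ as soon as $\propo(\rn=0)>0$), so one cannot simply take expectations. The device that resolves this is to use that the error is a convex combination of the centred test values: since $\wn\ge 0$ and $\sum_n\wn=1$,
\[
\bigl|\is(\test)-\tm(\test)\bigr|\le\max_{1\le n\le N}|\testc(\un)|\le\Bigl(\sum_{n=1}^N|\testc(\un)|^{2p}\Bigr)^{1/(2p)},
\]
a bound entirely free of $\mc(\rn)$. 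Inserting this and applying H\"older with exponents $p,q$ to separate the $\test$-factor from the $D_N$-factor, followed once more by the Marcinkiewicz--Zygmund inequality for the latter, produces the per-sample moment $\propo(|\test|^{2p})$ together with the constant $C_{2q(1+1/p)}$ and the high central moment $m_{2q(1+1/p)}[\rn]$; the compound exponent $2q(1+1/p)$ is precisely the combined cost of the deviation factor $D_N$ and the $\ell^{2p}$ bound. I expect this denominator term to be the main obstacle, and the convex-combination bound above to be the crucial step.

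Finally, the bias is read off from the same expansion. Because $\E[A_N]=0$, the leading term drops out and one is left with $\E[\is(\test)-\tm(\test)]=-\E[D_N A_N]+\E\bigl[D_N^2(\is(\test)-\tm(\test))\bigr]$, both of order $\bigO{N^{-1}}$. The first is a covariance of two empirical averages, bounded by Cauchy--Schwarz to give $\const^{-2}m_2[\rn]^{1/2}m_2[\testc\rn]^{1/2}$; the second is bounded by Cauchy--Schwarz against the MSE estimate just obtained, which is where the factor $C_{\rm MSE}^{1/2}$ enters.
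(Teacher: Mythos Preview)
Your overall architecture---three-term decomposition, H\"older to separate factors, Marcinkiewicz--Zygmund for the centred sums, and the convex-combination bound $|\is(\test)-\tm(\test)|\le\max_n|\testc(\un)|$ to dispose of the random denominator---is exactly the paper's, and your identification of the last point as the crucial device is spot on. The bias argument also matches.

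However, your expansion does \emph{not} produce the constant $C_{\rm MSE}$ in the statement, and your claim that the exponent $2q(1+1/p)$ ``is precisely the combined cost'' of your third term is wrong. From $\is(\test)-\tm(\test)=A_N-D_NA_N+D_N^2\bigl(\is(\test)-\tm(\test)\bigr)$, squaring the third piece gives $D_N^4\,|\is(\test)-\tm(\test)|^2$; after your $\ell^{2p}$ bound and H\"older with $(p,q)$, the $D_N$ factor carries exponent $4q$, so Marcinkiewicz--Zygmund delivers $C_{4q}$, $m_{4q}[\rn]$, a prefactor $\const^{-4}$, and an $N$-power $N^{1/p-2}$. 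None of these match the stated $C_{2q(1+1/p)}$, $m_{2q(1+1/p)}[\rn]$, $\const^{-2(1+1/p)}$, $N^{-1}$. (Your bound is actually \emph{sharper in $N$} on this term but needs a strictly higher moment of $\rn$, since $4q>4q-2=2q(1+1/p)$.) Similarly, because you work with $\testc$ throughout, your first two pieces carry $m_2[\testc\rn]$ and $\propo(|\testc\rn|^{2d})$ rather than the $\test$-versions in the statement.

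The paper gets the exact constants by invoking a lemma of Doukhan--Lang which replaces your quadratic remainder $D_N^2$ by a \emph{fractional} one: for any $\theta\in(0,1)$,
\[
\bigl|\is(\test)-\tm(\test)\bigr|\;\le\;\frac{|\mc(\test\rn)-\propo(\test\rn)|}{\const}\;+\;\frac{|\mc(\test\rn)|}{\const^{2}}\,\bigl|\Zn-\const\bigr|\;+\;\max_{1\le n\le N}|\test(\un)|\,\frac{|\Zn-\const|^{1+\theta}}{\const^{1+\theta}}.
\]
The free exponent $\theta$ is then tuned to $\theta=1/p$, so that after H\"older with $(p,q)$ the third term contributes $N^{1/p}\cdot N^{-(1+\theta)}=N^{-1}$ and the moment order on $\rn$ is $2q(1+\theta)=2q(1+1/p)$. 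The paper explicitly remarks that the point of this lemma is \emph{not} the max-bound (which you also have) but the introduction of $\theta\in(0,1)$. Note too that this decomposition is written in terms of $\test$, not $\testc$, which is why the first two summands of $C_{\rm MSE}$ carry $\test$.

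In short: your plan proves a closely related $O(N^{-1})$ bound under slightly stronger moment hypotheses, but not the theorem as stated; the missing ingredient is the interpolated remainder $|D_N|^{1+\theta}$ with the choice $\theta=1/p$.
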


\begin{remark}
In Bayesian inverse problems  $\propo(\rn)<\infty$ often implies that
$\propo\bigl(\rn^s\bigr)<\infty$ for any positive $s$; we will
demonstrate this in a particular case in section \ref{sec:BIP}.  
In such a case, Theorem \ref{thm:nonasmom} combined with H\"older's 
inequality shows that importance sampling converges at rate
$N^{-1}$ for any test function $\test$ satisfying $\propo\bigl(|\test|^{2+\epsilon}\bigr)<\infty$ for some $\epsilon>0.$  
Note, however, that the constant in the ${\cal O}(N^{-1})$ error
bound is not readily interpretable simply in terms of $\rho;$ in particular
the expression necessarily involves moments of $g$ with exponent greater
than two.
\end{remark}

\subsection{Behaviour of the Second Moment $\rho$}

Having demonstrated the importance of $\rho$, the second moment of the
target-proposal density, we now show how it behaves in high dimensional
problems and in problems where there are measure concentration
phenomena due to a small parameter in the likelihood. These two limits
will be of importance to us in subsequent sections of the paper, where
the small parameter measure concentration effect will arise due to high quality
data.

\subsubsection{High State Space Dimension and Absolute Continuity}
\label{ssec:con}

The preceding three subsections have demonstrated how, when
the target is absolutely continuous with respect to the
proposal, importance sampling converges as the square root of
$\rho/N.$ It is thus natural to ask if, and how, this desirable
convergence breaks down for sequences of target and proposal measures which
become increasingly close to singular. 
To this end, suppose that the underlying space  is the
Cartesian product $\R^d$ equipped with the corresponding product
$\sigma$-algebra, the proposal is a product 
measure and the un-normalized weight function also has a product form,
as follows: 
\[
\propo_d(d u) = \prod_{i=1}^d \propo_1(d \ui),\quad \tm_d(du)=
\prod_{i=1}^d \tm_1(d \ui), \quad g_d(u) = 
\exp\left \{  - \sum_{i=1}^d h (\ui) \right\},
\]
for  probability measures $\prm_1,\tm_1$ on $\R$
and $h:\R\to\R^+$ (and we assume it is not
constant to remove the trivial case $\tm_1=\propo_1$).
We index the  proposal, target, density and $\rho$ with respect to $d$
since interest here lies in the limiting behaviour  as $d$
increases. In the setting of \eqref{abscont} we now have
\begin{equation}\label{product}
\tm_d(du)\propto g_d(u)\prm_d(du).
\end{equation}

By construction $g_d$ has all polynomial moments under $\prm_d$ and importance sampling for each $d$ has  the
good properties  developed in the previous sections. 
It is also fairly straightforward to see that $\tm_\infty$ and
$\propo_\infty$ are mutually singular when $h$ is not constant: one 
way to see this is to note that
$$\frac{1}{d} \sum_{i=1}^d u(i)$$
has a different almost sure limit under $\tm_\infty$ and
$\propo_\infty$. Two measures cannot
be absolutely continuous unless they share the same almost sure
properties. Therefore $\tm_{\infty}$ is not absolutely continuous with
respect to $\propo_{\infty}$ and  importance
sampling is undefined in the limit $d=\infty$. As a consequence
we should expect to see a degradation 
in its performance for large state space dimension $d$.

To illustrate this degradation note that 
under the product structure \eqref{product}, we 
have $\rho_d = (\rho_1)^d$.  Furthermore $\rho_1>1$ 
(since $h$ is not constant). Thus $\rho_d$ grows exponentially
with the state space dimension suggesting, when combined with 
Theorem \ref{non-asymptotictheorem}, 
that exponentially many particles are required, 
with respect to dimension, to make importance sampling accurate.

It is important to realise that it is not the product structure {\em per se} 
that leads to the collapse, rather the lack of absolute continuity in the 
limit of infinite state space dimension. Thinking about the role of high dimensions
in this way is very instructive in our understanding of high dimensional
problems, but is very much related to the setting in which  
all the coordinates of the problem play a similar role. This does not
happen in many application areas. Often there is a diminishing response
of the likelihood to perturbations in growing coordinate index. When this is the case,
increasing the state space dimension has only a mild effect in the cost of the problem,
and it is possible to have well-behaved infinite dimensional limits; we will see this
perspective in subsections \ref{ssec:gen2}, \ref{ssec:ess2} and 
\ref{ssec:con2} for inverse problems, and 
subsections \ref{ssec:gen3}, \ref{ssec:ess3} and \ref{ssec:con3} 
for filtering.


\subsubsection{Singular Limits}
\label{ssec:sin}

In the previous subsection we saw an example where for high dimensional state spaces the target and
proposal became increasingly close to being mutually singular, resulting
in $\rho$ which grows exponentially with the state space dimension. 
In this subsection we observe that mutual singularity
can also occur because of small parameters in the unnormalized
density $g$ appearing in \eqref{abscont}, even in problems of fixed
dimension; this will lead to $\rho$ which grows algebraically
with respect to the small parameter. 
To understand this situation let $\X=\R$ 
and consider \eqref{abscont} in the setting where
$$g_\epsilon(u)=\exp\bigl(-\epsilon^{-1}h(u)\bigr)$$ 
where $h:\R \to \R^+$.  We will write $g_\epsilon$ and 
$\rho_\epsilon$ to highlight the dependence of these quantities on
$\epsilon$.
Furthermore assume, for simplicity, that $h$ is twice differentiable and 
has a unique minimum at $u^\star,$ and that $h''(u^\star)>0$. Assume,
in addition, that 
$\prm$ has a Lebesgue density with bounded first derivative. Then the
Laplace method shows that
$$\E \exp\bigl(-2\epsilon^{-1}h(u)\bigr) \approx \exp\bigl(-2\epsilon^{-1}h(u^\star)\bigr)\sqrt\frac{2\pi\epsilon}{2h''(u^\star)}$$
and that
$$\E \exp\bigl(-\epsilon^{-1}h(u)\bigr) \approx \exp\bigl(-\epsilon^{-1}h(u^\star)\bigr)\sqrt\frac{2\pi\epsilon}{h''(u^\star)}.$$
It follows that
$$\rho_\epsilon \approx \sqrt \frac{h''(u^\star)}{4\pi \epsilon}.$$
Thus Theorem \ref{non-asymptotictheorem} indicates that
the number of particles required for importance sampling to be
accurate should grow at least as fast as $\epsilon^{-\frac12}.$

\subsection{Discussion and Connection to Literature}
\label{ssec:lit}

\subsubsection{Metrics Between Random Probability Measures}
In subsection \ref{ssec:gen} we introduced the importance sampling
approximation of a target $\tm$ using a proposal $\prm$, both related
 by \eqref{abscont}. The resulting
particle approximation measure $\is$ is random because it is based on samples 
from $\prm.$ Hence $\is(\phi)$ is a 
\emph{random} estimator of $\tm(\phi).$ 
This estimator is in general biased, and therefore a reasonable 
metric for its quality is the MSE
\[
\E\Bigr[\bigl(\is(\phi) - \tm(\phi)\bigr)^2\Bigr]\,,
\]
where the expectation is with respect to the randomness in the 
measure $\is.$ We bound the MSE over the class of bounded
test functions in Theorem \ref{non-asymptotictheorem}.
In fact we may view this theorem as giving a bound on a distance
between the measure $\tm$ and its approximation $\is.$ 
To this end let $\nu$ and $\mu$ denote mappings from an underlying 
probability space (which for us will be that associated with $\prm$) 
into the space of probability measures on $(\X, \F)$; in the
following, expectation $\E$ is with respect to this underlying probability space. 
In \cite{RH13} a distance $d(\cdot,\cdot)$ beween such random measures is defined by
\begin{align}\label{vhmetric}
d(\nu,\tm)^2=\sup_{|\test|\leq 1}\E \Bigl[ \bigl(\nu(\test)-\tm(\test)\bigr)^2\Bigr].
\end{align} 
The paper \cite{RH13} used this distance to study the convergence of particle
filters. 
Note that if the measures are not random the distance reduces to
total variation. Using this distance, together with the
discussion in subsection \ref{ssec:met} linking $\rho$ to the
$\chi^2$ divergence, 
we see that Theorem \ref{non-asymptotictheorem} states
that
$$d(\is,\tm)^2 \le \frac{4}{N}\bigl(1+D_{\chi^2}(\tm\|\propo)\bigr).$$
In subsection \ref{ssec:met} we also link $\rho$ to the Kullback-Leibler divergence; the bound \eqref{eq:kl} can be 
found  in  Theorem 4.19 of \cite{GaborConc}.
As was already noted, this suggests the need to increase
the number of particles linearly with $D_{\chi^2}(\tm\|\propo)$ or exponentially with $\dkl(\tm\|\propo).$ 

\subsubsection{Complementary Analyses of Importance Sampling Error}
Provided that $\log\bigl(\frac{g(u)}{\propo(g)}\bigr), \, u\sim \tm,$  is concentrated around its expected value, as often happens in large dimensional and singular limits, it has recently been shown \cite{CP15} that using a sample size of approximately $\exp\bigl( \dkl(\tm\|\propo)\bigr)$ is both necessary and sufficient in order to control the $L^1$ error $\mathbb{E}|\is(\phi) - \tm(\phi)|$ of the importance sampling estimator $\is(\test)$. Theorem \ref{non-asymptotictheorem} is similar to
\cite[Theorem 7.4.3]{del2004feynman}. However the later result uses
a metric defined over subclasses of bounded functions. The
resulting constants in their bounds rely on covering numbers, 
which are often  intractable. In contrast, the constant $\rho$ in 
Theorem \ref{non-asymptotictheorem} is more amenable to analysis and
has several meaningful interpretations as we highlight in this paper.
The central limit result in equation \eqref{non-asymptoticchopin} 
shows that for large $N$ the upper bound in Theorem \ref{non-asymptotictheorem} is sharp. Equation \eqref{non-asymptoticchopin}
can be seen as a trivial application of deeper  central limit theorems
for particle filters, see \cite{chopin2004central}. 

This discussion serves to illustrate the fact that 
a universal analysis of importance
sampling in terms of $\rho$ alone is not possible.
Indeed Theorem \ref{thm:nonasmom} shows that the
expression for the error constant in useful error bounds
may be quite complex when considering test functions
which are not bounded.
The constants ${C}_t>0$, $t\geq 2$ in Theorem \ref{thm:nonasmom}
are determined by the Marcinkiewicz-Zygmund  inequality \cite{RL01}. 
The proof follows the approach of \cite{DL09} for evaluating moments of ratios. 
Despite the complicated dependence of error constants on the problem
at hand, there is further evidence for the centrality of the second
moment $\rho$ in the paper \cite{dan:new}. There it is shown 
(see Remark 4) that, when $\rho$ 
is finite, a {\em necessary} condition for accuracy within the class
of functions with bounded second moment under the proposal, is that
the sample size $N$ is of the order of the $\chi^2$ divergence, and hence of the order of
$\rho.$

Further importance sampling results have been proved within the study of convergence properties of various versions of the particle filter as a numerical method for the approximation of the true filtering/smoothing distribution.  These results are often formulated  in finite dimensional state spaces, under bounded likelihood assumptions and for bounded test functions, see \cite{CDL98}, \cite{DMM00}, \cite{CD02}, \cite{MCD13},  \cite{ACMR14}. Generalizations for continuous time filtering can be found in \cite{BC09} and \cite{H13}.

\subsubsection{Effective Sample Size, and the Case of Infinite Second Moment}
The effective sample size $\ESS$, introduced in subsection \ref{ssec:ess},
is a standard statistic used to assess and monitor particle
approximation errors in importance sampling  \cite{kong1992note}, \cite{kong1994sequential}. The effective sample size $\ESS$ does not depend on any specific test function, but is rather a particular function of the normalized weights which quantifies their variability. So does $\rho,$ and as we show
in subsection \ref{ssec:ess} there is an asymptotic connection between both. 
Our discussion of $\ESS$ relies on the condition 
$\propo\bigl(\rn^2\bigr) < \infty$. Intuitively,  
the particle approximation will be rather poor when
this condition is not met. Extreme value theory provides
some clues about the asymptotic particle approximation error.  
First it may be shown that, 
regardless of whether $\propo\bigl(\rn^2\bigr)$ is finite or not,
but simply on the basis that $\propo(\rn) < \infty$, the largest
normalised weight, $\wN$, will
converge to 0 as $N \to \infty$; see for example section $3$ of
\cite{downey} for a review of related results. On the other hand,
\cite{mcleish} shows that, for large $N,$
\[
\E\left[ {N \over \ESS} \right] \approx  \int_0^N \gamma S(\gamma)
d \gamma,
\]
where $S(\gamma)$ is the survival function of the distribution of the
un-normalized weights, $\gamma := \rn(u)$ for $u \sim \propo$. For
instance, if the weights have density proportional to 
$\gamma^{-a-1}$, for $1 < a <2$, then $\propo\bigl(\rn^2\bigr)=\infty$ and, for large enough $N$ and constant $C$, 
\[
\E\left[ {N \over \ESS} \right] \approx  C\, N^{-a+2}\,.
\]
Thus, in contrast to the situation where $\propo\bigl(\rn^2\bigr) < \infty$, in this setting the effective sample size does not 
grow linearly with $N$.

\subsubsection{Large State Dimension, and Singular Limits}

In subsection \ref{ssec:con}
we studied high dimensional problems with a 
product structure that  enables analytical 
calculations. The use of such product structure was pioneered for
MCMC methods in \cite{GGR96}.
It has then been recently employed in the analysis 
of importance sampling in high nominal dimensions,
starting with the seminal paper \cite{BBL08}, and leading on to others 
such as \cite{beskos2014stability}, \cite{beskos2014stable},
 \cite{BLB08}, \cite{snyder2008obstacles}, \cite{snyder2011alone}, \cite{slivinskipractical}, and \cite{snyderberngtsson}.

In \cite[Section 3.2]{BBL08} it is shown that the maximum normalised importance sampling
weight can be approximately written as
\[
\wN \approx {1 \over 1 + \sum_{n>1} \exp\bigl\{ - \sqrt{d} c (z^{(n)}-z^{(1)})
  \bigr\}},
\]
where $\{z^n\}_{n=1}^N$ are samples from $\n(0,1)$ and the $z^{(n)}$ are the ordered statistics. 
In \cite{BLB08} a direct but non-trivial calculation shows that if $N$
does not grow exponentially with $d$, the sum in the
denominator converges to 0 in probability and as a result the maximum
weight to 1. Of course this means that all other weights are
converging to zero, and that the effective sample size is $1$.
It chimes with the heuristic derived in subsection \ref{ssec:con}
where we show that $\rho$ grows exponentially with $d$ and that
choosing $N$ to grow exponentially is thus necessary to keep the
upper bound in Theorem \ref{non-asymptotictheorem} small. 
The phenomenon is an instance of what is sometimes termed 
{\em collapse of importance sampling} in high dimensions. This type of
behaviour can be obtained for other classes of targets
and proposals; see \cite{BBL08}, \cite{snyder2008obstacles}. 
Attempts to alleviate this behaviour include the use of tempering \cite{beskos2014stability} or combining importance sampling with Kalman-based algorithms \cite{FK13}. However, the range of applicability of these ideas is still to be studied.
In subsection \ref{ssec:sin} we use the Laplace method. 
This is a classical methodology for approximating integrals and can be found in many text books; 
see for instance \cite{bender1999advanced}.

\section{Importance Sampling and Inverse Problems}
\label{sec:BIP}

The previous section showed that the distance 
between the proposal and the target is key in understanding the 
computational cost of importance sampling and the central role
played by $\rho$.
In this section we study the computational cost
of importance sampling applied in the context of
Bayesian inverse problems, where the target will be the posterior and the proposal the prior. 
To make the analysis tractable we consider linear Gaussian inverse problems, but our ideas extend beyond this setting.
Subsection \ref{ssec:gen2} describes the setting and necessary background on inverse problems. Then subsection \ref{ssec:ess2} introduces various notions of ``intrinsic dimension'' for linear Gaussian inverse problems;
a key point to appreciate in the sequel is that this dimension
can be finite even when the inverse problem is posed in an infinite
dimensional Hilbert space. The analysis of importance sampling starts in subsection \ref{ssec:con2}. The main result is Theorem \ref{maintheorem}, that shows the equivalence between (i) finite {\em intrinsic} dimension, (ii) absolute continuity of the posterior (target) with respect to the prior (proposal), and (iii) the central quantity $\rho$ being finite. 
The section closes with a thorough study of singular limits in subsection \ref{ssec:sin2} and a literature review in subsection \ref{ssec:lit2}.

\subsection{General Setting}
\label{ssec:gen2}

We study the inverse problem of finding $u$ from $y$ where
\begin{equation}\label{invprob}
y=\fp u+\eta.
\end{equation}
In particular we work in the setting where $u$ is an element of
the (potentially infinite dimensional) 
separable Hilbert space $(\h,\pr{\cdot}{\cdot},\norm{\cdot})$.
Two cases will help guide the reader:

 \begin{example}[Linear Regression Model]\label{ex:lm}
In the context of the linear regression model, $u \in  \R^{d_u}$ is the regression
  parameter vector,  $ y \in \R^{d_y}$ is a vector of training outputs
  and $K \in \R^{d_y \times d_u}$ is the  so-called design matrix whose
  column space is used to construct a linear predictor for  the
  scalar output. In this setting, $d_u, d_y < \infty$, although in modern
  applications both might be
very large, and the case $d_u \gg d_y$ 
is the so-called ``large $p$ (here $d_u$) small $N$ (here $d_y$)''
  problem.  
 \end{example}

 \begin{example}[Deconvolution Problem]\label{ex:lm1}
In the context of signal deconvolution, $u\in L^2(0,1)$ is a square
integrable unknown signal on the unit interval, $K:L^2(0,1)\to
L^2(0,1)$ is a convolution operator $Ku(x)=(\phi\star
u)(x)=\int_0^1\phi(x-z)u(z)dz$, and $y=Ku+\eta$ is the noisy
observation of the convoluted signal where $\eta$ is observational
noise. The convolution kernel $\phi$ might be, for example, a Gaussian
kernel $\phi(x)=e^{-\delta x^2}$.
Note also that discretization of the deconvolution problem will
lead to a family of instances of the preceding linear regression
model,
parametrized by the dimension of the discretization space. 
 \end{example}

 The infinite dimensional setting does require 
some technical background, and this is outlined in the Supplementary Material. Nevertheless, the reader versed only in
finite dimensional Gaussian concepts will readily make sense
of the notions of intrinsic dimension described in 
subsection \ref{ssec:ess2} simply by thinking of 
(potentially infinite dimensional)
matrix representations of covariances.

In equation \eqref{invprob} the data $y$ is comprised of the
image of the unknown $u$ under a linear map $K$, with added
observational noise $\eta.$ 
Here $K$ can be formally thought of as being a bounded linear operator 
in $\h$, which is ill-posed in the sense that if we attempt to invert 
the data using the (generalized) inverse of $K$, we get amplification of 
small errors $\eta$ in the observation to large errors in the 
reconstruction of $u$. In such situations, we need to use regularization techniques in order to stably reconstruct the unknown $u$ from the noisy data $y$. 


We assume Gaussian observation noise  $\eta\sim\noise:=\G(0, \, \nc)$ and adopt a Bayesian approach by putting a prior  on the unknown $u \sim\prior=\G(0,\, \prc)$. Here and throughout  $\nc:\h\to\h$ and $\prc:\h\to\h$ are bounded, self-adjoint, positive-definite linear operators. Note that we do not assume that $\nc$ and $\prc$ are trace class, which introduces some technical difficulties since $\eta$ and $u$ do not necessarily live in $\h$. This is discussed in the Supplementary Material. 

The Bayesian solution is the posterior distribution $u|y\sim\post$. In the finite dimensional setting the prior $\prior$ and the posterior $\post$ are Gaussian
conjugate and $\post=\G(\posm,\posc)$, with mean and covariance given
by  
\begin{align}\label{eq:posm}
\posm=\prc\fp^\ast (\fp \prc\fp^\ast +\nc)^{-1}y,
\\\label{eq:posc}
\posc=\prc-\prc\fp^\ast (\fp\prc\fp^\ast +\nc)^{-1}\fp\prc.
\end{align}
A simple way to derive the expressions above is by working with precision matrices.  Indeed, using Bayes' rule 
and completion of the square gives
\begin{align}\label{eq:fdposc}
\posc^{-1}&=\prc^{-1}+ \fp^\ast  \nc^{-1} \fp,\\ \label{eq:fdposm}
\posc^{-1}\posm&= \fp^\ast  \nc^{-1} y.\end{align}
An application of Schur complement then yields \eqref{eq:posm} and \eqref{eq:posc}.

\begin{remark}\label{remarkfininf}
Under appropriate conditions ---see the references in the literature review subsection \ref{ssec:lit2} and the Supplementary Material--- formulae \eqref{eq:posm}-\eqref{eq:fdposm} can be established in the infinite dimensional setting. From now on and whenever necessary we \emph{assume} that these expressions are available in the general Hilbert space setting that we work in. In particular Proposition \ref{prop:new} makes use of the formula \eqref{eq:fdposc} for the posterior precision.
\end{remark}

Under the prior and noise models 
we may write $u=\Sigma^{\frac12}u_0$ 
and $\eta=\Gamma^{\frac12}\eta_0$ where $u_0$ and $\eta_0$ are
independent centred Gaussians with identity covariance operators
(white noises).
Thus, we can write \eqref{invprob}, for $y_0=\Gamma^{-\frac12}y$, as
\begin{equation}
\label{eq:sip}
y_0=Su_0+\eta_0, \quad S=\nc^{-\frac12}\fp\prc^\frac12.
\end{equation}
Therefore all results may be derived for this inverse problem, and translated
back to the original setting. This intuition demonstrates the centrality of the operator $S$ linking $\fp, \prc$ and $\nc.$ The following assumption will be in place in the remainder of the paper. 

\begin{assumption}\label{Generalinvprobassumption}
Define $\auxm=\nc^{-\frac12}\fp\prc^\frac12$, $A=\auxm^\ast \auxm$
and assume that $A,$ viewed as a linear operator in $\h$, is bounded. Furthermore, assume that the spectrum of $A$ consists of a countable number of eigenvalues, sorted without loss of generality in a non-increasing way: \[\lambda_1\geq \lambda_2\geq\dots\geq\lambda_j\geq\dots\geq0.\]
\end{assumption}

In section \ref{ssec:lit2} we give further intuition on the centrality of the operator $S$ and hence $A$, and discuss the role of the  assumption in the context
of inverse problems. 

\subsection{Intrinsic Dimension}
\label{ssec:ess2}
 Section \ref{sec:IS}
demonstrates the importance of the distance between the target (here the posterior)
and the proposal (here the prior) in the performance of importance sampling. In the Gaussian setting considered
in this section any such distance is characterized in terms of means and covariances. We now show that the ``size'' of the 
operator $A$ defined in Assumption \ref{Generalinvprobassumption} can be used to quantify the distance between the prior and the posterior covariances, $\Sigma$ and $C$. In subsections \ref{ssec:con2} and \ref{ssec:sin2} we will see that, although $A$ does not contain explicit information on the prior and posterior means, its size largely determines the computational 
cost of importance sampling.

\begin{prop}
\label{prop:new}
Under the general setting of subsection \ref{ssec:gen2} the following identities hold
$$\tr\bigl((C^{-1}-\Sigma^{-1})\Sigma\bigr)=\tr(A),
\quad \tr\bigl((\Sigma-C)\Sigma^{-1}\bigr)=\tr\bigl((I+A)^{-1}A\bigr).$$
\end{prop}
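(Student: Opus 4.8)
The plan is to reduce both identities to the single algebraic fact that, by the precision formula \eqref{eq:fdposc} and the definition of $A$ in Assumption \ref{Generalinvprobassumption}, the posterior covariance can be written purely in terms of $A$ after conjugation by $\Sigma^{\frac12}$. Recalling that $S=\nc^{-\frac12}\fp\prc^{\frac12}$ and $A=S^\ast S$, one has $A=\Sigma^{\frac12}K^\ast\Gamma^{-1}K\Sigma^{\frac12}$, and hence $K^\ast\Gamma^{-1}K=\Sigma^{-\frac12}A\Sigma^{-\frac12}$. Substituting this into \eqref{eq:fdposc} gives $C^{-1}=\Sigma^{-1}+\Sigma^{-\frac12}A\Sigma^{-\frac12}=\Sigma^{-\frac12}(I+A)\Sigma^{-\frac12}$, so that $C=\Sigma^{\frac12}(I+A)^{-1}\Sigma^{\frac12}$. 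This single representation, combined with the cyclic property of the trace, will deliver both statements quickly.

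For the first identity I would start from \eqref{eq:fdposc}, which gives $C^{-1}-\Sigma^{-1}=K^\ast\Gamma^{-1}K$, so that $(C^{-1}-\Sigma^{-1})\Sigma=K^\ast\Gamma^{-1}K\Sigma$. Taking the trace and using cyclicity to split the factor $\Sigma=\Sigma^{\frac12}\Sigma^{\frac12}$, I would move one $\Sigma^{\frac12}$ to the front to obtain $\tr(K^\ast\Gamma^{-1}K\Sigma)=\tr(\Sigma^{\frac12}K^\ast\Gamma^{-1}K\Sigma^{\frac12})=\tr(A)$, which is exactly the claim.

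For the second identity I would use the representation $C=\Sigma^{\frac12}(I+A)^{-1}\Sigma^{\frac12}$ together with $\Sigma=\Sigma^{\frac12}I\Sigma^{\frac12}$ to write $\Sigma-C=\Sigma^{\frac12}\bigl(I-(I+A)^{-1}\bigr)\Sigma^{\frac12}=\Sigma^{\frac12}(I+A)^{-1}A\,\Sigma^{\frac12}$, using the elementary identity $I-(I+A)^{-1}=(I+A)^{-1}A$. Multiplying on the right by $\Sigma^{-1}$ yields $(\Sigma-C)\Sigma^{-1}=\Sigma^{\frac12}(I+A)^{-1}A\,\Sigma^{-\frac12}$, and cyclicity of the trace then gives $\tr\bigl((\Sigma-C)\Sigma^{-1}\bigr)=\tr\bigl((I+A)^{-1}A\bigr)$, as required; equivalently one checks $\Sigma^{-\frac12}(\Sigma-C)\Sigma^{-1}\Sigma^{\frac12}=I-(I+A)^{-1}=(I+A)^{-1}A$.

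The genuinely delicate point is not the algebra but its justification in the (possibly) infinite-dimensional Hilbert space setting, where $\Sigma^{-\frac12}$ is typically unbounded and the operators $\Sigma$, $\Gamma$ need not be trace class. I would therefore need to argue that the relevant products are trace class so that the cyclic rearrangements $\tr(XY)=\tr(YX)$ and the conjugation steps are legitimate; here Assumption \ref{Generalinvprobassumption} (boundedness of $A$ and its eigenvalue structure) is what makes $A$, and hence $(I+A)^{-1}A$, a well-defined operator whose trace controls the expressions. This is exactly the type of technical verification the paper defers to the Supplementary Material, and I would invoke the infinite-dimensional versions of \eqref{eq:posm}--\eqref{eq:fdposm} flagged in Remark \ref{remarkfininf} to close the argument rigorously.
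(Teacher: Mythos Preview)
Your proof is correct and follows essentially the same approach as the paper: both arguments rest on the conjugation identity $\Sigma^{\frac12}C^{-1}\Sigma^{\frac12}=I+A$ obtained from \eqref{eq:fdposc}, the elementary rewriting $I-(I+A)^{-1}=(I+A)^{-1}A$, and the cyclic property of the trace. Your closing remark about the care needed in infinite dimensions (unbounded $\Sigma^{-\frac12}$, justifying the cyclic moves via Lemma~\ref{lem:tr}(v)) matches exactly what the paper flags in Remark~\ref{rem:propinfdim}.
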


Thus the traces of $A$ and of $(I+A)^{-1}A$ measure
the differences between the posterior and prior
precision and covariance operators, respectively, relative
to their prior values. For this reason they provide useful measures of
the computational cost of importance sampling, motivating 
the following definitions: 
\begin{equation}\label{eq:efd}
\tau := \tr(A), \quad\quad \effd:=\tr\bigl(( I+A)^{-1}A\bigr)\,.
\end{equation} 
Note that the trace calculates the sum of the eigenvalues and is
well-defined, although may be infinite, in the Hilbert space setting.
We refer to $\effd$ as effective dimension; both $\tau$ and $\effd$
are measures of the intrinsic dimension of the inverse problem at hand.
The next result shows that the intrinsic dimension $\effd$ has the appealing property of being bounded above by the nominal dimension.

\begin{prop}
\label{prop:efd}  
Let $\auxm$ and $A$  be defined as in 
Assumption \ref{Generalinvprobassumption}, and consider the finite dimensional
setting with the notation introduced in Example \ref{ex:lm}. 
\begin{enumerate}
\item The matrices $\Gamma^{1/2} \auxm ( I + A)^{-1} \auxm^*
  \Gamma^{-1/2} \in \R^{d_y\times d_y}$,  $\auxm(  I +A)^{-1}\auxm^\ast\in\R^{d_y\times d_y}$ and $( I+A)^{-1}A\in
    \R^{d_u \times d_u}$ have the same non-zero eigenvalues and hence
the same trace. 
\item  If $\lambda_i >0 $ is a non-zero eigenvalue
of $A$ then these three matrices have corresponding eigenvalue 
$\lambda_i(1+\lambda_i)^{-1}<1$, and 
\[
\effd = \sum_{i} {\lambda_i \over 1+ \lambda_i}  \leq  \nd=\min\{d_u,d_y\}\,.
\]
\end{enumerate}
\end{prop}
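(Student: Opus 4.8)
The plan is to prove both parts with two elementary linear-algebra facts: that similar matrices share the same spectrum and trace, and that for a conformable pair of rectangular matrices $B_1,B_2$ the products $B_1B_2$ and $B_2B_1$ have the same non-zero eigenvalues (with multiplicities), hence the same trace by cyclicity. Everything reduces to recognizing the three matrices as instances of these patterns.

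For Part 1, I would first dispose of $\nc^\frac12\auxm(I+A)^{-1}\auxm^\ast\nc^{-\frac12}$. Since $\nc$ is positive-definite, $\nc^\frac12$ is invertible, so this matrix is \emph{similar} to $\auxm(I+A)^{-1}\auxm^\ast$ (conjugation by $P=\nc^\frac12$) and therefore has identical eigenvalues and trace. It then remains to compare $\auxm(I+A)^{-1}\auxm^\ast\in\R^{d_y\times d_y}$ with $(I+A)^{-1}A\in\R^{d_u\times d_u}$. Here I would set $B_1=\auxm$ and $B_2=(I+A)^{-1}\auxm^\ast$ and observe that $B_1B_2=\auxm(I+A)^{-1}\auxm^\ast$ while $B_2B_1=(I+A)^{-1}\auxm^\ast\auxm=(I+A)^{-1}A$, using $A=\auxm^\ast\auxm$. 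The $B_1B_2$/$B_2B_1$ fact then yields that the two matrices have the same non-zero eigenvalues, while $\tr(B_1B_2)=\tr(B_2B_1)$ follows immediately from the cyclic property of the trace.

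For Part 2, note that $A$ is symmetric and positive semi-definite, hence admits an orthonormal eigenbasis with eigenvalues $\lambda_1\ge\lambda_2\ge\cdots\ge0$. Since $(I+A)^{-1}A$ is the scalar function $f(\lambda)=\lambda/(1+\lambda)$ applied to $A$ via the functional calculus, it is diagonal in the same basis with eigenvalues $\lambda_i/(1+\lambda_i)$; in particular every non-zero $\lambda_i$ yields the eigenvalue $\lambda_i/(1+\lambda_i)<1$, which by Part 1 is also an eigenvalue of the two $d_y\times d_y$ matrices. Summing gives $\effd=\tr\bigl((I+A)^{-1}A\bigr)=\sum_i \lambda_i/(1+\lambda_i)$.

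The bound $\effd\le \min\{d_u,d_y\}$ is the step I would treat most carefully. The number of non-zero terms in this sum equals the number of non-zero eigenvalues of $A=\auxm^\ast\auxm$, namely $\mathrm{rank}(A)=\mathrm{rank}(\auxm)\le\min\{d_u,d_y\}$, since $\auxm\in\R^{d_y\times d_u}$. As each such term is strictly less than $1$, the sum is bounded above by the number of non-zero terms, giving $\effd\le \mathrm{rank}(A)\le\min\{d_u,d_y\}=\nd$, as claimed. The only genuine subtlety is the $B_1B_2$/$B_2B_1$ eigenvalue identity, which I would justify either through Sylvester's determinant identity $\det(I-tB_1B_2)=\det(I-tB_2B_1)$ or by the standard kernel/rank correspondence; the remaining manipulations are routine.
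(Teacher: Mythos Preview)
Your proof is correct and follows essentially the same route as the paper: the paper argues Part~1 by writing down explicit eigenvector correspondences (which is exactly the standard proof of the $B_1B_2/B_2B_1$ fact you cite), and for Part~2 it bounds the trace by observing that each eigenvalue is at most $1$ and that $\effd$ is simultaneously the trace of a $d_u\times d_u$ and a $d_y\times d_y$ matrix. Your rank-based version of the bound in Part~2 is a minor repackaging of the same idea.
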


Here, recall,  $\nd=\min\{d_u,d_y\}$ is 
the nominal dimension of the problem.
Part 2 of the preceding result demonstrates the connection between
$\effd$ and the physical dimensions of the unknown and observation spaces, 
whilst part 1 demonstrates the equivalence between the traces of
a variety of operators, all of which are used in the literature;
this is discussed in greater detail in
 subsection \ref{ssec:lit2}.
In the Hilbert space setting, recall, the intrinsic dimensions $\effd$ and $\tau$ can be infinite. 
It is important to note, however,
that this cannot happen if the rank of $K$ is finite. That is, the intrinsic dimension $\effd$ (and, as we now show, also $\tau$) is finite whenever the unknown $u$ or the data $y$ live in a finite dimensional subspace of $\h.$ 
The following result relates $\effd$ and $\tau.$ It shows in particular that in the infinite dimensional setting they are finite, or otherwise, at the same time.

\begin{lemma}\label{propequivefd}
Under the general setting of subsection \ref{ssec:gen2}, the operator  $A$ is trace class if and only if 
$(I + A)^{-1} A$ is trace class. Moreover,
the following inequalities hold
\[
{1 \over \| I + A\| } \tr(A) \leq \tr \left(( I + 
A )^{-1} A \right)  \leq 
\tr(A).
\]
As a consequence
\begin{equation}\label{boundtaueffdtau}
{1 \over \| I + A\| } \tau \le \effd\le \tau.
\end{equation}
\end{lemma}

We are now ready to study the performance of importance sampling with posterior 
as target and  prior as proposal. In subsection \ref{ssec:con2} we
identify conditions under which we can guarantee that  $\rho$ 
in Theorem \ref{non-asymptotictheorem} is finite and absolute continuity holds. 
In subsection \ref{ssec:sin2} we then study the growth of 
$\rho$ as mutual singularity is approached in different regimes.
The intrinsic dimensions $\tau$ and $\effd$ will be woven into these
developments.

\subsection{Absolute Continuity}
\label{ssec:con2}

In the finite dimensional setting the Gaussian proposal and target 
distributions have densities with respect to the Lebesgue measure. They are
hence mutually absolutely continuous and it is hence  straightforward to find 
the Radon-Nikodym derivative of the target with respect to the proposal by 
taking the ratio of the respective Lebesgue densities once the posterior
is identified via Bayes' theorem; this gives:
\begin{equation}\label{eq:fdrn}\frac{d\tm}{d\prm}(u)=\frac{d\post}{d\prior}(u;y)\propto\exp\left(-\frac{1}{2}
    u^\ast \fp^\ast \nc^{-1}Ku+ u^\ast \fp^\ast
    \nc^{-1}y\right)=:\rnip(u;y).\end{equation}

Direct calculation shows that, for $d_u, d_y<\infty$ 
 the ratio $\rho$ defined 
in \eqref{rhodef} is finite, and indeed that
$g$ admits all polynomial moments, all of which are positive. 
In this subsection we  study $\rho$ in the Hilbert space setting. In
general there is no guarantee that the posterior is absolutely continuous 
with respect to the prior; when it is not, $g$, and hence $\rho$, are not
defined. We thus seek conditions under which such absolute continuity
may be established.

To this end, we define the likelihood measure $y|u\sim\like:=\G(\fp u,\nc),$ and the joint distribution of $(u,y)$ under the model $\nu(du,dy):=\like(dy|u)\prior(du)$, recalling that $\prior=\G(0,\Sigma).$ We also define the marginal distribution of the data under the joint distribution, 
$\mar(dy)=\rp_y(dy)$. We have the following result: 
\begin{theorem}\label{maintheorem}
Let Assumption \ref{Generalinvprobassumption} hold
and let $\tm=\post$ and $\propo=\prior.$
The following are equivalent:
\begin{enumerate}[i)]
\item $\effd<\infty;$
\item $\tau<\infty;$
\item $\nc^{-1/2}\fp u\in \h$, $\propo$-almost surely;
\item for $\mar$-almost all $y$, the posterior $\tm$ is well defined as 
a measure in any space of full prior measure  and is absolutely continuous with respect to the prior with
\begin{equation}\label{rninfdim}
\frac{d\tm}{d\propo}(u)\propto\exp\left(-\frac{1}{2}\norm{\nc^{-1/2}\fp u}^2+\frac12\pr{\nc^{-1/2}y}{\nc^{-1/2}\fp u}\right)=:\rnip(u;y),
\end{equation} 
where $0<\propo\bigl(\rnip(\cdot;y)\bigr)<\infty.$
\end{enumerate}
\end{theorem}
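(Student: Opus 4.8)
The plan is to prove the cycle $(i)\Leftrightarrow(ii)\Leftrightarrow(iii)\Rightarrow(iv)\Rightarrow(ii)$, working throughout in the whitened coordinates of \eqref{eq:sip}. The equivalence $(i)\Leftrightarrow(ii)$ requires no new work: Lemma \ref{propequivefd}, and in particular the bound \eqref{boundtaueffdtau}, already shows that $\tau=\tr(A)<\infty$ if and only if $\effd=\tr\bigl((I+A)^{-1}A\bigr)<\infty$.

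For $(ii)\Leftrightarrow(iii)$ I would write $u=\prc^{1/2}u_0$ with $u_0$ a white noise on $\h$, so that $\nc^{-1/2}\fp u=\auxm u_0$ with $\auxm$ as in Assumption \ref{Generalinvprobassumption}. Diagonalizing $A=\auxm^\ast\auxm$ in an orthonormal eigenbasis $\{e_j\}$ with eigenvalues $\{\lambda_j\}$ and setting $\xi_j=\pr{u_0}{e_j}$ (i.i.d.\ standard normals) gives $\norm{\auxm u_0}^2=\sum_j\lambda_j\xi_j^2$. This series of independent non-negative terms has expectation $\sum_j\lambda_j=\tr(A)=\tau$; by the Kolmogorov three-series theorem it converges almost surely precisely when $\tau<\infty$. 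Hence $\nc^{-1/2}\fp u\in\h$ holds $\propo$-almost surely if and only if $\tau<\infty$, which is exactly $(ii)\Leftrightarrow(iii)$.

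The substantive step is $(iii)\Rightarrow(iv)$, where the Cameron--Martin theorem and an infinite-dimensional Bayes' rule do the work. Condition $(iii)$ says that $\fp u$ lies $\propo$-almost surely in the Cameron--Martin space $\nc^{1/2}\h$ of the noise measure $\noise=\G(0,\nc)$; the Cameron--Martin theorem then guarantees, for such $u$, that the likelihood $\like=\G(\fp u,\nc)$ is absolutely continuous with respect to the fixed reference $\noise$, with density in $y$ equal to $\rnip(u;y)$ as in \eqref{rninfdim}, the linear term being read as the Paley--Wiener integral $\pr{\auxm^\ast\nc^{-1/2}y}{u_0}$. This pairing is well defined for $\mar$-almost every $y$, since under $\mar$ one has $\nc^{-1/2}y=\auxm u_0+\eta_0$ and thus $\auxm^\ast\nc^{-1/2}y=Au_0+\auxm^\ast\eta_0\in\h$ almost surely, because $A$ and $\auxm^\ast$ are Hilbert--Schmidt once $(ii)$ holds. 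Writing the joint law as $\nu(du,dy)=\like(dy\mid u)\prior(du)$ and invoking the Bayes' formula for likelihoods possessing a density with respect to a $u$-independent reference then yields $d\tm/d\propo(u)=\rnip(u;y)/\propo\bigl(\rnip(\cdot;y)\bigr)$. It remains to verify $0<\propo\bigl(\rnip(\cdot;y)\bigr)<\infty$: a direct Gaussian computation in the eigenbasis gives
\[
\propo\bigl(\rnip(\cdot;y)\bigr)=\prod_j\frac{1}{\sqrt{1+\lambda_j}}\exp\Bigl(\frac{b_j^2}{2(1+\lambda_j)}\Bigr),\qquad b_j:=\pr{\auxm^\ast\nc^{-1/2}y}{e_j}.
\]
Here the exponential factor is controlled by $\tfrac12\sum_j b_j^2=\tfrac12\norm{\auxm^\ast\nc^{-1/2}y}^2<\infty$, while $\prod_j(1+\lambda_j)^{-1/2}$ converges to a strictly positive limit exactly because $\sum_j\lambda_j=\tau<\infty$. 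This both pins down normalizability and exhibits the indispensability of finite intrinsic dimension.

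Finally $(iv)\Rightarrow(ii)$ is a consistency check: if $\rnip(\cdot;y)$ is well defined and $\propo$-integrable then $\norm{\nc^{-1/2}\fp u}^2<\infty$ $\propo$-almost surely, which is $(iii)$, hence $(ii)$; equivalently, convergence of the product $\prod_j(1+\lambda_j)^{-1/2}$ to a positive limit forces $\tau<\infty$ directly. I expect the main obstacle to be the rigorous infinite-dimensional Bayes' rule. Because $\nc$ and $\prc$ need not be trace class, $u$ and $\eta$ need not live in $\h$, so the disintegration must be carried out on a larger space of full prior measure, the Paley--Wiener reading of the linear term must be justified for $\mar$-almost every $y$, and the joint measurability of $(u,y)\mapsto\rnip(u;y)$ together with measurability of $y\mapsto\tm(\cdot\mid y)$ must be checked. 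Once this measure-theoretic scaffolding is in place the remaining Gaussian computations are routine.
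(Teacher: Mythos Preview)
Your proposal is correct and follows essentially the same architecture as the paper: $(i)\Leftrightarrow(ii)$ via Lemma~\ref{propequivefd}, $(ii)\Leftrightarrow(iii)$ by diagonalizing $A$ and testing summability of $\sum_j\lambda_j\xi_j^2$, and $(iii)\Rightarrow(iv)$ via the Cameron--Martin formula. The differences are in execution rather than strategy.

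The one place where you genuinely diverge from the paper is $(iv)\Rightarrow(ii)$. You argue that the formula for $\rnip$ in \eqref{rninfdim} being well defined with $\propo(\rnip)>0$ already forces $\norm{\nc^{-1/2}\fp u}^2<\infty$ $\propo$-a.s., which is $(iii)$ and hence $(ii)$. This is valid, and shorter, but it leans on the explicit density formula appearing in the statement of $(iv)$. The paper instead extracts $(ii)$ from the \emph{absolute continuity} assertion alone: since the posterior is Gaussian with mean \eqref{eq:posm}, the Feldman--Hajek theorem forces $\posm$ into the Cameron--Martin space $\prc^{1/2}\h$, i.e.\ $w:=\prc^{-1/2}\posm=S^\ast(SS^\ast+I)^{-1}\nc^{-1/2}y\in\h$ for $\mar$-a.e.\ $y$; computing the covariance of $w$ under $\nu$ and applying Lemma~\ref{propositionappendix} then yields that $S^\ast(SS^\ast+I)^{-1}S$ is trace class, from which $\tr(A)<\infty$ follows by bounded-operator manipulations. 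Your route is quicker; the paper's route shows that already the qualitative part of $(iv)$ (equivalence of two Gaussians) suffices, without invoking the specific exponent.

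For $(iii)\Rightarrow(iv)$ the paper also handles the bounds $0<\propo(\rnip(\cdot;y))<\infty$ more abstractly than you do: positivity by observing that, for fixed $u$, the log-density is a one-dimensional Gaussian in $y$ and hence $\nu_0$-a.s.\ finite; finiteness by Fubini--Tonelli, since $\rnip=d\nu/d\nu_0$ integrates to $1$ over the product space. Your explicit product computation is a fine alternative and has the merit of making the role of $\tau<\infty$ visible in the normalizing constant.
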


\begin{remark}\label{rem:main}
Due to the exponential structure of $g$, we have that assertion $(iv)$ of the last theorem is immediately equivalent to $g$ being $\nu$-almost surely positive and finite and for $\mar$-almost all $y$ the second moment of the
target-proposal density is finite:
   $$\rho=\frac{\propo\left(\rnip(\cdot;y)^2\right)}{\propo\bigl(\rnip(\cdot;y)\bigr)^2}<\infty.$$
\end{remark}

Item (iii) is a requirement on the regularity of the forward image of draws from the prior, relative to the regularity of the noise.  This regularity condition heavily constrains the space of possible reconstructions and is thus related to the intrinsic dimension of the inverse problem, as we establish here. For a discussion on the regularity of draws from Gaussian measures in Hilbert spaces see the Supplementary Material.

We have established something very interesting: there
are meaningful notions of intrinsic dimension for inverse problems formulated 
in infinite state/data state dimensions and, when the intrinsic dimension is
finite, importance sampling may be possible as there is absolute continuity;  moreover, in such a situation $\rho$ is finite.
Thus, under any of the equivalent conditions (i)-(iv), 
Theorem $\ref{non-asymptotictheorem}$ can be used to 
provide bounds on the effective sample size 
$\ESS,$  defined in subsection \ref{ssec:ess}; indeed
the effective sample size is then proportional to $N$.

It is now of interest to understand how $\rho$, and the intrinsic
dimensions $\tau$ and $\effd$, depend on various parameters,
such as small observational noise or the dimension of finite dimensional
approximations of the inverse problem.  Such questions are 
studied in the next subsection.

\subsection{Large Nominal Dimension and Singular Parameter Limits} 
\label{ssec:sin2}

The parameter $\rho$ is a complicated nonlinear function of the eigenvalues of $A$ and the data $y$. 
However, there are some situations in which we can lower bound  $\rho$ in 
terms of the intrinsic dimensions $\tau$, $\effd$ and the size of the 
eigenvalues of $A$. We present
two classes of examples of this type. The first is 
a simple but insightful example in which the eigenvalues cluster into
a finite dimensional set of large eigenvalues and a set of small
remaining eigenvalues. The second involves asymptotic considerations 
in a simultaneously diagonalizable setting.

\subsubsection{Spectral Jump} \label{spectraljumpexample}
Consider the setting where $u$ and $y$ both live in finite dimensional
spaces of dimensions $d_u$ and $d_y$ respectively.
Suppose that $A$ has eigenvalues $\{\lambda_i\}_{i=1}^{d_u}$ with $\lambda_i=C\gg 1$ for $1\le i \le k$, and $\lambda_i\ll 1$ for $k+1\le i \le d_u$;
indeed we assume that
$$\sum_{i=k+1}^{d_u} \lambda_i \ll 1.$$
Then $\tau(A) \approx Ck$, whilst the effective dimension 
satisfies $\effd\approx k$. Using the
identity
\begin{equation*}
2\dkl(\post\| \prior) = \log\Bigl(\det (I+ A)\Bigr) - \tr\Bigl((I+A)^{-1}A\Bigr) +  \posm^* \prc^{-1} \posm \,.
\end{equation*}
and studying the asymptotics for fixed $m$, with $k$ and $C$ large, we
obtain
$$\dkl(\post||\prior) \approx \frac{\effd}{2} \log(C) \, .$$
Therefore, using \eqref{eq:kl},
$$\rho \gtrapprox C^{\frac{\effd}{2}} \,.$$
This suggests that $\rho$ grows exponentially with the \emph{number} of 
large eigenvalues, whereas it has an algebraic dependence on 
the \emph{size} of the eigenvalues.  Theorem $\ref{non-asymptotictheorem}$ 
then suggests that the number of
particles required for accurate importance sampling will grow
exponentially with the number of large eigenvalues, and
algebraically with the size of the eigenvalues. 
A similar distinction 
may be found by comparing the behaviour of
$\rho$ in large state space dimension in subsection \ref{ssec:con} (exponential)
and with respect to small scaling parameter in 
subsection \ref{ssec:sin} (algebraic).

\subsubsection{Spectral Cascade}\label{sssec:spectralcascade}
We  now introduce a three-parameter family of inverse problems, defined through the eigenvalues of $A.$ These three parameters represent the regularity of the prior and the forward map, the size of the observational noise, and the number of positive eigenvalues of $A,$ which corresponds to  the nominal dimension. We are interested in investigating the performance of importance sampling, as quantified by $\rho$, in different regimes for these parameters. We work in the framework of Assumption \ref{Generalinvprobassumption}, and under the following additional assumption:

\begin{assumption}\label{ass1}
Within the framework of Assumption \ref{Generalinvprobassumption}, we assume that $\Gamma=\gamma I$ and that  $A$ has eigenvalues $\Big\{\frac{j^{-\beta}} {\gamma}\Big\}_{j=1}^\infty$ with $\gamma>0,$ and $ \beta \ge 0.$ We consider a truncated sequence of problems   with $A(\beta,\gamma,d),$  with eigenvalues $\Big\{\frac{j^{-\beta}} {\gamma}\Big\}_{j=1}^d$, $d\in  \N\cup \{\infty\}$.
Finally, we assume that the data is generated from a fixed underlying infinite dimensional truth $u^\dagger,$
\begin{equation}\label{eq:trdata}
y=\fp u^\dagger+\eta, \quad \fp u^\dagger\in\h,
\end{equation}
and for the truncated problems the data is given by projecting $y$ onto the first $d$ eigenfunctions of $A.$
\end{assumption}

\begin{remark}
Since $\fp u^\dagger\in\h$, using the Gaussian theory provided in the Supplementary Material one can check that the distribution of the data in \eqref{eq:trdata} is equivalent
to the marginal probability measure of the data under the model, $\mar(dy)$.  Hence, the conclusions of Theorem \ref{maintheorem} and Remark \ref{rem:main}  which are formulated for $\mar$-almost all $y$, also hold for almost all $y$ of the form of \eqref{eq:trdata}.
\end{remark}

Note that $d$ in the previous assumption is the data space dimension, which agrees here with the nominal dimension.
The setting of the previous assumption arises, for example, when $d$ is finite, from discretizing the data of 
an inverse problem formulated in an infinite dimensional state space. Provided that the forward map $K$ and the prior covariance $\Sigma$ commute, our analysis extends to the case where both the unknown and the data are discretized in the common eigenbasis.  In all these cases,  interest lies in understanding
how the cost of importance sampling depends on the level
of the discretizations. The parameter $\gamma$ may arise as an
observational noise scaling, and it is hence of interest to study 
the cost of importance sampling when $\gamma$ is small. And finally
the parameter $\beta$ reflects regularity of the problem, as determined
by the prior and noise covariances, and the forward map; critical phase
transitions occur in computational cost as this parameter is
varied, as we will show.

\begin{example}[Example \ref{ex:lm1} revisited]\label{ex:lm11}
We revisit the deconvolution problem in the unit interval. In particular, we consider the problem of deconvolution of a periodic signal which is blurred by a periodic kernel and polluted by Gaussian white noise $N(0,\gamma I)$. This problem is diagonalized by the Discrete Fourier Transform, giving rise to a countable number of decoupled equations in frequency space of the form \[y_j=K_ju_j+\eta_j, \quad j \in\N.\] Here $u_j$ are the Fourier coefficients of the unknown signal $u$, $K_j$ the Fourier coefficients of the blurring kernel $\phi$ which is assumed to be known, and $\eta_j\stackrel{iid}{\sim} N(0,\gamma)$ the Fourier coefficients of the observational noise $\eta$.  Consider the case in which $K_j\asymp j^{-t}, \;t\geq0$; the case $t=0$ corresponds to the direct observation case while the bigger $t$ is the more severe the blurring. We put a Gaussian prior on $u$, $u\sim N(0,(-\Delta)^{-s}), \,s\geq0$, where $\Delta$ is the Laplacian with periodic boundary conditions on $(0,1)$, so that by the Karhunen-Loeve expansion $u_j=\sqrt{\kappa_j}\zeta_j$, with $\kappa_j\asymp j^{-2s}$ and $\zeta_j\stackrel{iid}{\sim} N(0,1)$. The larger $s$ is the higher the regularity of draws from the prior. In this case the operator $A$ has eigenvalues $\left\{c\frac{j^{-2t-2s}}{\gamma}\right\}_{j=1}^\infty$, where $c$ is independent of $j, \gamma$. For this example the value of $\beta$ in Assumption \ref{ass1} is $\beta=2t+2s$ and large values of $\beta$ correspond to more severe blurring and/or higher regularity of the prior. A natural way of discretizing this problem is to truncate the infinite sequence of 1-dimensional problems to $d$ terms, resulting in truncation of the sequence of eigenvalues of $A$. The limit $\gamma\to0$ corresponds to vanishing noise in the observation of the blurred signal.\end{example}

The intrinsic dimensions $\tau=\tau(\beta,\gamma,d)$ and $\effd=\effd(\beta,\gamma,d)$ read
\begin{equation}\label{tauefd}
\tau= \frac{1}{\gamma} \sum_{j=1}^d j^{-\beta}, \quad \quad  \effd= \sum_{j=1}^d \frac{j^{-\beta}}{\gamma + j^{-\beta}}.
\end{equation}

Table \ref{table:efdtaurhoscalings} shows the scalings of the effective 
dimensions $\effd$ and $\tau$ with the model parameters. It also shows
how $\rho$ behaves under
these scalings and hence gives, by Theorem \ref{non-asymptotictheorem},
an indication of the number of particles required for accurate
importance sampling in a given regime. In all the scaling limits
where $\rho$ grows to infinity the posterior and prior are approaching
mutual singularity; we can then apply Theorem \ref{non-asymptotictheorem}
to get an indication of how importance sampling deteriorates 
in these limits.

Note that by Theorem \ref{maintheorem} we have
$\tau(\beta,\gamma,d)<\infty$ if and only if $\effd(\beta,\gamma,d)<\infty.$ It
is clear from \eqref{tauefd} that $\tau=\infty$ if and only if
$\{d=\infty, \beta\le 1\}.$ By Theorem \ref{maintheorem} again, this
implies, in particular, that absolute continuity is lost in the limit as $d\to\infty$ when $\beta \le 1$, and as $\beta\searrow 1$ when $d=\infty.$ 
Absolute continuity is also lost in the limit $\gamma \to 0$, in which the posterior is fully concentrated around the data (at least in those directions in which the data live).  In this limit we always have $\tau=\infty$,  
whereas $\effd<\infty$ in the case where $d<\infty$ and $\effd=\infty$ when
$d=\infty$. Note that in the limit $\gamma=0$ Assumption
\ref{Generalinvprobassumption} does not hold, which explains why
$\tau$ and $\effd$ are not finite simultaneously. Indeed, as was noted
before, $\effd$ is always bounded by the nominal dimension $d$
irrespective of the size $\gamma$ of the noise.  

 Some important remarks on Table \ref{table:efdtaurhoscalings} are:
\begin{itemize}
\item $\rho$ grows \emph{algebraically} in the small noise limit ($\gamma\to 0$) if the nominal dimension $d$ is finite. 
\item $\rho$ grows \emph{exponentially} in $\tau$ or $\effd$ as the nominal dimension grows $(d\to \infty)$ if $\beta<1$ , and as the prior becomes rougher ($\beta \searrow 1)$ if $d=\infty.$
\item $\rho$ grows \emph{factorially} in the small noise limit ($\gamma\to 0$) if $d=\infty$, and in the joint limit $\gamma=d^{-\alpha}, d\to \infty$. The exponent in the rates relates naturally to $\effd.$
\end{itemize}

The scalings of $\tau$ and $\effd$ can be readily deduced by comparing the sums defining $\tau$ and $\effd$ with integrals. The analysis of the sensitivity of  $\rho$ to the model parameters relies on an explicit expression for this quantity. Details are given in the Supplementary Material.

\begin{table}

\begin{tabular}{l | c |c|c|c}\label{tablecomparison}
Regime & Parameters & $\effd$ & $\tau$ & $\rho$ \\\hline
Small noise & $\gamma\to 0, \,d<\infty$ &  $d$  & $\gamma^{-1}$ & $\gamma^{-d/2}$\\
                   & $\gamma\to 0, \,d=\infty, \,\beta>1$ &   $\gamma^{-1/\beta}$   & $\gamma^{-1}$ & $\gamma^{-\frac{\epsilon\beta}{2}(\gamma^{-1/\beta-\epsilon}) }$ \\ \hline
Large $d$  & $d\to \infty$, $\beta<1$  & $d^{1-\beta}$ & $d^{1-\beta}$ & $\exp(d^{1-\beta})$ \\\hline
Small noise  &$\gamma=d^{-\alpha},\, d\to \infty, \beta>1, \alpha>\beta$ & $d$ & $d^\alpha$ & $d^{(\alpha-\beta) d}$\\
and large $d$  &$\gamma=d^{-\alpha},\, d\to \infty, \beta>1, \alpha<\beta$ & $d^{\alpha/\beta}$ & $d^\alpha$ & $d^{\epsilon d^{\alpha/\beta-\epsilon}}$\\
                     & $\gamma=d^{-\alpha}, \, d\to \infty, \beta<1, \alpha>\beta$ & $d$ & $d^{1+\alpha-\beta}$ & $d^{(\alpha-\beta) d}$  \\
                      & $\gamma=d^{-\alpha}, \, d\to \infty, \beta<1, \alpha<\beta$ & $d^{1+\alpha-\beta}$ & $d^{1+\alpha-\beta}$  & $d^{\epsilon d^{\alpha/\beta-\epsilon}}$  \\\hline
Regularity & $d=\infty, \beta \searrow 1$ & $\frac{1}{\beta -1}$ & $\frac{1}{\beta -1}$ & $\exp(\frac{1}{\beta -1})$\\
\end{tabular}
\caption{\label{table:efdtaurhoscalings} The third and fourth columns show the scaling of the intrinsic dimensions with model parameters for the spectra cascade example of subsection \ref{sssec:spectralcascade}. The fourth one gives a lower bound on the growth of $\rho,$ suggesting that the number of particles should be increased \emph{at least} as indicated by this column in terms of the model parameters. This lower bound holds for all realizations of the data $y$ when $\gamma\to 0$, and in probability for those regimes where $\gamma$ is fixed. $\epsilon$ can be chosen arbitrarily small.}
\end{table}

\subsection{Discussion and Connection to Literature}
\label{ssec:lit2}
\subsubsection{Examples and Hilbert Space Formulation of Inverse Problems}
Further examples of linear inverse problems in both finite and infinite dimensions include the Radon transform inversion used for X-ray imaging, the determination of the initial temperature from later measurements and the inversion of the Laplace transform. Many case studies and more elaborate nonlinear inverse problems can be found for example in \cite{KS05}, \cite{AS10} which adopt a Bayesian approach to their solution, and \cite{EHN96}, \cite{MS12} which adopt a classical approach. The periodic deconvolution problem considered in Example \ref{ex:lm11} is discussed for instance  in \cite[Section 5]{CT02}, where an example of a convolution operator with algebraically decaying spectrum is also provided. The Bayesian approach we undertake, in the example of linear regression (Example \ref{ex:lm}) becomes the Gaussian conjugate Bayesian analysis of linear regression models, as in \cite{LS72}. This paper also derives
formulae \eqref{eq:fdposc}, \eqref{eq:fdposm} for the
mean and covariance expressed via precisions in the finite 
dimensional setting.
For the infinite dimensional counterpart see \cite[Section 5]{ALS13}.
Formulae \eqref{eq:posm}, \eqref{eq:posc} in the infinite dimensional
setting are derived in \cite{AM84}, \cite{LPS89}; in the specific
case of inverting for the initial condition in the heat equation 
they were derived in \cite{JF70}. 
The Supplementary Material
has a discussion of Gaussian measures in Hilbert
spaces and contains further background references.


\subsubsection{The Operator $A$: Centrality and Assumptions}
The assumption that the spectrum of $A$ introduced in Assumption \ref{Generalinvprobassumption} consists of a countable number of eigenvalues, means that the operator $A$ can be thought of as an infinitely large diagonal matrix. It holds if $A$ is compact \cite[Theorem 3, Chapter 28]{PDL02}, but is in fact more general since it covers, for example, the non-compact case $A=I$. 

 We note here that the inverse problem
\begin{equation}
\label{eq:sip2}
y_0=w_0+\eta_0
\end{equation}
with $\eta_0$ a white noise and $w_0 \sim \G(0,SS^\ast)$ is equivalent
to \eqref{eq:sip}, but formulated in terms of unknown $w_0=Au_0$, rather
than unknown $u_0.$ In this picture the key operator is $SS^\ast$
rather than $A=S^\ast S$.
Note that by Lemma \ref{lem:tr} in the Supplementary Material
$\tr(S^\ast S)=\tr(SS^\ast)$. Furthermore, if $S$ is compact the operators $SS^\ast$ and $S^\ast S$ have the same nonzero eigenvalues \cite[Section 2.2]{EHN96}, thus $\tr((I+SS^\ast)^{-1}SS^\ast)=\tr((I+S^\ast S)^{-1}S^\ast S)$. The last equality holds even if $S$ is non-compact, since then Lemma \ref{lem:tr} together with Lemma \ref{propequivefd} imply that both sides are infinite. Combining, we see that the intrinsic dimension ($\tau$ or $\effd$) is the same regardless of whether we view $w_0$ or $u_0$ as the
unknown.
In particular, the assumption that $A$ is bounded is equivalent to 
assuming that the operators $S, S^\ast$ or  $SS^\ast$ are bounded \cite[Theorem 14, Chapter 19]{PDL02}. 
For the equivalent formulation \eqref{eq:sip2}, the posterior mean equation \eqref{eq:posm} is \[\posm=SS^\ast(SS^\ast+I)^{-1}y.\] If $SS^\ast$ is compact, that is, if its nonzero eigenvalues $\lambda_i$ go to $0$, then $m$ is a regularized approximation of $w_0$, since the components of the data corresponding to small eigenvalues $\lambda_i$ are shrunk towards zero. On the other hand, if $SS^\ast$ is unbounded, that is, if its nonzero eigenvalues $\lambda_i$ go to infinity, then there is no regularization and high frequency components in the data remain almost unaffected by $SS^\ast$ in $m$. Therefore, the case $SS^\ast$ is bounded is the borderline case determining whether the prior has a regularizing effect in the inversion of the data.  

 The operator $A$ has played an important role in the study of
linear inverse problems. First, it has been used for obtaining posterior contraction rates in the small noise limit, see the operator $B^\ast B$ in \cite{LLM15}, \cite{AM14}. Its use was motivated by techniques for analyzing classical regularization methods, in particular regularization in Hilbert scales see \cite[Chapter 8]{EHN96}. Furthermore, its eigenvalues and eigendirections can be used to determine (optimal) low-rank approximations of the posterior covariance \cite{BGMS13}, \cite[Theorem 2.3]{marzouki2014optimal}. The analogue of $A$ in nonlinear Bayesian inverse problems is the so-called prior-preconditioned data-misfit Hessian, which has been used in \cite{MWBG12} to design Metropolis Hastings proposals. In more realistic settings the spectrum of $A$ may not be analytically available and needs to be numerically approximated; for example see \cite[subsection 6.7]{BGMS13} in the context of linearized global seismic inversion.

\subsubsection{Notions of Dimension and Interpretations}
\label{ssec:stat-dim}
In subsection \ref{ssec:ess2} we study notions of dimension for Bayesian inverse problems.
In the Bayesian setting, the prior imparts information and correlations on the components of
the unknown $u$, reducing the 
number of parameters that are estimated. 
In the context of  Bayesian or penalized likelihood 
frameworks, this has led to the notion of \emph{effective 
number of parameters}, defined as
\[
\tr\Bigl(\Gamma^{1/2} \auxm ( I + \auxm^* \auxm)^{-1} \auxm^* \Gamma^{-1/2}\Bigr).
\]
This quantity agrees with $\effd$ by Proposition \ref{prop:efd}  and has been used extensively in statistics and machine
learning ---see for example the deviance information criterion in \cite{dic} (which generalises this notion to more general Bayesian hierarchical models), and section 3.5.3
of \cite{bishop} and references therein.
One motivation for this  definition is based on a
Bayesian version of the  ``hat matrix'', see for example
\cite{dic}. In this article we provide a different motivation
that is more relevant to our aims: rather than as an effective
number of parameters, we interpret $\effd$ as the effective
  dimension of the Bayesian linear model.
Similar forms of effective dimension have been used for learning problems in \cite{zhang2002effective}, \cite{zhang2005learning}, \cite{caponnetto2007optimal} and for statistical inverse problems in \cite{lu2014discrepancy}. In all of these contexts  the size of the operator $A$  quantifies how informative the data is; see the discussion below. The paper \cite{BLB08} introduced the notion of $\tau= \tr(A)$ as an
effective dimension for importance sampling within linear inverse problems
and filtering. In that paper several transformations of the inverse problem 
are performed before doing the analysis. We undo these
transformations here. The role of $\tau$ in the performance of the Ensemble Kalman filter had been previously studied in \cite{furrer2007estimation}.

Proposition \ref{prop:efd} shows that $\effd$ is at most as large as the 
nominal dimension. The difference between both is a measure of the effect the prior has on the inference
relative to the maximum likelihood solution.
Proposition \ref{prop:new}  shows that
$\effd$ quantifies how far the posterior is from the prior, measured in terms of how distant
their covariances are in units of the prior; and similarly for $\tau,$ but expressed in terms
of precisions and again in units of the prior. By the cyclic property of the trace, Lemma \ref{lem:tr}(ii) in the Supplementary Material, and by Proposition \ref{prop:new}, $\tau$ and $\effd$ may also be
characterized as follows: 
\begin{align*}
\tau &= \tr\bigl((C^{-1}-\Sigma^{-1})\Sigma\bigr)=\tr  \bigl( (\Sigma - C)C^{-1} \bigr),  \\
\effd &= \tr\bigl((\Sigma-C)\Sigma^{-1}\bigr)=\tr\bigl( (C^{-1} - \Sigma^{-1}) C \bigr).
 \end{align*}
Thus we may also view $\effd$ as measuring the change in
the precision, measured in units given by the posterior
precision; whilst $\tau$ measures the change in the covariance,
measured in units given by the posterior covariance.

\section{Importance Sampling and Filtering}
\label{sec:FIL}

This section studies importance sampling in the context of filtering. 
In particular we study two different choices of proposals that play 
an important role in the subject of filtering. The analysis relies 
on the relationship between Bayesian inversion and filtering mentioned
in the introductory section, and detailed here.
In subsection \ref{ssec:gen3} we set-up the problem and derive
a link between importance sampling based particle filters and
the inverse problem. In subsections \ref{ssec:ess3} and \ref{ssec:con3}
 we use this connection to study, respectively, the intrinsic
dimension of filtering and the connection to absolute
continuity between the two proposals considered and the target. 
Subsection \ref{ssec:sin3} contains some explicit computations which
enable comparison of the cost of the two proposals
in various singular limits relating to high dimension or
small observational noise.
We conclude with the literature review subsection \ref{ssec:lit3}.

The component of particle filtering which we analyse in this section 
is only that related to sequential importance sampling; we do not
discuss the interaction between the simulated particles which arises
via resampling schemes. Such interaction would not typically be 
very relevant in the two time-unit dynamical systems we study here, 
but would be necessary to get reasonable numerical schemes when assimilating 
data over many time units. We comment further on this, and the choice 
of the assimilation problem we study, in the literature review.

\subsection{General Setting}
\label{ssec:gen3}

We simplify the notation by setting $j=0$ in
\eqref{Filteringproblem} to obtain 
\begin{align}\label{eq:shift}
\begin{split}
v_{1} &= \signalmap v_0 + \xi,  \quad v_0\sim \G(0,P),\quad \xi\sim \G(0,\signalc),\\
y_{1} &= \obsop v_{1} + \zeta, \quad \zeta \sim  \G(0,\obsc).
\end{split}
\end{align} 
Note that we have also imposed a Gaussian assumption on $v_0.$
Because of the Markov assumption on the dynamics for $\{v_j\}$, 
we have that $v_0$ and $\xi$ are independent.
As in section \ref{sec:BIP} we set-up the problem in a
separable Hilbert space $\h$, although the reader versed only
in finite dimensional Gaussian measures should have no trouble
following the developments, simply by thinking of the
covariance operators as (possibly infinite) matrices. We assume
throughout that the covariance operators $\zeroc, \signalc, \obsc:\h \to \h$ are bounded, self-adjoint, positive linear operators, but not necessarily trace-class (see the discussion on this trace-class issue 
in section \ref{sec:BIP}).
We also assume that the operators $\signalmap, \obsop: \h\to\h$ that describe, respectively, the unconditioned signal dynamics and the observation operator, can be extended to larger spaces if necessary; see 
the Supplementary Material for
further details on these technical issues.

Our goal in this section is to study the cost of
importance sampling within the context of both the standard
and optimal proposals for particle filtering. For both these
proposals we show that there is an inverse problem embedded
within the particle filtering method, and compute the proposal covariance,
the observation operator and the observational noise covariance.
We may then use the material from the previous section, concerning
inverse problems, to make direct conclusions about the
cost of importance sampling for particle filters.

The aim of one step of filtering may be expressed as sampling
from the target $\law_{v_1,v_0|y_1}.$ Particle filters do this
by importance sampling, with this measure on the product space
$\X \times \X$ as the target. We wish to 
compare two ways of doing this, one by using the proposal distribution 
$\law_{v_1|v_0}\law_{v_0}$ and the second by using as proposal
distribution $\law_{v_1|v_0,y_1}\law_{v_0}.$ The first is
known as the {\em standard proposal}, and the second as the 
{\em optimal proposal.} We now connect each of these proposals 
to a different inverse problem. 

\subsubsection{Standard Proposal}

For the standard proposal we note that, using Bayes' theorem, conditioning,
and that the observation $y_1$ does not depend on $v_0$ explicitly,
\begin{align*}
\law_{v_1,v_0|y_1}&\propto\law_{y_1|v_1,v_0}\law_{v_1,v_0}\\
&=\law_{y_1|v_1,v_0}\law_{v_1|v_0}\law_{v_0}\\
&=\law_{y_1|v_1}\law_{v_1|v_0}\law_{v_0}.
\end{align*}
Thus the density of the target $\law_{v_1,v_0|y_1}$ 
with respect to the proposal $\law_{v_1|v_0}\law_{v_0}$ is
proportional to $\law_{y_1|v_1}$. Although this density concerns
a proposal on the joint space of $(v_0,v_1)$, since it involves
only $v_1$ we may consider the related inverse problem of finding
$v_1$, given $y_1$, and ignore $v_0.$

In this picture filtering via the standard proposal 
proceeds as follows:
$$\law_{v_0} \mapsto \law_{v_1} \mapsto \law_{v_1|y_1}.$$
Here the first step involves propagation of proability measures
under the dynamics. This provides the proposal $\prm=\law_{v_1}$
used for importance sampling to determine the target $\tm=\law_{v_1|y_1}.$
The situation is illustrated in the upper branch of Figure \ref{twopaths}.
Since 
$$\E (v_1 v_1^*)=\E (Mv_0+\xi)(Mv_0+\xi)^*,$$
and $v_0$ and $\xi$ are independent under the Markov assumption,
the proposal distribution is readily seen to be a centred Gaussian
with covariance $\Sigma=MPM^*+Q$. The observation operator is $K=H$
and the noise covariance $\Gamma=R.$ 
We have established a direct connection between the particle
filter, with standard proposal, and the inverse problem of
the previous section. We will use this connection to study the
cost of the particle filter, with standard proposal, in what
follows.

\subsubsection{Optimal Proposal}

For the optimal proposal we note that, by conditioning on $v_0$,
\begin{align*}
\law_{v_1,v_0|y_1}&=\law_{v_1|v_0,y_1}\law_{v_0|y_1}\\
&=\law_{v_1|v_0,y_1}\law_{v_0}\frac{\law_{v_0|y_1}}{\law_{v_0}}.
\end{align*}
Thus the density of the target $\law_{v_1,v_0|y_1}$
with respect to the proposal $\law_{v_1|v_0,y_1}\law_{v_0}$ is
the same as the density of $\law_{v_0|y_1}$ with respect to
$\law_{v_0}$. As a consequence, although this density concerns
a proposal on the joint space of $(v_0,v_1)$, it is
equivalent to an inverse problem involving only $v_0.$  
We may thus consider the related inverse problem of finding
$v_0$ given $y_1$, and ignore $v_1.$

In this picture filtering via the optimal proposal proceeds as follows:
$$\law_{v_0} \mapsto \law_{v_0|y_1} \mapsto \law_{v_1|y_1}.$$
Here the first step involves importance sampling with
proposal $\prm=\law_{v_0}$ and target $\tm=\law_{v_0|y_1}.$ 
This target measure is then propagated under the conditioned dynamics to
find $\law_{v_1|y_1};$ the underlying assumption of the optimal
proposal is that $\law_{v_1|v_0,y_1}$ can be sampled so that this
conditioned dynamics can be implemented particle by particle.
The situation is illustrated in the lower branch of Figure \ref{twopaths}.
Since
$$y_1=HMv_0+H\xi+\zeta$$
the proposal distribution is readily seen to be a centred Gaussian
with covariance $\Sigma=P$, the observation operator $K=HM$
and the noise covariance given by the covariance of $H\xi+\zeta$,
namely $\Gamma=HQH^*+R.$
Again we have established a direct connection between the particle 
filter, with optimal proposal, and the inverse problem of the previous 
section. We will use this connection to study the cost of the 
particle filter, with optimal proposal, in what follows.

A key assumption of the optimal proposal is the second step: the
ability to sample from the conditioned dynamics $\law_{v_1|v_0,y_1}$ and
we make a few comments on this before returning to our main purpose,
namely to study cost of particle filtering via the connection
to an inverse problem. The first comment is
to note that since we are in a purely Gaussian setting, this
conditioned dynamics is itself determined by a Gaussian and so
may in principle be performed in a straightforward fashion. In
fact the conditioned dynamics remains Gaussian even if the
forward model $Mv_0$ is replaced by a nonlinear map $f(v_0)$,
so that the optimal proposal has wider applicability than might
at first be appreciated. Secondly we comment that the Gaussian
arising in the conditioned dynamics
has mean $m$ and variance $\Xi$ given by the formulae
\begin{align*}
\Xi&=Q-QH^*(HQH^*+R)^{-1}HQ,\\
m&=Mv_0+QH^*(HQH^*+R)^{-1}(y_1-HMv_0).
\end{align*} 
It is a tacit assumption in what follows that the operators defining
the filtering problem are such that $\Xi: \h \to \h$ is well-defined
and that $m \in \h$ is well-defined.
More can be said about these points, but doing so will add further
technicalities without contributing to the main goals of this
paper.

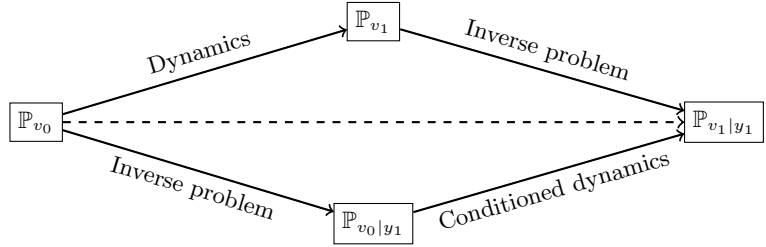
\begin{figure}
\begin{center}
\begin{tikzpicture}
\matrix [column sep=36mm, row sep=8mm] {
 &
  \node (yw) [draw, shape=rectangle] {$\law_{v_1}$}; &
   \\
  \node (d1) [draw, shape=rectangle] {$\law_{v_0}$}; & &
  \node (d2) [draw, shape=rectangle] {$\law_{v_1|y_1}$}; \\
  &
  \node (ec) [draw, shape=rectangle] {$\law_{v_0|y_{1}}$}; &
  \\
};
\draw[->, thick,dashed] (d1) -- (d2);
\draw[->, thick] (d1) -- (yw);
\draw (d1) -- (yw) node [midway,above,sloped] (TextNode) {Dynamics};
\draw[->, thick] (d1) -- (ec);
\draw (d1) -- (ec) node [midway,below,sloped] (TextNode) {Inverse problem};
\draw[->, thick] (yw) -- (d2);
\draw (yw) -- (d2) node [midway,above,sloped] (TextNode) {Inverse problem};
\draw[->, thick] (ec) -- (d2);
\draw (ec) -- (d2) node [midway,below,sloped] (TextNode) {Conditioned dynamics};
\end{tikzpicture}
\end{center}
\caption{\label{twopaths} Filtering step decomposed in two different ways. The upper path first pushes forward the measure $\law_{v_0}$ using the signal dynamics, and then incorporates the observation $y_{1}.$ The lower path assimilates the observation $y_{1}$ first, and then propagates the conditioned measure using the signal dynamics. The standard proposal corresponds to the upper decomposition and the optimal one to the lower decomposition.}
\end{figure}

\subsection{Intrinsic Dimension}
\label{ssec:ess3}

Using the inverse problems that arise for the standard proposal
and for the optimal proposal, and employing them within the definition of
$A$ from Assumption \ref{Generalinvprobassumption},
we find the two operators $A$ arising for these two different proposals:
$$A:=A_{st}:=(\signalmap \zeroc \signalmap^* + \signalc)^{1/2}\obsop^* \obsc^{-1} \obsop (\signalmap \zeroc \signalmap^* + \signalc)^{1/2}$$
for the standard proposal, and 
$$A:=A_{op}:=\zeroc^{\frac12} \signalmap^* \obsop^* (\obsc + \obsop \signalc \obsop^*)^{-1} \obsop \signalmap \zeroc ^{1/2}$$
for the optimal proposal.
Again here it is assumed that these operators are bounded in $\h$:

\begin{assumption}
\label{a:filt}
The operators $A_{st}$ and $A_{op}$, viewed as linear operators in $\h$,
are bounded. Furthermore, the spectra of both $A_{st}$ and $A_{op}$ consist of a countable number of eigenvalues.
\end{assumption}

Using these definitions of $A_{st}$ and $A_{op}$ we may define,
from \eqref{eq:efd}, the intrinsic dimensions $\tau_{st},\,\effd_{st}$
for the standard proposal and $\tau_{op},\,\effd_{op}$ for the optimal one
in the following way
\begin{align*}
\tau_{st} = \tr (A_{st}), \quad \quad  \effd_{st} = \tr\bigl((I+A_{st})^{-1}A_{st}\bigr)
\end{align*}
and
\begin{align*}
\tau_{op} = \tr (A_{op}), \quad \quad  \effd_{op} = \tr\bigl((I+A_{op})^{-1}A_{op}\bigr).
\end{align*}

\begin{table}
\begin{tabular}{l | c |c}
& Standard Proposal & Optimal proposal  \\\hline
Proposal & $\law_{v_0}(dv_0)\law_{v_1|v_0}(dv_{1})$ & $\law_{v_0}(dv_0)\law_{v_1|v_0,y_1}(dv_1)$\\
BIP  & $y_1=\obsop v_1 + \eta_{st}$ & $y_1=\obsop \signalmap v_0 +\eta_{op}$ \\
Prior Cov.  & $\signalmap \zeroc \signalmap^* + \signalc$ & $\zeroc$  \\
Data Cov. & $\obsc$ & $\obsc + \obsop \signalc \obsop^*$\\
$\log g(u;y_1)$ & $ -\frac{1}{2}\|\obsop v_1\|_{\obsc}^2 + \langle y_1,  \obsop v_1 \rangle_{\obsc}$&  $ -\frac{1}{2}\|\obsop \signalmap v_0\|_{\obsc + \obsop \signalc \obsop^*}^2 + \langle y_1,  HMv_0\rangle_{R +HQH^*}$\\
\end{tabular}
\caption{\label{table:filteringequiv}}
\end{table}

\subsection{Absolute Continuity}
\label{ssec:con3}

The following two theorems are a straightforward application
of Theorem \ref{maintheorem}, using the connections between
filtering and inverse problems made above. The contents of the
two theorems are summarized in Table \ref{table:filteringequiv}.

\begin{theorem}\label{definitionrelatedinverseproblem}
Consider one-step of particle filtering for \eqref{eq:shift}. 
Let $\tm=\rp_{v_1|y_1}$ and $\prm=\rp_{v_1}=\G(0,Q+MPM^*).$ 
Then the following are equivalent:
\begin{enumerate}[i)]
\item $\effd_{st}<\infty;$
\item $\tau_{st}<\infty;$
\item $R^{-1/2}Hv_1\in \h$, $\prm$-almost surely;
\item for $\mar$-almost all $y$, the target distribution $\tm$ is well defined as a measure in $\X$  and is absolutely continuous with respect to the proposal with 
\begin{equation}\label{rninfdim1}
\frac{d\tm}{d\propo}(v_1)\propto\exp\left(-\frac{1}{2}\norm{R^{-1/2}H v_1}^2+\frac12\pr{R^{-1/2}y_1}{R^{-1/2}H v_1}\right)=:\rnip_{st}(v_1;y_1),
\end{equation} 
where $0<\prm\bigl(\rnip_{st}(\cdot;y_1)\bigr)<\infty.$
\end{enumerate}
\end{theorem}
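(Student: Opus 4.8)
The plan is to obtain Theorem \ref{definitionrelatedinverseproblem} as a direct corollary of Theorem \ref{maintheorem}, by identifying the Bayesian inverse problem embedded in the standard-proposal filtering step. Recall from subsection \ref{ssec:gen3} that the density of the joint target $\law_{v_1,v_0|y_1}$ with respect to the joint proposal $\law_{v_1|v_0}\law_{v_0}$ is proportional to $\law_{y_1|v_1}$, and hence is a function of $v_1$ alone. This is precisely the likelihood of the inverse problem $y_1=\obsop v_1+\zeta$, $\zeta\sim\G(0,\obsc)$, in which $v_1$ is the unknown, the prior is the $v_1$-marginal of the proposal, namely $\law_{v_1}=\G(0,\signalmap\zeroc\signalmap^*+\signalc)$, the forward map is $\fp=\obsop$, and the noise covariance is $\nc=\obsc$. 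First, then, I would record this dictionary: $\prc\leftrightarrow\signalmap\zeroc\signalmap^*+\signalc$, $\fp\leftrightarrow\obsop$, $\nc\leftrightarrow\obsc$.

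Second, I would verify that under this dictionary the operator $A$ of Assumption \ref{Generalinvprobassumption} coincides with $A_{st}$. Indeed, with $\auxm=\nc^{-1/2}\fp\prc^{1/2}=\obsc^{-1/2}\obsop(\signalmap\zeroc\signalmap^*+\signalc)^{1/2}$ one computes
\[
A=\auxm^*\auxm=(\signalmap\zeroc\signalmap^*+\signalc)^{1/2}\obsop^*\obsc^{-1}\obsop(\signalmap\zeroc\signalmap^*+\signalc)^{1/2}=A_{st}.
\]
Consequently Assumption \ref{a:filt}, which posits that $A_{st}$ is bounded with countable spectrum, is exactly Assumption \ref{Generalinvprobassumption} for the embedded problem; and the intrinsic dimensions $\tau_{st}=\tr(A_{st})$ and $\effd_{st}=\tr((I+A_{st})^{-1}A_{st})$ are precisely the quantities $\tau$ and $\effd$ of \eqref{eq:efd} for that problem. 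Theorem \ref{maintheorem} then yields the equivalence of (i) $\effd_{st}<\infty$, (ii) $\tau_{st}<\infty$, (iii) $\obsc^{-1/2}\obsop v_1\in\h$ almost surely under $\law_{v_1}$, and (iv) absolute continuity with the stated density, upon substituting the dictionary into \eqref{rninfdim}.

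The one point requiring care — and which I regard as the main (though minor) obstacle — is the passage between the joint measures $\law_{v_1,v_0|y_1}$ and $\law_{v_1|v_0}\law_{v_0}$ on $\X\times\X$ appearing in the derivation and the $v_1$-marginal measures $\tm=\law_{v_1|y_1}$, $\prm=\law_{v_1}$ in the statement. Since the joint density depends on $v_1$ only, for any bounded $\test$ that is a function of $v_1$ alone one has
\[
\int\test\,d\law_{v_1,v_0|y_1}=\frac{\int\test\,\rnip_{st}\,d(\law_{v_1|v_0}\law_{v_0})}{\int\rnip_{st}\,d(\law_{v_1|v_0}\law_{v_0})},
\]
and integrating out $v_0$ under $\law_{v_1|v_0}\law_{v_0}$ shows that the same function $\rnip_{st}(v_1;y_1)$ serves as the density of $\law_{v_1|y_1}$ with respect to $\law_{v_1}=\G(0,\signalmap\zeroc\signalmap^*+\signalc)$. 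Hence statement (iv) for the marginals is equivalent to the joint absolute-continuity statement, and Theorem \ref{maintheorem} applies verbatim. Condition (iii) concerns $v_1\sim\law_{v_1}$ only, so no marginalization issue arises there; this completes the translation and hence the proof.
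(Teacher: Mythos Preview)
Your proposal is correct and follows precisely the approach the paper intends: the paper states that Theorems \ref{definitionrelatedinverseproblem} and \ref{t:2} are ``a straightforward application of Theorem \ref{maintheorem}, using the connections between filtering and inverse problems made above,'' which is exactly the dictionary you set up and verify. Your explicit treatment of the marginalization from the joint $(v_0,v_1)$ measures to the $v_1$-marginals is a nice addition of rigor that the paper leaves implicit.
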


\begin{theorem}\label{t:2}
Consider one-step of particle filtering for \eqref{eq:shift}. 
Let $\tm=\rp_{v_0|y_1}$ and $\prm=\rp_{v_0}=\G(0,Q).$ 
Then, for $R_{op}=R+HQH^*$, the following are equivalent:
\begin{enumerate}[i)]
\item $\effd_{op}<\infty;$
\item $\tau_{op}<\infty;$
\item $R_{op}^{-1/2}HMv_0\in \h$, $\prm$-almost surely;
\item for $\mar$-almost all $y$, the target distribution $\tm$ is well defined as a measure in $\X$  and is absolutely continuous with respect to the proposal with 
\begin{equation}\label{rninfdim3}
\frac{d\tm}{d\propo}(v_0)\propto\exp\left(-\frac{1}{2}\norm{R_{op}^{-1/2}HM v_0}^2+\frac12\pr{R_{op}^{-1/2}y_1}{R_{op}^{-1/2}HM v_0}\right)=:\rnip_{op}(v_0;y_1),
\end{equation} 
where $0<\prm\bigl(\rnip_{op}(\cdot;y_1)\bigr)<\infty.$
\end{enumerate}
\end{theorem}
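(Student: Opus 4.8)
The plan is to read Theorem \ref{t:2} as an immediate corollary of Theorem \ref{maintheorem}, applied to the Bayesian inverse problem that is embedded in the optimal proposal and already isolated in subsection \ref{ssec:gen3}. Recall from that discussion the factorization
$\law_{v_1,v_0|y_1}=\law_{v_1|v_0,y_1}\law_{v_0}\,\tfrac{\law_{v_0|y_1}}{\law_{v_0}}$, which shows that the density of the target $\law_{v_1,v_0|y_1}$ with respect to the optimal proposal $\law_{v_1|v_0,y_1}\law_{v_0}$ coincides with the density of $\law_{v_0|y_1}$ with respect to $\law_{v_0}$. Hence the joint problem on $\X\times\X$ collapses to an inverse problem for $v_0$ alone, and it suffices to analyse the posterior $\law_{v_0|y_1}$ relative to the prior $\law_{v_0}$. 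First I would make this reduction precise and then simply translate the four equivalent conditions of Theorem \ref{maintheorem} into the notation of the filtering problem.

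The decisive observation is that $\xi$ and $\zeta$ are independent of $v_0$ — $\xi$ by the Markov assumption on the dynamics and $\zeta$ by the noise model — so that, writing $y_1=\obsop\signalmap v_0+(\obsop\xi+\zeta)$, the effective noise $\eta_{op}:=\obsop\xi+\zeta$ is independent of $v_0$ and Gaussian with covariance $\obsc+\obsop\signalc\obsop^*=R_{op}$. Consequently the pair $(v_0,y_1)$ follows exactly the joint model of the linear Gaussian inverse problem $y_1=\fp v_0+\eta_{op}$ in the sense of \eqref{invprob}, under the dictionary $\fp=\obsop\signalmap$, $\prc=\zeroc$, $\nc=R_{op}$, and $\law_{v_0|y_1}$ is precisely its posterior, with data marginal $\mar$ inherited from the filtering model. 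With this dictionary the operator $A$ of Assumption \ref{Generalinvprobassumption} becomes $\auxm=\nc^{-1/2}\fp\prc^{1/2}=R_{op}^{-1/2}\obsop\signalmap\zeroc^{1/2}$, whence $A=\auxm^*\auxm=\zeroc^{1/2}\signalmap^*\obsop^* R_{op}^{-1}\obsop\signalmap\zeroc^{1/2}=A_{op}$, so the intrinsic dimensions $\effd,\tau$ of the embedded inverse problem are by definition $\effd_{op}$ and $\tau_{op}$.

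With the identification in hand, Theorem \ref{maintheorem} applies verbatim: its equivalences (i)--(ii) become (i)--(ii) here; condition (iii), namely $\nc^{-1/2}\fp v_0=R_{op}^{-1/2}\obsop\signalmap v_0\in\h$ $\prm$-almost surely, becomes (iii); and the Radon--Nikodym derivative \eqref{rninfdim}, upon substituting $\nc^{-1/2}=R_{op}^{-1/2}$ and $\fp=\obsop\signalmap$, specializes to \eqref{rninfdim3}, giving (iv). The argument is therefore essentially bookkeeping, and the only point demanding care — the mild obstacle — is the collapse of the joint $(v_0,v_1)$ problem to the marginal inverse problem for $v_0$: one must check the independence of $\eta_{op}$ from $v_0$ and that the filtering data marginal agrees with the inverse-problem marginal $\mar$, both of which follow from the Markov and independence structure of \eqref{eq:shift}. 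In the infinite-dimensional setting one should additionally invoke Remark \ref{remarkfininf} to guarantee that the posterior formulae underlying Theorem \ref{maintheorem} are available, exactly as in section \ref{sec:BIP}.
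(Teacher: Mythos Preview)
Your proposal is correct and is precisely the approach the paper takes: the paper states that Theorems \ref{definitionrelatedinverseproblem} and \ref{t:2} are ``a straightforward application of Theorem \ref{maintheorem}, using the connections between filtering and inverse problems made above,'' and the dictionary you set up ($\fp=\obsop\signalmap$, $\prc=\zeroc$, $\nc=R_{op}$, hence $A=A_{op}$) is exactly the one summarized in Table \ref{table:filteringequiv}. The only refinement is to note that Assumption \ref{a:filt} on $A_{op}$ supplies the hypothesis that plays the role of Assumption \ref{Generalinvprobassumption} when you invoke Theorem \ref{maintheorem}.
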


\begin{remark}
\label{rem:fil}
Because of the exponential structure of $g_{st}$ and $g_{op}$, 
the assertion $(iv)$ in the preceding two theorems is 
equivalent to $g_{st}$ and $g_{op}$ being $\nu$-almost surely positive and finite and for almost all $y_1$ the second moment of the
target-proposal density being finite. This second moment is given, for
the standard and optimal proposals, by
   $$\rho_{st}=\frac{\propo\left(\rnip_{st}(\cdot;y)^2\right)}{\propo\bigl(\rnip_{st}(\cdot;y)\bigr)^2}<\infty$$
and
   $$\rho_{op}=\frac{\propo\left(\rnip_{op}(\cdot;y)^2\right)}{\propo\bigl(\rnip_{op}(\cdot;y)\bigr)^2}<\infty$$
respectively. The relative sizes of $\rho_{st}$ and $\rho_{op}$ determine
the relative efficiency of the standard and optimal proposal
versions of filtering.
\end{remark}

The following theorem shows that there is loss of absolute continuity for the standard proposal whenever there is for the optimal one.
The result is formulated in terms of the intrinsic dimension $\tau,$ and we show that $\tau_{op}=\infty$ implies $\tau_{st}=\infty;$ by Theorem \ref{maintheorem}, this implies the result concerning absolute continuity. 
 Recalling that
poor behaviour of importance sampling is intimately related
to such breakdown, this suggests that the optimal proposal is always
at least as good as the standard one.
The following theorem also gives a condition on the operators $H,\, Q$ and $R$ under which collapse for both proposals occurs at the same time, irrespective of the regularity of the operators $M$ and $P.$ Roughly
speaking this simultaneous collapse result states that if $R$ is large 
compared to $Q$ then absolute continuity for both proposals 
is equivalent; and hence collapse of importance sampling happens
under one proposal if and only if it happens under the other.
Intuitively the advantages of the optimal proposal stem from the
noise in the dynamics; they disappear completely if the dynamics is
deterministic. The theorem quantifies this idea.
Finally, an example demonstrates that there are situations where
$\tau_{op}$ is finite, so that optimal proposal based
importance sampling works well for finite dimensional 
approximations of an infinite dimensional problem, 
whilst $\tau_{st}$ is infinite,
so that standard proposal based
importance sampling works poorly for finite dimensional 
approximations.

\begin{theorem}\label{theoremcomparisoncollapse}
Suppose that Assumption \ref{a:filt}  holds.
Then,
\begin{equation}\label{TAUBD}
\tau_{op}\le \tau_{st}.
\end{equation}
Moreover, if $\tr\Bigl(\obsop \signalc \obsop^* \obsc^{-1}\Bigr)<\infty,$ then 
 $$\tau_{st}<\infty \iff \tau_{op}<\infty.$$
\end{theorem}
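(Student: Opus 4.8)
The plan is to reduce both intrinsic dimensions to traces of operators acting on the data space, and then to exploit operator--monotonicity of the inverse. Writing $A_{st}=V^\ast V$ with $V=R^{-1/2}H(MPM^\ast+Q)^{1/2}$ and $A_{op}=U^\ast U$ with $U=(R+HQH^\ast)^{-1/2}HMP^{1/2}$, the identity $\tr(V^\ast V)=\tr(VV^\ast)$ from Lemma \ref{lem:tr} (valid even when both sides are infinite) gives the clean reformulations
\[
\tau_{st}=\tr\bigl(R^{-1}(B+D)\bigr)=\tr(R^{-1}B)+\tr(R^{-1}D),\qquad \tau_{op}=\tr\bigl((R+D)^{-1}B\bigr),
\]
where I set $B:=HMPM^\ast H^\ast\ge0$ and $D:=HQH^\ast\ge0$. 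This is the workhorse identity; everything else reduces to operator inequalities tested against positive operators.

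For the first claim, since $D\ge0$ we have $0<R\le R+D$, and antitonicity of the inverse gives $(R+D)^{-1}\le R^{-1}$. Testing this against $B\ge0$, via $\tr\bigl((R^{-1}-(R+D)^{-1})B\bigr)=\tr\bigl(B^{1/2}(R^{-1}-(R+D)^{-1})B^{1/2}\bigr)\ge0$, yields $\tau_{op}=\tr((R+D)^{-1}B)\le\tr(R^{-1}B)\le\tau_{st}$, which is \eqref{TAUBD}.

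For the equivalence, I first note that the extra hypothesis reads $\tr(R^{-1}D)<\infty$ (again by cyclicity). The implication $\tau_{st}<\infty\Rightarrow\tau_{op}<\infty$ is immediate from \eqref{TAUBD}; the substance is the converse, where the goal is to reverse $\tr(R^{-1}B)\ge\tr((R+D)^{-1}B)$ up to a finite multiplicative constant. The plan is to write $R^{-1}=(R+D)^{-1/2}W(R+D)^{-1/2}$ with $W:=(R+D)^{1/2}R^{-1}(R+D)^{1/2}\ge0$, so that by cyclicity
\[
\tr(R^{-1}B)=\tr\bigl(W\,(R+D)^{-1/2}B(R+D)^{-1/2}\bigr)\le\|W\|\,\tr\bigl((R+D)^{-1}B\bigr)=\|W\|\,\tau_{op},
\]
the inequality using $W\le\|W\|I$ tested against the positive operator $(R+D)^{-1/2}B(R+D)^{-1/2}$. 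Finally $\|W\|=\|R^{-1/2}(R+D)^{1/2}\|^2=\|I+R^{-1/2}DR^{-1/2}\|=1+\|R^{-1/2}DR^{-1/2}\|$, and since a positive operator has norm bounded by its trace, $\|R^{-1/2}DR^{-1/2}\|\le\tr(R^{-1}D)<\infty$. Hence $\tr(R^{-1}B)\le(1+\tr(R^{-1}D))\,\tau_{op}$, so $\tau_{st}=\tr(R^{-1}B)+\tr(R^{-1}D)<\infty$ whenever $\tau_{op}<\infty$, completing the equivalence.

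The main obstacle is precisely this converse: because $R$, $D$ and $B$ do not commute, one cannot simply factor out the ratio as in the scalar heuristic $\tfrac{B}{R}=\tfrac{B}{R+D}\bigl(1+\tfrac{D}{R}\bigr)$. The congruence trick through $W$ is what makes the scalar intuition rigorous, and the only analytic input it requires is that $\tr(R^{-1}D)<\infty$ forces $R^{-1/2}DR^{-1/2}$ to be bounded. I would also record the standing requirement that $R$ (hence $R_{op}=R+D$) be boundedly invertible so that all square roots and inverses are well defined, and invoke Lemma \ref{lem:tr} explicitly at each use of cyclicity, since the operators involved need not individually be trace class.
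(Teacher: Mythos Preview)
Your proof is correct and follows essentially the same route as the paper. Both arguments pass to the data-space traces
\[
\tau_{st}=\tr(R^{-1}B)+\tr(R^{-1}D),\qquad \tau_{op}=\tr\bigl((R+D)^{-1}B\bigr),
\]
with $B=HMPM^\ast H^\ast$ and $D=HQH^\ast$, and then use operator monotonicity $(R+D)^{-1}\le R^{-1}$ tested against $B\ge0$ for \eqref{TAUBD}. For the converse in the equivalence, the paper observes that $\tr(R^{-1}D)<\infty$ forces $D\le cR$ in the form sense, hence $R+D\le(1+c)R$, and then applies Lemma~\ref{lem:tr}(vi) in the reverse direction to get $\tr(R^{-1}B)\le(1+c)\tr\bigl((R+D)^{-1}B\bigr)$. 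Your congruence with $W=(R+D)^{1/2}R^{-1}(R+D)^{1/2}$ and the bound $\|W\|=1+\|R^{-1/2}DR^{-1/2}\|\le 1+\tr(R^{-1}D)$ is exactly the same comparison, just made explicit rather than hidden in the Rayleigh--Courant--Fisher argument behind Lemma~\ref{lem:tr}(vi). If anything, your version is a little cleaner: the paper's intermediate claim ``$\|HQH^\ast x\|\le c\|Rx\|$'' is a slightly loose phrasing of what is really the quadratic-form inequality $D\le cR$, which is precisely what your bound on $\|W\|$ encodes.
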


We remark that, under additional simplifying assumptions, we can obtain bounds of the form $\eqref{TAUBD}$ for $\effd$ and $\rho.$ We chose to formulate the result in terms of $\tau$ since we can prove the bound $\eqref{TAUBD}$ in full generality. Moreover,  by Theorem \ref{maintheorem} the bound in terms of $\tau$ suffices in order to understand the different collapse properties of both proposals.

The following example demonstrates that  it is possible
that $\tau_{op}<\infty$ while $\tau_{st}=\infty;$ in this
situation filtering via the optimal proposal is well-defined,
whilst using the standard proposal it is not.
Loosely speaking, this happens if $y_1$ provides more 
information on $v_1$ than $v_0.$

\begin{example}\label{EX}
Suppose that 
$$\obsop=\signalc=\obsc=\signalmap=I,\quad \quad \tr(P)<\infty.$$
Then, it is straightforward from the definitions that $A_{st}= P+I$ and $A_{op} = P/2.$ In an infinite dimensional Hilbert space setting the identity operator has infinite trace, $\tr(I) = \infty$, and so 
$$\tau_{st} =\tr(A_{st})=\tr(P+I) = \infty, \quad \quad \tau_{op}= \tr(A_{op})=\tr(P/2)<\infty.$$  We have thus established an example of a filtering model for which $\tau_{st}=\infty$ and $\tau_{op}<\infty.$
We note that by Theorem \ref{theoremcomparisoncollapse}, any such example satisfies the condition $\tr(HQH^*R^{-1}) = \infty.$  When this condition is met, automatically $\tau_{st}=\infty$. However, $\tau_{op}$ can still be finite. Indeed, within the proof of that theorem in the Supplementary Material we show that the
inequality 
 $$\tau_{op}\le \tr(R^{-1}HMPM^*H^*)$$ 
always holds.
The right-hand side may be finite provided that the eigenvalues of $P$ decay fast enough. A simple example of this situation is where $HM$ is a bounded operator and all the relevant operators have eigenvalues. In this case the Rayleigh-Courant-Fisher theorem --see the Supplementary Material-- guarantees that the eigenvalues of $HMPM^*H^*$ can be bounded in terms of those of $P$. Again by the Rayleigh-Courant-Fisher theorem, since we are always assuming that the covariance $R$ is bounded, it is possible to bound the eigenvalues of $R^{-1}HMPM^*H^*$ in terms of those of $HMPM^*H^*.$  This provides a wider range of examples where $\tau_{st}=\infty$ while $\tau_{op}<\infty.$
\end{example}

\subsection{Large Nominal Dimension and Singular Parameter Limits}
\label{ssec:sin3}

We noted in Remark \ref{rem:fil} that the values of the 
second moment of the target-proposal
density, $\rho_{st}$ and 
$\rho_{op}$, can be used to characterize the performance of particle filters based on the standard 
and optimal proposals, respectively. By comparing  
the values of $\rho_{st}$ and $\rho_{op}$ we can ascertain
situations in which the optimal proposal has significant
advantage over the standard proposal. We also recall, from section
\ref{sec:BIP}, the role of the intrinsic dimensions in determining
the scaling of the second moment of the target-proposal density.

The following example will illustrate a number of interesting
phenomena in this regard. In the setting of fixed finite state/data state dimension
it will illustrate how the scalings of the various covariances entering
the problem effect computational cost. 
In the setting of increasing nominal dimension $d$, when the limiting target
is singular with respect to the proposal, it will illustrate how
computational cost scales with $d$. And finally
we will contrast the cost of the filters in two differing
initialization scenarios: (i) from an arbitrary initial covariance $P$,
and from a steady state covariance $P_{\infty}.$
Such a steady state covariance is a fixed point of the covariance update
map for the Kalman filter defined by \eqref{Filteringproblem}.

\begin{table}
\begin{tabular}{l | c |c|c|c|c|c}
Regime & Param.& $\text{eig}(A_{st})$ & $\text{eig}(A_{op})$ & $\text{eig}(P_\infty)$ & $\rho_{st}$  &$\rho_{op}$ \\\hline
Small  obs. noise & $r\to 0$ &  $r^{-1}$  & $r$ &$r$ & $r^{-d/2}$& $1$\\
                   & $r=q\to 0$ &   $1$   & $1$ & $r(=q)$ &$1$ &$1$\\ \hline
Large $d$  & $d\to \infty$  & $1$ & $1$ &1& $\exp(d)$ &$\exp(d)$\\
\end{tabular}
\caption{\label{tablefilteringstationarity} Scalings of the standard and optimal proposals in small noise and large $d$ regimes for one filter step initialized from stationarity $(P=P_\infty$). This table should be interpreted 
in the same way as Table  \ref{table:efdtaurhoscalings}.}
\end{table}

\begin{table}
\begin{tabular}{l | c |c|c|c|c}
Regime & Param.& $\text{eig}(A_{st})$ & $\text{eig}(A_{op})$ & $\rho_{st}$  &$\rho_{op}$ \\\hline
Small obs. noise & $r\to 0$ &  $r^{-1}$  & $1$ & $r^{-d/2}$& $1$\\
                    & $r=q\to 0$ &   $r^{-1}$   & $r^{-1}$ &$r^{-d/2}$ &$r^{-d/2}$\\ \hline
Large $d$  & $d\to \infty$  & $1$ & $1$ & $\exp(d)$ &$\exp(d)$\\
\end{tabular}
\caption{\label{tablefilterinoutsidestationarity} Scalings of the standard and optimal proposals in small noise and large $d$ regimes for one filter step initialized from $P=pI $.  This table should be interpreted 
in the same way as Table  \ref{table:efdtaurhoscalings}.}
\end{table}

\begin{example}
Suppose that $\signalmap=\obsop= I\in \R^{d\times d}$, and $R=rI,$ $Q=qI$, with $r, q>0.$ A simple calculation shows that the steady state covariance is given by
\begin{equation*}
P_{\infty} = \frac{\sqrt{q^2 +4qr}-q}{2} \, I,
\end{equation*}
and that the operators $A_{st}$ and $A_{op}$ when $P=P_\infty$ are
\begin{equation*}
A_{st} = \frac{\sqrt{q^2 + 4qr}+q}{2r} \,I, \quad \quad A_{op} = \frac{\sqrt{q^2 + 4qr}-q}{2(q+r)}\, I.
\end{equation*}
Note that $A_{st}$ and $A_{op}$ are a function of $q/r$ only, whereas $P_\infty$ is not.

If the filtering step is initialized outside stationarity at $P=pI,$ with $p>0,$ then
\begin{equation*}
A_{st} =\frac{p+q}{r}\,I, \quad \quad A_{op} = \frac{p}{q+r}\, I.
\end{equation*}
Both the size and number of the eigenvalues of $A_{op}/A_{st}$ play
a role in determining the size of $\rho$, the second moment of the
target-proposal variance.  It is thus interesting to study
how $\rho$ scales in both the small observational noise regime $r \ll 1$
and the high dimensional regime $d \gg 1$. 
The results are summarized in 
Tables \ref{tablefilteringstationarity} and 
\ref{tablefilterinoutsidestationarity}. 
Some conclusions from these tables are:
\begin{itemize}
\item The standard proposal degenerates at an algebraic rate as $r\to 0$,
for fixed dimenson $d$, for both initializations of $P$. 
\item The optimal proposal is not sensitive to the small observation limit $r \to 0$ if the size of the signal noise, $q,$ is fixed. If started 
outside stationarity, the optimal proposal degenerates algebraically if 
$q \propto r\to 0$. However, even in this situation the optimal proposal 
scales well if initialized in the stationary regime.
\item In this example the limiting problem with $d=\infty$ has infinite 
intrinsic dimension for both proposals, because the target and the proposal
are mutually singular. As a result, $\rho$ grows exponentially in 
the large $d$ limit. 
\item Example \ref{EX} suggests that there are cases where $\rho_{st}$ grows exponentially in the large dimensional limit $d \to \infty$ but $\rho_{op}$ converges to a finite value. This may happen if $\tr\Bigl(\obsop \signalc \obsop^* \obsc^{-1}\Bigr)<\infty,$ but the prior covariance $P$ is sufficiently smooth.
\end{itemize}
\end{example}

\subsection{Discussion and Connection to Literature}
\label{ssec:lit3}

In subsection \ref{ssec:gen3}
we follow  \cite{BBL08}, \cite{BLB08}, \cite{snyder2008obstacles}, \cite{snyder2011alone}, \cite{slivinskipractical}, \cite{snyderberngtsson} and consider one step of the filtering model \eqref{Filteringproblem}. There are two main motivations for studying one step of the filter. Firstly,  if keeping the filter error small is prohibitively costly for one step, then there is no hope that an online particle filter will be successful \cite{BBL08}. Secondly,  it can provide insight for filters initialized close to stationarity \cite{chorin2013conditions}. As in  \cite{snyder2011alone}, \cite{slivinskipractical}, \cite{snyderberngtsson} we cast the analysis of importance sampling in joint space and consider as target $\tm:=\law_{u|y_1}$, with $u:=(v_0,v_1)$ and with the standard and optimal proposals defined in subsection \ref{ssec:gen3}.

In general nonlinear, non-Gaussian problems the optimal proposal is usually not implementable, since it is not possible to evaluate the corresponding weights, or to sample from the distribution $\law_{v_1|v_0,y_1}.$ 
However, the optimal proposal is implementable in our framework (see
for example \cite{doucet2000sequential}) and understanding its
behaviour is important in order to build and analyse improved and
computable proposals which are informed by the data
\cite{tu2013implicit}, \cite{goodman2014small},
\cite{van2010nonlinear}. It is worth making the point that the
so-called ``optimal proposal'' is really only \emph{locally} optimal. In
particular, this choice is optimal in minimizing the variance of the
weights at the given step given that all previous proposals
have been already chosen. This choice  does not minimize the Monte Carlo
variance for some time horizon for some family of test functions. A different optimality
criterion is obtained by trying to simultaneously minimize the variance of 
weights at times $t \leq r \leq t+m$, for some $m\geq 1$, or minimize
some function of these variances, say their sum or their maximum. Such
look ahead procedures might not be feasible in practice. Surprisingly,
examples exist where the standard proposal leads to smaller variance of
weights some steps ahead relative to the locally optimally tuned
particle filter;  see for example section 3 of
\cite{johansen2008note}, and the discussion in \cite[Chapter
10]{omirosb}.  Still, such examples are quite contrived and
experience suggests that local adaptation is useful in practice.

Similarly as for inverse problems, the values of $\rho_{st}$ and $\rho_{op}$ determine the performance of importance sampling for the filtering model with the standard and optimal proposals. In subsection \ref{ssec:con3} we show that the conditions of collapse for the  standard and optimal proposals (found in \cite{snyder2011alone} and \cite{BLB08}, respectively) correspond to any of the equivalent conditions of finite  intrinsic dimension or finite $\rho$  in Theorems \ref{definitionrelatedinverseproblem} and \ref{t:2}.

In subsection \ref{ssec:sin3} we study singular limits in the framework of \cite{chorin2013conditions}. Thus, we consider a  diagonal filtering setting in the Euclidean space $\R^d$, and assume that all coordinates of the problem play the same role, which corresponds to the extreme case $\beta=0$ in subsection \ref{ssec:sin2}. The paper \cite{chorin2013conditions} introduced a notion of effective dimension for detectable and stabilizable linear Gaussian data assimilation problems as the Frobenius norm of the steady state covariance of the filtering distribution. It is well known that the detectability and stabilizability conditions ensure the existence of such steady state covariance \cite{lancaster1995algebraic}.
This notion of dimension quantifies the success of data assimilation in having reduced uncertainty on the unknown once the data has been assimilated. Therefore the definition of dimension given in
\cite{chorin2013conditions} is at odds with both $\tau$ and $\effd:$ 
it does not quantify how much is learned from the data in one step, 
but instead how concentrated the filtering distribution is in the
time asymptotic regime when the filter is in steady state. 
Our calculations demonstrate differences which can occur
in the computational cost of filtering, depending on
whether it is initialized in this statistical steady state,
or at an arbitrary point. 
The  paper \cite{chorin2013conditions}  also highlights the importance of the size of the operator 
$A$ in studying the performance of importance sampling, both for the
standard and optimal proposals. Motivated by computational and physical 
intuition, the authors of \cite{chorin2013conditions} quantify the size of this operator by means of the Frobenius 
norm rather than the trace which we employ here. The trace is
more natural in the infinite dimensional limit, as demonstrated through 
large intrinsic dimension limits in \cite{BLB08}, and through the connection with 
absolute continuity in Theorems \ref{definitionrelatedinverseproblem} and \ref{t:2} above. We remark that the analysis 
in \cite{snyder2011alone} also relies on traces, but an unfortunate typo may trick the reader 
into believing that the Frobenius norm is being used. Note also
that some authors \cite{BLB08} write the eigenvalues as squares which can cause
 confusion to the casual reader.

\section{Conclusions}
In this article we have provided a framework which unifies 
the multitude of publications with bearing on importance sampling. 
We have aimed to give new insight into the potential use of importance sampling for inference in inverse problems
and filtering settings in models that
involve high and infinite state space and data dimensions.  
Our study has required revisiting the fundamental structure of importance
sampling on general state spaces. We have derived non-asymptotic
concentration inequalities for the particle approximation error and
related what turns out to be the key parameter of performance, the second moment of
the density between the target and proposal, to many different
importance sampling input and output quantities. 

As a  compromise between mathematical tractability and
practical relevance
we have focused on Bayesian linear models for regression and
statistical inversion of ill-posed inverse problems. We have studied
the efficiency of sampling-based posterior inference in these contexts
carried out by importance sampling using the prior as proposal. We
have demonstrated that performance is controlled by an intrinsic
dimension, as opposed to the state space or data dimensions, and we
have discussed and related two different notions of intrinsic 
dimension. It is important to emphasise that the intrinsic dimension
quantifies the relative strength between the prior and the
likelihood in forming the posterior, as opposed to quantifying the
``degrees of freedom'' in the prior. In other words,
infinite-dimensional Bayesian linear models with finite intrinsic
dimension are not identified with models for which the prior  is concentrated on a
finite-dimensional manifold of the infinite-dimensional state space. 

A similar consideration of balancing tractability and practical
relevance has dictated the choice not to study interacting particles
typically used for filtering, but rather to focus on one-step
filtering using importance sampling. For such problems we introduce
appropriate notions of intrinsic dimension and compare the relative
merits of popular alternative schemes.

The most pressing topic for future research stemming from this article
is the development of concrete recommendations for algorithmic design
within classes of Bayesian models used in practice. Within the model
structure we have studied here, practically relevant and important
extensions include models with non-Gaussian priors on the unknown,
nonlinear operators that link the unknown to the data, and unknown
hyperparameters involved in the model specification. Linearisation of
a nonlinear model around some reasonable value for the unknown
(e.g. the posterior mean) is one
way to extend our measures of intrinsic dimension in such
frameworks. We can expect the subject area to see considerable 
development in the coming decade.

\section*{Acknowledgments}
The authors are thankful to Alexandre Chorin, Arnaud Doucet, Adam Johansen, and Matthias Morzfeld for their generous feedback. SA and DSA are grateful to EPSRC for financial support. AMS is grateful to DARPA, EPSRC and ONR for financial support.

\bibliographystyle{plain}
\bibliography{isbib}

\section{Supplementary Material}

\subsection{Gaussian Measures in Hilbert Space}\label{ssec:Gaussian}

In section \ref{sec:BIP} we study Bayesian inverse problems in 
the Hilbert
space setting. This enables us to talk about infinite dimensional limits
of sequences of high dimensional inverse problems and is hence useful
when studying the complexity of importance sampling in high dimensions.
Here we provide some background on Gaussian measures in Hilbert space.
We start by describing how to construct a random draw from a Gaussian
measure on an infinite dimensional separable Hilbert space 
$(\h,\pr{\cdot}{\cdot},\norm{\cdot})$. 
Let $\excov:\h \to\h$ be a self-adjoint, positive-definite and trace class operator. It then holds that $\excov$ has a countable set of eigenvalues $\{\excoef_j\}_{j\in\N}$, with corresponding normalized eigenfunctions $\{e_j\}_{j\in\N}$ which form a complete orthonormal basis in $\h$.

\begin{example}
\label{ex:bb}
We use as a running example the case where $\h$ is the space of square 
integrable real-valued functions on the unit interval, $\h=L^2(0,1)$ and
where the Gaussian measure of interest is a unit centred
Brownian bridge on the interval $(0,1)$. Then $m=0$ and 
$\excov$ is the inverse of the negative
Laplacian on $(0,1)$ with homogeneous Dirichlet boundary conditions.
The eigenfunctions and eigenvalues of $\excov$ are given by
$$e_j(t)=\sqrt{2}\sin(j\pi t), \quad \kappa_j=(j\pi)^{-2}.$$
The eigenvalues are summable and hence the operator $\excov$
is trace class.  For further details see \cite{AS10}.
\end{example}

For any $\exm\in\h$, we can write a draw $x\sim\G(\exm,\excov)$ as  \begin{equation*}
x=\exm+\sumj\sqrt{\excoef_j}\zeta_je_j,\end{equation*}
where $\zeta_j$ are independent standard normal random variables in $\R$; this is the Karhunen-Loeve expansion \cite[Chapter III.3]{RA90}. The trace class assumption on the operator $\excov$, ensures that $x\in\h$ with probability $1$, see Lemma \ref{propositionappendix} below.  The particular rate of decay of the eigenvalues $\{\excoef_j\}$ determines the almost sure regularity properties of $x$. The idea is that the quicker the decay, the smoother $x$ is, in a sense which depends on the basis $\{e_j\}$.  
For example if $\{e_j\}$ is the Fourier basis, which is the case if $\excov$ is a function of the Laplacian on a torus, then a quicker decay of the eigenvalues of $\excov$ means a higher H\"older and Sobolev regularity (see \cite[Lemmas 6.25 \& 6.27]{AS10} and \cite[Section 2.4]{DS15}).
For the Brownian bridge Example \ref{ex:bb} above, 
draws are almost surely in
spaces of both H\"older and Sobolev regularity upto (but not including)
one half.

The above considerations suggest that we can work entirely in the 
``frequency'' domain, namely the space of coefficients of the
element of $\h$ in the eigenbasis of the covariance, the sequence
space $\ell^2$. Indeed, we can identify the Gaussian measure $\G(\exm,\excov)$ with the independent product measure $\bigotimes_{j=1}^\infty \G(\exm_j, \excoef_j),$ where $\exm_j=\pr{\exm}{e_j}$.  
Using this identification, we can define a sequence of Gaussian measures in $\R^d$ which converge to $\G(\exm,\excov)$ as $d\to\infty$, by truncating the product measure to the first $d$ terms. 
Even though in $\R^d$ any two Gaussian measures with
strictly positive covariances are absolutely continuous with respect to each other (that is, equivalent as measures), in the infinite-dimensional limit two Gaussian measures can be mutually singular, and indeed are unless very stringent conditions are satisfied.

For  $\G(\exm, \excov)$ in $\h$, we define its Cameron-Martin space $E$ as the domain of $\excov^{-\frac12}.$ This space can be characterized as the space of all the shifts in the mean which result in an equivalent Gaussian measure, whilst the covariance is fixed. Since $\excov$ is a trace class operator, its inverse (hence also its square root) is an unbounded operator, therefore $E$ is a compact subset of $\h$. In
fact $E$ has zero measure under $\G(0, \excov)$. For example, if $\excov$ is given by the Brownian bridge Example \ref{ex:bb}, 
then the Cameron-Martin space $E$ is the Sobolev space of functions which vanish on the boundary and whose first derivative is in $\h$; in contrast, and as mentioned above, draws from this measure only have upto half a derivative in the Sobolev sense. The equivalence or singularity of two Gaussian measures with different covariance operators and different means depends on the compatibility of both their means and covariances, as expressed in the three conditions of the Feldman-Hajek theorem. For more details on the equivalence and singularity of Gaussian measures see \cite{DPZ92}. 

The Karhunen-Loeve expansion makes sense even if $\excov$ is not trace class, in which case it defines a Gaussian measure in a space $\X\supset\h$ with a modified covariance operator which \emph{is} trace class. Indeed, let $D:\h\to\h$ be any injective bounded self-adjoint operator such that: a) $D$ is diagonalizable in $\{e_j\}_{j\in\N}$, with (positive) eigenvalues $\{d_j\}_{j\in\N}$; b) the operator $D\excov D$ is trace class, that is, $\{\excoef_jd_j^2\}_{j\in\N}$ is summable. Define the weighted inner product $\pr{\,\cdot\,}{\,\cdot\,}_{D^{-2}}:=\pr{D\,\cdot\,}{D\,\cdot}$, the weighted norm $\norm{\,\cdot\,}_{D^{-2}}=\norm{D\,\cdot\,}$ and the space 
$$\X\coloneq\overline{{\rm span} \{e_j:j\in\N\}}^{\norm{\,\cdot\, }_{D^{-2}}}.$$ 
Then the functions $\psi_j=d_j^{-1}e_j, \;j\in\N$, form a complete orthonormal basis in the Hilbert space $(\X,\pr{\,\cdot\,}{\,\cdot\,}_{D^{-2}},\norm{\,\cdot\,}_{D^{-2}})$. The Karhunen-Loeve expansion can then be written as 
\begin{equation*}
x=\exm+\sumj\sqrt{\excoef_j}\zeta_je_j=\exm+\sumj\sqrt{\excoef_j}d_j\zeta_j\psi_j,\end{equation*}
so that we can view $x$ as drawn from the Gaussian measure $\G(\exm,D\excov D)$ in $\X$, where $D\excov D$ is trace class by assumption. For example, the case $\h=L^2(0,1)$ and $\excov=I$, corresponding to Gaussian white 
noise for functions on the interval $(0,1)$, 
can be made sense of in negative Sobolev-Hilbert spaces with $-1/2-\epsilon$ derivatives, for any $\epsilon>0$. Finally, we stress that absolute continuity in general and the Cameron-Martin space in particular, are concepts which are independent of the space in which we make sense of the measure. In the Gaussian white noise example, we hence have that the Cameron-Martin space is $E=\h$.

The following lemma is similar to numerous results concerning
Gaussian measures in function spaces. Because the precise form 
which we use is not in the literature, we provide a direct proof.

\begin{lemma}\label{propositionappendix}
Let $\X$ be a separable Hilbert space with orthonormal basis $\{\varphi_j\}_{j\in\N}$. Define the Gaussian measure $\gamma$ through the Karhunen-Loeve expansion \[\gamma:=\mathcal{L}\Bigl(\sumi \sqrt{\lambda_j}\xi_j\varphi_j\Bigr),\]
where $\lambda_j$ is a sequence of positive numbers and where $\xi_j$ are i.i.d. standard normal.
Then draws from $\gamma$ are in $\X$ almost surely if and only if $\sumi\lambda_j<\infty$.
\end{lemma}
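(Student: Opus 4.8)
The plan is to reduce the statement to the convergence of a series of independent nonnegative real random variables, and then to settle the two directions separately, the harder one via a Laplace-transform computation. Write $X:=\sumi\sqrt{\lambda_j}\xi_j\varphi_j$ with partial sums $X_N:=\sum_{j=1}^N\sqrt{\lambda_j}\xi_j\varphi_j$. Since $\{\varphi_j\}$ is orthonormal, Parseval gives $\norm{X_N-X_M}^2=\sum_{j=M+1}^N\lambda_j\xi_j^2$, so $(X_N)$ is Cauchy in the complete space $\X$ precisely when the nonnegative series $S:=\sumi\lambda_j\xi_j^2$ is finite. Thus ``draws from $\gamma$ lie in $\X$ almost surely'' is exactly the event $\{S<\infty\}$, and it remains to show that $S<\infty$ almost surely if and only if $\sumi\lambda_j<\infty$.

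The sufficiency direction is immediate: if $\sumi\lambda_j<\infty$ then, by Tonelli and $\E[\xi_j^2]=1$, we get $\E[S]=\sumi\lambda_j<\infty$, whence $S<\infty$ almost surely. The substance of the lemma, and the step I expect to be the main obstacle, is necessity, since a first-moment estimate cannot detect divergence of $S$; the clean way around this is the Laplace transform. For fixed $t>0$, the elementary Gaussian identity $\E[e^{-s\xi^2}]=(1+2s)^{-1/2}$ for $s\ge0$, combined with independence of the $\xi_j$ and dominated convergence applied to the decreasing sequence $e^{-tS_N}\downarrow e^{-tS}$ (where $S_N$ is the $N$-th partial sum of $S$), yields
\[
\E\bigl[e^{-tS}\bigr]=\prod_{j=1}^\infty\bigl(1+2t\lambda_j\bigr)^{-1/2}.
\]

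To finish I would argue the contrapositive. Suppose $\sumi\lambda_j=\infty$. Using $\tfrac12 x\le\log(1+x)\le x$ for small $x\ge0$ in the case $\lambda_j\to0$, and noting that both $\sumi\lambda_j$ and $\sumi\log(1+2t\lambda_j)$ diverge when $\lambda_j\not\to0$, the series $\sumi\log(1+2t\lambda_j)$ diverges, so the infinite product above equals $0$. Hence $\E[e^{-tS}]=0$; since $e^{-tS}\ge0$, this forces $e^{-tS}=0$ almost surely, i.e. $S=\infty$ almost surely. This establishes the contrapositive of necessity and completes the equivalence. The only genuinely delicate point is the passage to the limit that produces the infinite product, which is justified by dominated convergence since $0\le e^{-tS_N}\le1$; note in particular that this route avoids any need for the Kolmogorov zero-one law, as the vanishing expectation directly yields the almost-sure conclusion.
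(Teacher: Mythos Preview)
Your proof is correct. The sufficiency direction is handled identically in both arguments, via $\E[S]=\sumi\lambda_j$. For necessity the two routes genuinely diverge. The paper first observes that $S<\infty$ almost surely forces $\lambda_j\to 0$, then invokes a three-series-type criterion for sums of independent symmetric random variables (\cite[Theorem 3.17]{OK02}) to obtain $\sumi\E[\lambda_j\xi_j^2\wedge 1]<\infty$, and finally lower-bounds each truncated moment by an explicit Gaussian integral to extract $\sumi\lambda_j<\infty$. Your Laplace-transform argument instead computes $\E[e^{-tS}]=\prod_j(1+2t\lambda_j)^{-1/2}$ directly and shows it vanishes whenever $\sumi\lambda_j=\infty$, forcing $S=\infty$ almost surely. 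Your approach is more self-contained and arguably cleaner here: it avoids citing an external convergence theorem and the somewhat delicate lower bound on $\E[\lambda_j\xi_j^2\wedge 1]$. The paper's route, by contrast, does not rely on a closed-form moment-generating function, so it would transfer more readily to non-Gaussian coefficients.
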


\begin{proof}
If $\sumi\lambda_j<\infty$, then
\[\E_\gamma\norm{x}_{\X}^2=\E\sumi\lambda_j\xi_j^2=\sumi\lambda_j<\infty,\]
hence $x\sim\gamma$ is in $\X$ almost surely. 

For the converse, suppose that $x\sim\gamma$ is in $\X$ almost surely. Then 
 \[\norm{x}_{\X}^2=\sumi\lambda_j\xi_j^2<\infty, \quad \text{a.s.}\]
 Note that this implies that $\lambda_j\to0$, and so in particular $\lambda_\infty:=\sup_{j} \lambda_j<\infty.$

 By \cite[Theorem 3.17]{OK02}, since $\sqrt{\lambda_j}\xi_j\sim N(0,\lambda_j)$ are independent and symmetric random variables,
we get that \[\sumi\E[\lambda_j\xi_j^2\wedge1]<\infty.\]

A change of variable gives
\begin{align*}
\E[\lambda_j\xi_j^2\wedge1]&\ge\frac{2}{\sqrt{2\prm\lambda_j}}\int_0^1y^2e^{-\frac{y^2}{2\lambda_j}}dy \\
&=\frac{2\lambda_j^{\frac32}}{\sqrt{2\prm\lambda_j}}\int_0^{1/\sqrt{\lambda_j}}z^2e^{-\frac{z^2}2}dz=\frac{2\lambda_j}{\sqrt{2\prm}}\int_0^{1/\sqrt{\lambda_j}}z^2e^{-\frac{z^2}2}dz.
\end{align*}
Thus, for every $j\in \N,$ 
 \[\E[\lambda_j\xi_j^2\wedge1]\geq \frac{2\lambda_j}{\sqrt{2\prm}} \int_0^{1/\sqrt{\lambda_\infty}}z^2e^{-\frac{z^2}2}dz.\] Since the left hand side is summable, we conclude that \[\sumi\lambda_j<\infty.\]
\end{proof}

\subsection{Details of the inverse problem setting in section \ref{sec:BIP}}
In section \ref{sec:BIP} we assume Gaussian observation noise  $\eta\sim\noise:=\G(0, \, \nc)$ and put a Gaussian prior  on the unknown $u \sim\prior=\G(0,\, \prc)$, where  $\nc:\h\to\h$ and $\prc:\h\to\h$ are bounded, self-adjoint, positive-definite linear operators. 
As discussed in subsection  \ref{ssec:Gaussian},
if the covariance $\nc$ (respectively $\prc$) is trace class then 
$\eta \sim \noise$ (respectively $u \sim \prior$) is almost surely
in $\h$. On the other hand, as also discussed in 
subsection \ref{ssec:Gaussian}, when {the} covariance $\nc$
(respectively $\prc$) is not 
trace-class we have that $\eta \notin \h$ but  $\eta\in \Y$ $\noise$-almost
surely (respectively $u\notin \h$ but $u \in \X$ $\prior$-almost surely) 
where $\Y$ (respectively $\X$) strictly contains $\h$; 
indeed $\h$ is compactly embedded into $\X,\Y.$

We tacitly assume that $\fp$ can be extended to act on elements in $\X$ and that the sum of $\fp u$ and $\eta$ makes sense in $\Y$. This assumption holds trivially if the three operators $\fp, \prc, \nc$ are simultaneously diagonalizable as in Example \ref{ex:lm11}. It also holds in non-diagonal settings, in which it is possible to link the domains of powers of the three operators by appropriate embeddings; for some examples see \cite[Section 7]{ALS13}.


\subsection{Proofs Section \ref{sec:IS}}
\label{ssec:PIS} \label{P21}
Throughout we denote by  $\mc$  the empirical random measure 
\begin{equation*}
\mc:=\frac{1}{N}\sum_{n=1}^N \delta_{\un}, \quad \un\sim \propo.
\end{equation*}
We recall that $\mu^N$ denotes the particle approximation of $\tm$ based on sampling from the proposal $\pi.$

\subsubsection{Proof of Theorem \ref{non-asymptotictheorem}}
\label{sssec:t2.1}

\begin{proof}[Proof of Theorem \ref{non-asymptotictheorem}]
For the bias we write 
\begin{align*}
\is(\test) - \tm(\test)&=\frac{1}{\Zn}\mc(\test\rn)   -   \tm(\test)\\
&=\frac{1}{\Zn} \mc\Bigl(\bigl(\test-\tm(\test)\bigr) \rn\Bigr).
\end{align*}
Then, letting $\testc:=\test-\tm(\test)$
and noting that
$$\propo(\testc \rn)=0$$ 
we can rewrite
\begin{equation*}
\is(\test) -  \tm(\test)=\frac{1}{\Zn}\Bigl( \mc(\testc \rn)- \propo( \testc \rn)\Bigr).
\end{equation*}
The first of the terms in brackets is an unbiased estimator of the second one, and so
\begin{align*}
\E \bigl[\is(\test) - \tm(\test)\bigr]&=\E\Biggl[\Bigl(\frac{1}{\Zn}-\frac{1}{\const}\Bigr)\Bigl( \mc(\testc \rn)- \propo( \testc \rn)\Bigr)\Biggr]\\
&=\E\Biggl[\frac{1}{\Zn\const}\Bigl(\const-\Zn\Bigr)\Bigl( \mc(\testc \rn)- \propo( \testc \rn)\Bigr)\Biggr].
\end{align*}

Therefore,   \begin{align*}
\Bigl|\E &\bigl[\is(\test) - \tm(\test)\bigr]\Bigr|\\
& \le \biggl|\E \Bigl[\bigl(\is(\test) - \tm(\test)\bigr)1_{\{2\Zn>\const\}}\Bigr]\biggr| + \biggl|\E \Bigl[\bigl(\is(\test) - \tm(\test)\bigr)1_{\{2\Zn \le \const\}}\Bigr]\biggr|\\
&\le \frac{2}{\const^2}\E\Bigl[\bigl|\const-\Zn\bigr|  \bigl| \mc(\testc \rn)- \propo( \testc \rn)\bigr|\Bigr] + 2\rp\Bigl(2\Zn \le \const\Bigr) \\
&\le \frac{2}{\const^2} \frac{1}{\sqrt{N}}\propo\bigl(\rn^2\bigr)^{1/2} \frac{2}{\sqrt{N}}\propo\bigl(\rn^2\bigr)^{1/2} + 2\rp\Bigl(2\Zn\le \const\Bigr),
\end{align*}where in the second and third inequality we used that $|\test|\leq1$.
Now note that
$$\rp\Bigl(2\Zn\le \const\Bigr)=\rp\Bigl(2(\Zn-\const)\le -\const\Bigr)
\le \rp\Bigl(2|\Zn-\const|\ge \const\Bigr).$$
By the Markov inequality $\rp\Bigl(2\Zn\le \const\Bigr)\le \frac{4}{N}\frac{\propo(\rn^2)}{\const^2},$ and so
\begin{equation*}
\sup_{|\test|\le 1}\Bigl|\E \bigl[\is(\test) - \tm(\test)\bigr]\Bigr| \le \frac{12}{N}\frac{\propo(\rn^2)}{\const^2}.
\end{equation*}
This completes the proof of the result for the bias. For the MSE 
\begin{align}\label{eq:trick}
\is(\test)-\tm(\test) &= \frac{1}{\Zn}\mc(\test\rn) - \frac{1}{\const}\propo(\test\rn)\nonumber \\
&= \biggl( \frac{1}{\Zn} - \frac{1}{\const} \biggr) \mc(\test\rn) -  \frac{1}{\const}\Bigl( \propo(\test\rn)-\mc(\test\rn)     \Bigr) \nonumber\\
&= \frac{1}{\const}\Bigl(\const - \Zn\Bigr) \is(\test) -  \frac{1}{\const}\Bigl( \propo(\test\rn)-\mc(\test\rn)     \Bigr),
\end{align}
and so using the inequality $(a+b)^2 \leq 2(a^2+b^2)$ we obtain
\begin{equation*}
\bigl(\is(\test)-\tm(\test) \bigr)^2 \le \frac{2}{\const^2} \biggl\{ \Bigl(\const - \Zn\Bigr)^2 \is(\test)^2 + \Bigl( \propo(\test\rn)-\mc(\test\rn)     \Bigr)^2   \biggr\}. 
\end{equation*}
Therefore, for $|\test|\le 1,$
\begin{align*}
\E \left[  \bigl(\is(\test)-\tm(\test) \bigr)^2  \right]& \le  \frac{2}{\const^2}  \Biggl\{ \E\biggl[ \Bigl(\const - \Zn\Bigr)^2 \biggr] + \E\biggl[\Bigl( \propo(\test\rn)-\mc(\test\rn)     \Bigr)^2\biggr]   \Biggr\}  \\
&= \frac{2}{\const^2}  \Bigl\{  \Var_{\propo}\bigl(\Zn\bigr)  + \Var_{\propo}\bigl(\mc(\test\rn)\bigr)  \Bigr\}  \\
&\le  \frac{2}{N\const^2}  \Bigl\{  \propo\bigl(\rn^2\bigr) +  \propo\bigl(\test^2 \rn^2\bigr) \Bigr\} \\
&\le \frac{4}{N}\frac{\propo\bigl(\rn^2\bigr)}{\const^2},
\end{align*}
and the proof is complete.
\end{proof}
\begin{remark}
The constant $12$ for the bias can be somewhat  reduced by using in the proof the indicator $1_{\{a\Zn \le \const\}}$ instead of $1_{\{2\Zn \le \const\}}$ and optimizing over $a>0$. Doing this yields the constant $C\approx 10.42 $ rather than $C=12.$
\end{remark}

\subsubsection{Proof of Theorem \ref{thm:nonasmom}} \label{P23}
The proof of the MSE part of Theorem \ref{thm:nonasmom} uses the approach of \cite{DL09} for calculating moments of ratios of estimators. The proof of the bias part is very similar to the proof of the bias part of Theorem \ref{non-asymptotictheorem}.

In order to estimate the MSE, we use \cite[Lemma 2]{DL09} which in our setting becomes:
\begin{lemma}\label{dlem1}
For $0<\theta<1$, it holds 
\begin{align*}
\bigl|\is(\test)-\tm(\test)\bigr|\leq \frac{|\mc(\test\rn)-\propo(\test\rn)|}{\const}+&\frac{|\mc(\test\rn)|}{\Zpi^2}|{\Zn}-\Zpi| \\
&\quad\quad+\max_{1\leq n\leq N}|\test(\un)|\frac{|{\Zn}-\Zpi|^{1+\theta}}{\Zpi^{1+\theta}}.
\end{align*}

\end{lemma}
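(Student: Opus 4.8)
The plan is to prove the lemma by reducing the error $\is(\test)-\tm(\test)$ to an \emph{exact} algebraic decomposition in the four scalars $\mc(\test\rn),\propo(\test\rn),\Zn,\const$, and then absorbing the fluctuation of the normalizing constant $\const-\Zn$ into the last two terms. The only probabilistic inputs I would use are two structural facts. First, writing $\is(\test)=\mc(\test\rn)/\Zn$ and $\tm(\test)=\propo(\test\rn)/\const$, the estimator is a convex combination $\is(\test)=\sum_{n=1}^N\wn\test(\un)$ with nonnegative weights summing to one, whence
\begin{equation*}
|\is(\test)|\le \max_{1\le n\le N}|\test(\un)|\eqcolon M.
\end{equation*}
Second, since $\rn\ge 0$ we have $\Zn\ge 0$ and the elementary relation $\mc(\test\rn)=\is(\test)\,\Zn$. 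Everything else is deterministic manipulation on the event where the estimator is defined (i.e.\ $\Zn>0$).

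The core step I would carry out first is the identity
\begin{equation*}
\is(\test)-\tm(\test)=\frac{\mc(\test\rn)-\propo(\test\rn)}{\const}+\is(\test)\,\frac{\const-\Zn}{\const},
\end{equation*}
obtained by adding and subtracting $\mc(\test\rn)/\const$ and clearing the common denominator $\Zn\const$. Taking absolute values already yields the first term of the lemma, so the whole problem reduces to controlling $|\is(\test)|\,|\const-\Zn|/\const$ by the remaining two terms. For this I would establish the auxiliary bound
\begin{equation*}
|\is(\test)|\,\frac{|\const-\Zn|}{\const}\le \frac{|\mc(\test\rn)|}{\const^{2}}\,|\Zn-\const|+M\,\frac{|\Zn-\const|^{1+\theta}}{\const^{1+\theta}},
\end{equation*}
after which combining the two displays (and using nonnegativity of all three terms) gives the statement.

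To prove the auxiliary bound I would discard the trivial case $\Zn=\const$ and otherwise divide through by $|\const-\Zn|/\const$; using $|\mc(\test\rn)|=|\is(\test)|\,\Zn$ this reduces the claim to
\begin{equation*}
|\is(\test)|\,\frac{\const-\Zn}{\const}\le M\left(\frac{|\const-\Zn|}{\const}\right)^{\theta}.
\end{equation*}
When $\Zn\ge\const$ the left-hand side is nonpositive and there is nothing to prove; when $\Zn<\const$ the ratio $(\const-\Zn)/\const$ lies in $(0,1]$, so that $x\le x^{\theta}$ for $\theta\in(0,1)$, together with $|\is(\test)|\le M$, closes the estimate.

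The one genuinely delicate point is the fractional exponent $1+\theta$, which is precisely what will later make this remainder higher order in the moment estimates. A naive first-order Taylor expansion of the ratio leaves a quadratic remainder proportional to $(\const-\Zn)^2/\const^2$, and this can be folded into the $(1+\theta)$-power term only when $|\const-\Zn|\le\const$ (equivalently $\Zn\le 2\const$); the complementary regime $\Zn>2\const$ is exactly where a direct quadratic bound breaks down. The device above avoids that case split altogether: keeping the first-order identity intact and invoking $|\is(\test)|\le M$ with the monotonicity of $x\mapsto x^{\theta}$ handles both regimes simultaneously, because for $\Zn\ge\const$ the offending quantity is automatically nonpositive. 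I expect the verification of the exact identity and of the reduction $|\mc(\test\rn)|=|\is(\test)|\,\Zn$ to be the parts most prone to sign slips, and I would double-check them by clearing denominators once more.
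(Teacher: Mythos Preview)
Your proof is correct. The paper does not actually prove this lemma; it is quoted as \cite[Lemma~2]{DL09} and used without argument, so there is no in-paper proof to compare against. Your self-contained derivation---starting from the exact identity $\is(\test)-\tm(\test)=[\mc(\test\rn)-\propo(\test\rn)]/\const+\is(\test)\,(\const-\Zn)/\const$ (which is precisely the identity \eqref{eq:trick} used elsewhere in the paper), then rewriting $\mc(\test\rn)=\is(\test)\,\Zn$ and invoking $|\is(\test)|\le M$ together with $x\le x^{\theta}$ on $(0,1]$---is clean and avoids any explicit case split on the size of $|\Zn-\const|$.
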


The main novelty of the above lemma compared to the bounds we used in the proof of Theorem \ref{non-asymptotictheorem}, is not the bound on $\test$ using the maximum, but rather the introduction of $\theta\in(0,1)$. This will be apparent in the proof of Theorem \ref{thm:nonasmom} below.

We also repeatedly use  H\"older's inequality in the form \[\E\bigl[|uv|^s\bigr]\leq\E\bigl[|u|^{sa}\bigr]^{\frac1{a}}\E\bigl[|v|^{sb}\bigr]^{\frac1{b}},\] for any $s>0$ and for $a,b>1$ such that $\frac1a+\frac1b=1$, as well as the Marcinkiewicz-Zygmund inequality \cite{RL01}, which for centered  i.i.d. random variables $X_n$ gives  
\[\E\left[\Bigl|\sumn X_n\Bigr|^t \right]\leq C_tN^{\frac{t}2}\E\bigl[|X_1|^t \bigr], \quad \forall t\geq2.\]
There are known bounds  on the constants, namely  $C_t^\frac1t\leq t-1$, \cite{RL01}. We apply this inequality in several occasions with $X_n=h(\un)-\propo(h)$ for different functions $h$, in which case we get 
\begin{align}\label{eq:MZc}
\E\Bigl[\bigl|\mc(h)-\propo(h)\bigr|^t \Bigr]\leq C_t\E\Bigl[\bigl|h(u^1)-\propo(h)\bigr|^t \Bigr]N^{-\frac{t}2}, \quad \forall t\geq 2.
\end{align}
We are now ready to prove Theorem \ref{thm:nonasmom}.


\begin{proof}[Proof of Theorem \ref{thm:nonasmom}]
We first prove the MSE part. 
By Lemma \ref{dlem1} we have that 
\[\E\Bigl[\bigl(\is(\test)-\tm(\test)\bigr)^2\Bigr]\leq 3A_1+3A_2+3A_3,\]
where $A_1, A_2, A_3$ correspond to the second moments of the three terms respectively.
\begin{enumerate}
\item For the first term we have \[A_1=\frac1{\const^2}\E\biggl[\Bigl(\mc(\test\rn)-\propo(\test\rn)\Bigr)^2\biggr]\leq \frac{1}{\const^2}\E\biggl[\Bigl(\test(u^1)\rn(u^1)-\propo(\test\rn)\Bigr)^2\biggr]N^{-1}.\]

\item For the second term, H\"older's inequality gives 
\begin{align*}
A_2&=\frac{1}{\const^4}\E\Bigl[\bigl|\mc(\test\rn)\bigl(\Zn-\const\bigr)\bigr|^2\Bigr]\\
&\leq \frac1{\const^4}\E\Bigl[\bigl|\mc(\test\rn)\bigr|^{2d}\Bigr]^\frac1d\E\Bigl[\bigl|\Zn-\const\bigr|^{2e}\Bigr]^\frac1e,
\end{align*}
where $\frac1d+\frac1e=1$. Use of the triangle inequality yields
\begin{align*}
\E\Bigl[\bigl|\mc(\test\rn)\bigr|^{2d}\Bigr]^\frac1d&=\frac{1}{N^2}\E\biggl[\Bigl|\sumn \test(\un)\rn(\un)\Bigr|^{2d}\biggr]^\frac1d \\
&\leq \propo\bigl(|\test\rn|^{2d}\bigr)^\frac1d.
\end{align*}
Combining with \eqref{eq:MZc} (note that $t=2e>2$) we get 
\[A_2\leq \frac{1}{\Zpi^4}\propo\bigl(|\test\rn|^{2d}\bigr)^\frac1d{C}_{2e}^\frac1e\E\Bigl[\bigl|\rn(u_1)-\propo(\rn)\bigr|^{2e}\Bigr]^\frac1eN^{-1}.\]

\item By H\"older we have 
\begin{align*}A_3&=\frac{1}{\const^{2(1+\theta)}}\E\Biggl[\max_{1\leq n\leq N}|\test(\un)|^2\bigl|\const-\Zn\bigr|^{2(1+\theta)}\Biggr]\\
&\leq \frac{1}{\const^{2(1+\theta)}}\E\Biggl[\max_{1\leq n\leq N}\bigl|\test(\un)\bigr|^{2p}\Biggr]^\frac1p\E\Biggl[\bigl|\const-\Zn\bigr|^{2q(1+\theta)}\Biggr]^\frac1{q},
\end{align*}
where $\frac1p+\frac1q=1$. 
Note that 
\[\E\Biggl[\max_{1\leq n\leq N}\bigl|\test(\un)\bigr|^{2p}\Biggr]^\frac1p\leq \E\Biggl[\sumn\bigl|\test(\un)\bigr|^{2p}\Biggr]^\frac1p=N^\frac1p\propo\bigl(|\test|^{2p}\bigr)^\frac1p.\]
Combining with \eqref{eq:MZc}, with  $t_\theta=2q(1+\theta)>2$, we get 
\[A_3\leq \frac{1}{\const^{2(1+\theta)}}N^\frac1p\propo\bigl(|\test|^{2p}\bigr)^\frac1p{C}_{t_\theta}^\frac1q\E\Bigl[\bigl|\rn(u^1)-\propo(\rn)\bigr|^{t_\theta}\Bigr]^\frac1qN^{-1-\theta}.\]
Now choosing $\theta=\frac1p\in(0,1)$ gives the desired order of convergence
\[A_3\leq \frac{1}{\const^{2(1+\frac1p)}}\propo(|\test|^{2p})^\frac1p{C}_{2q(1+\frac1p)}^\frac1q\E\Bigl[\bigl|\rn-\propo(\rn)\bigr|^{2q(1+\frac1p)}\Bigr]^\frac1qN^{-1}.\] 
\end{enumerate}

This completes the proof of the MSE part. For the bias, as in the proof of Theorem \ref{non-asymptotictheorem} we have
 \begin{align*}
 \Bigl|\E &\bigl[\is(\test) - \tm(\test)\bigr]\Bigr|\\
&\le \frac{2}{\const^2}\E\biggl[\Bigl|\const-\Zn\Bigr|  \Bigl| \mc(\testc \rn)- \propo( \testc \rn)\Bigr|\biggr] +  \biggl|\E \Bigl[\bigl(\is(\test) - \tm(\test)\bigr)1_{\{2\Zn \le \const\}}\Bigr]\biggr|,
\end{align*}
where $\testc=\test-\tm(\test)$. Using the Cauchy-Schwarz inequality
we obtain
 \begin{align*}
 \Bigl|\E &\bigl[\is(\test) - \tm(\test)\bigr]\Bigr|&\\
&\leq \frac{2}{\const^2}\E\Bigl[\bigl|\const-\Zn\bigr|^2\Bigr]^\frac12\E\Bigl[\bigl| \mc(\testc \rn)- \propo( \testc \rn)\bigr|^2\Bigr]^\frac12&\\ &\quad\quad\quad\quad\quad\quad+ \E\Bigl[\bigl(\is(\test)-\tm(\test)\bigr)^2\Bigr]^\frac12 \rp\Bigl(2\Zn\le \const\Bigr)^\frac12&\\
&\leq\frac{2}{\const^2}\frac{1}{N}\E\Bigl[\bigl|\rn(u^1)-\propo(\rn)\bigr|^2\Bigr]^\frac12\E\Bigl[\bigl|\testc(u^1)\rn(u^1)-\propo(\testc\rn)\bigr|^2\Bigr]^\frac12 + \frac{C_{\rm MSE}^\frac12}{N^\frac12}\frac{2}{N^\frac12}\frac{\propo(\rn^2)^\frac12}{\const},&
\end{align*}
where to bound the probability of $2\Zn\leq \const$ we use 
the Markov inequality similarly as in 
the analogous part of the 
proof of Theorem \ref{non-asymptotictheorem}.
\end{proof}

\subsection{Proofs Section \ref{sec:BIP}} \label{ap:bip}
We next state a lemma  collecting several useful properties of the trace of linear operators. A compact linear operator $T$ is said to belong in the trace class family if its singular values $\{\sigma_i\}_{i=1}^\infty$ are summable. In this case we write $\tr(T)=\sum_{i=1}^\infty \sigma_i$, while for notational convenience we define the trace even for operators that are not trace class, with infinite value.  $T$ is said to belong in the Hilbert-Schmidt family, if its singular values are square summable (equivalently if $T^\ast T$ is trace class). 

\begin{lemma}\label{lem:tr}
Let $T$ be an operator on a Hilbert space $\h$. Suppose for the next three items that $T$ is trace class. Then
\begin{enumerate}
\item[i)]$\tr(T^\ast)=\overline{\tr(T)}$. In particular, if the eigenvalues of $T$ are real then $\tr(T^\ast)=\tr(T)$;
\item[ii)]for any bounded operator $B$ in $\h$, $\tr(TB)=\tr(BT)$ and this assertion also holds if $T$ and $B$ are Hilbert-Schmidt;
\item[iii)]for any bounded operator $B$ in $\h$, $\tr(TB)=\tr(BT)\leq\norm{B}\tr(T)$.
\end{enumerate}

For any bounded linear operator $T$, it holds that
\begin{enumerate}
\item[iv)]  $\tr(T^\ast T) = \tr(TT^\ast),$
\end{enumerate}
where if $T$ (equivalently $T^\ast$) is not Hilbert-Schmidt, we define the trace to be $+\infty$. 

If $T$ is a linear operator and $P$ is bounded and positive definite, such that $TP^{-1}$ (equivalently $P^{-\frac12}TP^{-\frac12}$ or $P^{-1}T$) is bounded, it holds that 
\begin{enumerate}
\item[v)]$\tr(TP)=\tr(P^{\frac12}TP^{\frac12})=\tr(PT),$
\end{enumerate}
where as in (iv) we allow infinite values of the trace.

Finally, suppose that $D_1$ is positive definite and $D_2$ is positive semi definite, and that $T$ is self adjoint and bounded in $\h$. Furthermore, assume that $D_1^{-1}T$ and $(D_1+D_2)^{-1}T$ have  eigenvalues. Then 
\begin{enumerate}
\item[vi)]  $\tr(D_1^{-1} T) \ge \tr\bigl( (D_1+D_2)^{-1} T).$
\end{enumerate}
\end{lemma}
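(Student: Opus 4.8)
The plan is to fix an orthonormal basis $\{e_j\}$ of $\h$ and to work throughout with the extended trace $\tr(M)=\sum_j\pr{Me_j}{e_j}\in[0,\infty]$ for positive $M$ (basis-independent by Tonelli), extended by linearity to trace-class $M$. Every assertion then reduces either to an absolutely convergent rearrangement of a double sum or to item (iv), which I will use as a master cyclicity identity for the positive (possibly infinite-trace) setting.

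First I would dispatch the routine items (i)--(iv). For (i) I expand $\tr(T^\ast)=\sum_j\pr{T^\ast e_j}{e_j}=\sum_j\overline{\pr{Te_j}{e_j}}=\overline{\tr(T)}$ using the defining property of the adjoint and conjugate symmetry; the consequence for real eigenvalues follows since, by Lidskii's theorem, $\tr(T)$ equals the sum of the eigenvalues and is therefore real. For the Hilbert--Schmidt case of (ii) I insert the identity $\sum_k\pr{\cdot}{e_k}e_k$ to obtain $\tr(TB)=\sum_{j,k}\pr{Be_j}{e_k}\pr{Te_k}{e_j}=\tr(BT)$, the interchange of summation order being licensed by absolute convergence (Cauchy--Schwarz applied to the square-summable matrix entries of two Hilbert--Schmidt factors). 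For $T$ trace class and $B$ bounded I then use the polar decomposition $T=U|T|$ and the factorization $|T|=|T|^{1/2}|T|^{1/2}$, so that $T=(U|T|^{1/2})|T|^{1/2}$ is a product of two Hilbert--Schmidt operators; absorbing $B$ into one factor and applying the Hilbert--Schmidt case shows that $\tr(TB)$ and $\tr(BT)$ share the common value $\tr(|T|^{1/2}BU|T|^{1/2})$. Item (iii) inherits the equality from (ii), while for the bound (which requires $T\ge0$, the case used in the paper) I diagonalize $T=\sum_i\lambda_i\pr{\cdot}{e_i}e_i$ and estimate $\tr(BT)=\sum_i\lambda_i\pr{Be_i}{e_i}\le\norm{B}\sum_i\lambda_i=\norm{B}\tr(T)$, using $\lambda_i\ge0$ and $\pr{Be_i}{e_i}\le\norm{B}$. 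For (iv) I compute in any basis $\tr(T^\ast T)=\sum_j\norm{Te_j}^2=\sum_{j,k}|\pr{Te_j}{e_k}|^2$ and $\tr(TT^\ast)=\sum_k\norm{T^\ast e_k}^2=\sum_{j,k}|\pr{T^\ast e_k}{e_j}|^2$; since $\pr{Te_j}{e_k}=\pr{e_j}{T^\ast e_k}$, the two double sums of nonnegative terms agree by Tonelli, sharing the value $+\infty$ exactly when $T$ fails to be Hilbert--Schmidt.

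The substance lies in (v) and (vi), where the operators need be neither trace class nor positive and the inverses may be unbounded. For (v) I would set $R:=P^{-1/2}TP^{-1/2}$, bounded by hypothesis, so that $T=P^{1/2}RP^{1/2}$ and $P^{1/2}TP^{1/2}=PRP$. In the case $T\ge0$ relevant to the applications $R\ge0$, and applying (iv) to $W:=R^{1/2}P$ (so that $W^\ast W=PRP$ and $WW^\ast=R^{1/2}P^2R^{1/2}$) identifies $\tr(P^{1/2}TP^{1/2})=\tr(R^{1/2}P^2R^{1/2})$. The remaining two expressions reduce to the same value: $TP=P^{1/2}RP^{3/2}$ and $PT=P^{3/2}RP^{1/2}$ are conjugate via $P^{\mp1/2}$ to $RP^2$ and $P^2R$, whose traces coincide with $\tr(R^{1/2}P^2R^{1/2})$ by a further cyclic reduction; equivalently, being similar to $P^{1/2}TP^{1/2}$ they share its nonnegative spectrum, while $\tr(PT)=\overline{\tr(TP)}$ by (i). For (vi), again taking $T\ge0$, I invoke operator monotonicity of inversion: from $0<D_1\le D_1+D_2$ one gets $(D_1+D_2)^{-1}\le D_1^{-1}$, so $S:=D_1^{-1}-(D_1+D_2)^{-1}\ge0$; then $\tr(S^{1/2}TS^{1/2})\ge0$ by positivity, and the identity $\tr(ST)=\tr(S^{1/2}TS^{1/2})$ (from (iv)/(v)) combined with additivity of the extended trace on the positive pieces gives $\tr(D_1^{-1}T)=\tr((D_1+D_2)^{-1}T)+\tr(ST)\ge\tr((D_1+D_2)^{-1}T)$.

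The main obstacle will be the rigorous bookkeeping in (v) and (vi): because $P^{-1/2}$, $S^{1/2}$ and the various inverses are only densely defined, I cannot simply quote the finite-dimensional cyclic property but must justify the conjugations, the reductions to (iv), and the additivity of the trace \emph{within} the extended $[0,\infty]$ convention. I would make this precise either by truncating $P$ (and $S$) to finite-rank spectral approximations and passing to the limit monotonically, or by verifying directly that each manipulated operator admits the quadratic-form representation underlying (iv). By contrast, items (i)--(iv) reduce to absolutely convergent rearrangements and require no such care.
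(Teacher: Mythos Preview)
Your treatment of (i)--(iv) is correct and essentially what the paper has in mind; the paper simply cites Lax's textbook for these standard facts, whereas you spell out the basis computations and the Hilbert--Schmidt factorization behind cyclicity. For (v) the two arguments are close in spirit: the paper argues that conjugating $TP$ by $P^{\pm 1/2}$ does not change the eigenvalues and hence not the trace (invoking the infinite-dimensional similarity theory of Arazy--Hansen--Varga), which is exactly the ``similarity'' step you also appeal to; your additional reductions via (iv) are a slightly different packaging of the same idea. You are right to flag that both arguments tacitly use ``trace $=$ sum of eigenvalues'', which in the extended $[0,\infty]$ convention is justified precisely in the positive (or trace-class) case; this is also the only case used later in the paper.

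For (vi) your route is genuinely different from the paper's. The paper proves the stronger statement that the ordered eigenvalues of $(D_1+D_2)^{-1}T$ are dominated one by one by those of $D_1^{-1}T$, by recasting them as generalized eigenvalue problems $Tv=\lambda D_i v$ and comparing the Rayleigh quotients
\[
\frac{\pr{x}{Tx}}{\pr{x}{D_1 x}}\ \ge\ \frac{\pr{x}{Tx}}{\pr{x}{(D_1+D_2)x}}
\]
via the Courant--Fischer min--max principle. Your approach instead uses operator monotonicity of inversion, $D_1\le D_1+D_2\Rightarrow(D_1+D_2)^{-1}\le D_1^{-1}$, and then positivity of $\tr\bigl(S^{1/2}TS^{1/2}\bigr)$ with $S=D_1^{-1}-(D_1+D_2)^{-1}$. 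Your argument is shorter and avoids the min--max machinery, at the cost of not giving the eigenvalue-by-eigenvalue comparison; for the applications in the paper (Theorem~\ref{theoremcomparisoncollapse}) only the trace inequality is needed, so nothing is lost. Note, finally, that both arguments for (vi) implicitly require $T\ge 0$: the displayed Rayleigh-quotient inequality in the paper reverses when $\pr{x}{Tx}<0$, and your positivity step of course needs it too. You are explicit about this restriction, and since $T=HMPM^\ast H^\ast\ge 0$ in the application, it is harmless.
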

\begin{proof}
The proofs of parts (i)-(iii) can be found in \cite[Section 30.2]{PDL02}, while (iv) is an exercise in \cite[Section 30.8]{PDL02}. Part (v) can be shown using the infinite-dimensional analogue of matrix similarity, see \cite[Section 2]{AHV82}. In particular, if we multiply $TP$ to the left by $P^{1/2}$ and to the right by $P^{-1/2}$, we do not change its eigenvalues hence neither its trace, so $\tr(TP)=\tr(P^{\frac12}TP^{\frac12})$. Similarly, if we multiply $TP$ to the left by $P$ and to the right by $P^{-1}$, we get $\tr(TP)=\tr(PT)$.
Part (vi) follows from the stronger fact that the ordered eigenvalues of $D_1^{-1} T$ are one by one bounded by the ordered eigenvalues of $(D_1+D_2)^{-1} T$. This in turn can be established using that the eigenvalues of these operators are determined by the generalized eigenvalue problem $Tv = \lambda D_1v$ and $Tv = \lambda (D_1+D_2) v,$ with associated Rayleigh quotients
\begin{equation}
\frac{\langle x, Tx \rangle}{\langle x, D_1 x\rangle} \ge \frac{\langle x, Tx \rangle}{\langle x, (D_1+D_2) x\rangle},
\end{equation}
and an application of the Rayleigh-Courant-Fisher theorem (see \cite{PDL02} and \cite{RS78}).
\end{proof} 
\label{sec:PBIP}

\subsubsection{Proofs of subsection \ref{ssec:ess2}} \label{PSS32}

\begin{proof} [Proof of Proposition \ref{prop:new}]

We give the proof for the final dimensional case. For the extension to infinite dimensions see Remark \ref{rem:propinfdim} below.
Under the given assumptions, expression \eqref{eq:fdposc} for $C^{-1}$
is well-defined and gives
\begin{equation}
\label{eq:one}
\prc^{\frac12} \posc^{-1}\prc^{\frac12}=I+A.
\end{equation}
Thus
\begin{align*}
\tr(A)&=\tr(\posc^{\frac12} \posc^{-1}\prc^{\frac12}-I)\\ 
&=\tr\bigl(\posc^{\frac12}(\posc^{-1}-\prc^{-1})\prc^{\frac12}\bigr)\\
&=\tr\bigl((\posc^{-1}-\prc^{-1})\prc\bigr),
\end{align*}
where the last equality is justified using the cyclic property of the trace, Lemma \ref{lem:tr}(ii). For the second identity, since $(I+A)^{-1}A=I-(I+A)^{-1}$, we have again by (\ref{eq:one})  \begin{align*}
\tr((I+A)^{-1}A)&=\tr\Bigl(I-(I+A)^{-1}\Bigr)\\
&= \tr\Bigl(I - \prc^{-1/2}
\posc \prc^{-1/2}\Bigr)\\
&= \tr \Bigl( \prc^{-1/2}
( \prc - \posc)  \prc^{-1/2}\Bigr) 
\\
&=\tr\Bigl( (\prc - \posc) \prc^{-1}\Bigr),
\end{align*}
where the last equality is again justified via the cyclic property of the trace. 
\end{proof}

\begin{remark}\label{rem:propinfdim}
Proposition \ref{prop:new} also holds in the Hilbert space setting, but requires formula \eqref{eq:fdposc} for the precision operator of the posterior is justified see Remark \ref{remarkfininf} and \cite[Section 5]{ALS13}.
Indeed, the proofs of the two identities are almost identical to the finite dimensional case, the only difference being in the justification of the last equalities in the two sequences of equalities above. In this case the two trace-commutativity equalities have to be justified using Lemma \ref{lem:tr}(v) rather than Lemma \ref{lem:tr}(ii).  In the first case, Lemma \ref{lem:tr}(v) can be applied, since $A=\prc^{\frac12}(\posc^{-1}-\Sigma^{-1})\prc^\frac12$ is bounded by Assumption \ref{Generalinvprobassumption}, and $\prc$ is assumed to be positive definite and bounded.
In the second case, Lemma \ref{lem:tr}(v) can be applied, since by Assumption \ref{Generalinvprobassumption} the operator $(I+A)^{-1}A$ is bounded, and $\prc$ is bounded and positive definite.
\end{remark}

\begin{proof} [Proof of Proposition \ref{prop:efd}]

\begin{enumerate}
\item We have that $(v_i,\mu_i)$ is an eigenvector/value pair of the first matrix
  if and only if $(\Gamma^{-1/2} v_i,\mu_i)$ is of the second. It is also
  immediate that $(v_i,\mu_i)$ is a pair for the second if and
  only if 
  $(\auxm^\ast v_i,\mu_i)$ is for $A(I + A)^{-1}$. However, it is
  also easy to check that $A(I + A)^{-1} = (I +
  A)^{-1}A$. 
\item In view of the above, note that $(v_i,\mu_i)$ is a pair for $(
  I+A)^{-1}A$ if an only if $(v_i, \mu_i / (1-\mu_i))$
  is for $A$. Hence, if $\lambda_i$ is an eigenvalue of $A$,
  $\lambda_i/(1+\lambda_i)$ is one for the other matrices. Given
  that this is always less or equal to 1 and the $\effd$ is a trace of
  either $d_y \times d_y$ or $d_u \times d_u$ matrices, the inequality
  follows immediately.
\end{enumerate}
\end{proof}

\begin{proof} [Proof of Lemma \ref{propequivefd}]
 If $A$ is trace class then it is compact and since it is also self-adjoint and nonnegative it can be shown (for example using the spectral representation of $A$) that $\norm{(I+A)^{-1}}\leq 1.$ Then Lemma \ref{lem:tr}(iii) implies that 
\[\tr\bigl((I+A)^{-1}A\bigr)\leq \tr(A).\]

Assume now that $(I +
 A )^{-1} A$  is trace class. Then $A$ is too since it is the product
 of the bounded operator $ I + A$ and the trace class operator $(I +
 A )^{-1} A$, see again Lemma \ref{lem:tr}(iii). In particular, 
\[
\tr(A)  \leq \|I + A\| \tr\bigl(( I +
 A )^{-1} A\bigr).
\]
\end{proof}

\subsubsection{Proofs of subsection \ref{ssec:con2}}  \label{PSS33}

 \begin{proof} [Proof of Theorem \ref{maintheorem}]

$i) \Leftrightarrow ii)$ is immediate from Lemma \ref{propequivefd}. \\
$ii) \Leftrightarrow iii)$ It holds that $\nc^{-\frac12}\fp u\sim N(0, \nc^{-\frac12}\fp\prc \fp^\ast\nc^{-\frac12})$ since $\nc^{-\frac12}\fp u$ is a linear transformation of the Gaussian $u\sim\prior=N(0,\prc)$. 
By Lemma \ref{propositionappendix}  and since $A$ has eigenvalues, we hence have that $\nc^{-\frac12}\fp u\in \h$ if and only if $\tr(\nc^{-\frac12}\fp\prc\fp^\ast\nc^{-\frac12})<\infty$. \\

$iii) \Rightarrow iv)$ 
According to the discussion in subsection \ref{ssec:Gaussian} on the absolute continuity of two Gaussian measures with the same covariance but different means, the Gaussian likelihood measure $\like=\G(\fp u,\nc)$ and the Gaussian noise measure $\noise=\G(0,\nc)$ are equivalent if and only if $\nc^{-\frac12}\fp u\in\h$. Under $iii)$, we hence have that $\like$ and $\noise$ are equivalent for $\propo$-almost all $u$ and under the Cameron-Martin formula \cite{DPZ92} for $\propo$-almost all $u$ we have
\begin{equation*}
\frac{d\like}{d\noise}(y)= \exp\left(-\frac{1}{2}\norm{{\nc^{-1/2}}\fp u}^2+ \pr{{\nc^{-1/2}}y}{{\nc^{-1/2}}\fp u}\right)=:\rnip(u;y).
\end{equation*}
Defining the measure $\nu_0(u,y):=\propo(u)\times\noise(y)$ in $\X\times \Y$, we then immediately have that  
\begin{equation*}
\frac{d\nu}{d{\nu_0}}(u,y)= \rnip(u;y),
\end{equation*}
where $\nu$ is the joint distribution of $(u,y)$ under the model
$y=Ku+\eta$ with $u$ and $\eta$ independent Gaussians $\G(0,\Sigma)$
and $\G(0,\Gamma)$ respectively.

We next show that $\propo(g(\cdot;y))>0$ for $\rp_\eta$-almost all $y$, which will in turn enable us to use a standard conditioning result to get that the posterior is well defined and absolutely continuous with respect to the prior. Indeed, it suffices to show that $g(u;y)>0$ $\nu_0$-almost surely. Fix $u\sim\propo$. Then, as a function of $y\sim\noise$ the negative exponent of $g$ is distributed as $\G(\frac12\|\nc^{-\frac12}\fp u\|^2, \|\nc^{-\frac12}\fp u\|^2)$ where $\|\nc^{-\frac12}\fp u\|^2<\infty$ with $\propo$ probability $1$. Therefore, for $\nu_0$-almost all $(u,y)$ the exponent is finite and thus $g$ is $\nu_0$-almost surely positive implying that $\propo(g(\cdot;y))>0$ for $\noise$-almost all $y$. Noticing that the equivalence of $\nu$ and $\nu_0$ implies the equivalence of the marginal distribution of the data under the model, $\mar$, with the noise distribution $\noise$, we get that $\propo(g(\cdot;y))>0$ for $\mar$-almost all $y$.
Hence, we can apply Lemma 5.3 of \cite{hairer2007analysis}, to get that the posterior measure $\post(\cdot)=\nu(\cdot|y)$ exists $\mar$-almost surely and is given by 
\begin{equation*}
\frac{d\tm}{d\propo}(u)=\frac{1}{\const}\exp\left(-\frac{1}{2\ns}\norm{{\nc^{-1/2}}\fp u}^2+\frac{1}{\ns}\pr{{\nc^{-1/2}}y}{{\nc^{-1/2}}\fp u}\right).
\end{equation*} 

Finally, we note that since $\frac{d\nu}{d\nu_0}=\rnip$, we have that $\int_{\X\times\Y} g\,d\nu_0(u,y)=1.$ Thus the Fubini-Tonelli theorem implies that $\propo(g(\cdot;y))<\infty$ for $\noise$-almost all $y$ 
and hence also for $\mar$-almost all $y$.

$iv) \Rightarrow ii)$ Under $iv)$ we have that the posterior measure $\tm$ which, as discussed in subsection \ref{ssec:gen2}, is Gaussian with mean and covariance given by \eqref{eq:posm} and \eqref{eq:posc}, is $y$-almost surely absolutely continuous with respect to the prior $\propo=N(0,\prc)$. By the Feldman-Hajek theorem \cite{DPZ92}, we hence have that $y$-almost surely the posterior mean lives in the common Cameron-Martin space of the 
two measures. This common Cameron-Martin space is the image space of
$\Sigma^{\frac12}$ in $\h$. Thus we deduce that 
$w:=\prc^{-\frac12}\prc\fp^\ast (\fp \prc\fp^\ast +\nc)^{-1}y\in\h$ almost surely. We next observe that, under $\nu$,
$\Gamma^{-\frac12}y \sim \G(0,SS^*+I)$.  Furthermore
$$w=S^*(SS^*+I)^{-1}\Gamma^{-\frac12}y,$$
thus under $\nu$,
$w\sim N(0, S^\ast(SS^\ast+I)^{-1}S)$ where $S$ is defined in Assumption \ref{Generalinvprobassumption}. Using Lemma \ref{propositionappendix}, 
we thus get that $iv)$ implies that $S^\ast(SS^\ast+I)^{-1}S$ is trace class. Using Lemma \ref{lem:tr}(iv) with $T=(SS^\ast +I)^{-\frac12}S$, 
we then also get that $(SS^\ast +I)^{-\frac12}S S^\ast(SS^\ast +I)^{-\frac12}$ is trace class. Since $(SS^\ast +I)^{\frac12}$ is bounded, using Lemma \ref{lem:tr}(iii) twice we get that $S S^\ast$ is trace class. Finally, again using Lemma \ref{lem:tr}(iv) we get that $S^\ast S$ is trace class, thus ii) holds.
\end{proof}

\subsubsection{Proofs of subsection \ref{ssec:sin2}}   \label{PSS34}

The scalings of $\tau$ and $\effd$ can be readily deduced by comparing the sums defining $\tau$ and $\effd$ with integrals:
\begin{equation*}
\tau(\beta,\gamma,d)\approx \frac{1}{\gamma}\int_1^d \frac{1}{x^\beta} \, dx, \quad \quad \effd \approx \int _1^d \frac{1}{1 +\gamma x^\beta} = \gamma^{-1/\beta} \int_\gamma^{d\gamma^{1/\beta}} \frac{1}{1+y^\beta}\, dy.
\end{equation*} 

Our analysis of the sensitivity of $\rho=\rho(\beta,\gamma,d)$ to the model parameters relies in the following expression for $\rho,$ which is valid unless the effective dimension is infinite, i.e. unless $d=\infty,\, \beta\le 1.$

In the next result, and in the analysis that follows, we ease the notation by using subscripts to denote the coordinate of a vector. Thus we write, for instance, $y_j$ rather than $y(j).$ 
\begin{lemma}\label{lem:nconst}
Under Assumption  \ref{ass1} 
\begin{equation}\label{formulaforrho}
\rho=\rho(\beta,\gamma,d):=\prod_{j=1}^d \frac{\frac{j^{-\beta}}{\gamma} + 1}{\sqrt{2\frac{j^{-\beta}}{\gamma} + 1}}  \exp\Biggl(  \sum_{j=1}^d \biggl(   \frac{2}{2+\gamma j^\beta} - \frac{1}{1+\gamma j^\beta}    \biggr)\frac{ y_j^2}{\gamma}  \Biggr),
\end{equation}
which is finite for $\mar$-almost all $y.$
\end{lemma}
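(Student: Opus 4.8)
The plan is to exploit the fact that, under Assumption \ref{ass1}, the whole inverse problem decouples in the common eigenbasis of $A$ into a countable family of independent one-dimensional problems, so that $\rho$ factorizes into scalar contributions that can be evaluated in closed form. I would work in the eigenbasis $\{e_j\}$ that simultaneously diagonalizes the relevant operators, as in the diagonal setting underlying Assumption \ref{ass1} (cf. Example \ref{ex:lm11}). In these coordinates the prior is the product measure $\propo=\bigotimes_{j}\G(0,\sigma_j^2)$, the data splits as $y_j=\fp_j u_j+\eta_j$ with $\eta_j\sim\G(0,\gamma)$, and by \eqref{eq:fdrn} the unnormalized density factorizes as $g(u;y)=\prod_j g_j(u_j;y_j)$ with $g_j(u_j;y_j)=\exp\big(-\tfrac12\fp_j^2 u_j^2/\gamma+\fp_j u_j y_j/\gamma\big)$, the $j$-th eigenvalue of $A$ being $\lambda_j=\sigma_j^2\fp_j^2/\gamma=j^{-\beta}/\gamma$. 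Since both $\propo$ and $g$ factorize, $\const=\prod_j \propo_j(g_j)$ and $\propo(g^2)=\prod_j\propo_j(g_j^2)$, whence $\rho=\prod_j\rho_j$ with $\rho_j=\propo_j(g_j^2)/\propo_j(g_j)^2$.

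The second step is the explicit scalar computation. Completing the square gives, for $u\sim\G(0,\sigma^2)$,
\[
\E\exp\Big(-\tfrac12 a u^2+b u\Big)=(1+\sigma^2 a)^{-1/2}\exp\Big(\frac{b^2}{2(\sigma^{-2}+a)}\Big).
\]
Applying this with $(a,b)=(\fp_j^2/\gamma,\fp_j y_j/\gamma)$ for $\propo_j(g_j)$, and with $(a,b)=(2\fp_j^2/\gamma,2\fp_j y_j/\gamma)$ for $\propo_j(g_j^2)$, and using $\sigma_j^2\fp_j^2=\lambda_j\gamma$, I obtain
\[
\rho_j=\frac{1+\lambda_j}{\sqrt{1+2\lambda_j}}\exp\Big(\frac{\lambda_j\, y_j^2}{\gamma}\Big(\frac{2}{1+2\lambda_j}-\frac{1}{1+\lambda_j}\Big)\Big).
\]
Substituting $\lambda_j=j^{-\beta}/\gamma$ and applying the elementary identities $\lambda_j\cdot\frac{2}{1+2\lambda_j}=\frac{2}{2+\gamma j^\beta}$ and $\lambda_j\cdot\frac{1}{1+\lambda_j}=\frac{1}{1+\gamma j^\beta}$ turns both the prefactor and the exponent into exactly the form in \eqref{formulaforrho}; taking the product over $j=1,\dots,d$ yields the claimed expression.

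Finally, finiteness for $\mar$-almost all $y$ follows directly from Theorem \ref{maintheorem} together with Remark \ref{rem:main}, because the formula is asserted precisely when the effective dimension (equivalently $\tau=\tfrac1\gamma\sum_j j^{-\beta}$) is finite, i.e. when $d<\infty$ or $\beta>1$. If a self-contained verification is wanted, I would check convergence of the two pieces separately: $\log\frac{1+\lambda_j}{\sqrt{1+2\lambda_j}}=\tfrac12\lambda_j^2+O(\lambda_j^3)$ is summable since $\lambda_j^2\asymp j^{-2\beta}$ with $2\beta>2$, while the exponent is a sum of nonnegative terms whose $\mar$-expectation is $\asymp j^{-\beta}$ (using $\E_{\mar}[y_j^2]=\gamma(1+\lambda_j)$ and that the coefficient $\frac{2}{2+\gamma j^\beta}-\frac{1}{1+\gamma j^\beta}=\frac{\gamma j^\beta}{(2+\gamma j^\beta)(1+\gamma j^\beta)}$ behaves like $(\gamma j^\beta)^{-1}$ for large $j$), hence also summable; a series of nonnegative terms with finite expectation converges almost surely. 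I expect the main (though modest) obstacle to be the rigorous justification of the factorization $\rho=\prod_j\rho_j$ in the infinite-dimensional regime, which rests on the product and Karhunen--Lo\`eve descriptions of Gaussian measures recalled in the Supplementary Material.
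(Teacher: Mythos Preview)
Your proposal is correct and follows essentially the same route as the paper: factorize the problem into independent one-dimensional Gaussian integrals in the common eigenbasis, complete the square coordinate-by-coordinate, and take the product. The only cosmetic difference is that the paper first pushes forward through $K$ (writing the integral in the coordinates of $Ku$, whose $j$-th component has variance $j^{-\beta}$), whereas you stay in $u$-coordinates and carry $\sigma_j$ and $K_j$ separately until invoking $\lambda_j=\sigma_j^2K_j^2/\gamma$; the resulting scalar computations are identical. Your treatment of the finiteness claim (via Theorem \ref{maintheorem}/Remark \ref{rem:main}, plus the direct summability check) is in fact more explicit than the paper, which simply derives the formula and leaves finiteness implicit.
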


\begin{proof}[Proof of Lemma \ref{lem:nconst}]
We rewrite the expectation with respect to $\pi$ as an expectation with respect to the law of $Ku$ as follows. Note that here $u_j$ is a dummy integration variable, which represents the $j$-th corrdinate of  $Ku,$ rather than that of $u.$ Precisely, we have: 
\begin{align*}
\propo\bigl(\rn(\cdot,y)\bigr)&=\int_{\X}\rn(u,y)d\propo(u)\\
&=\int_{\R^{\infty}}\exp\left(-\frac{1}{2\gamma}\sumi u_j^2+\frac{1}{\gamma}\sum_{j=1}^d y_ju_j\right)d\left(\bigotimes_{j=1}^d\G\left(0, j^{-\beta}\right)(u_j)\right)\\ 
&=\prod_{j=1}^d \int_{\R}\exp\left(-\frac{1}{2\gamma}u_j^2+\frac{1}{\gamma} y_ju_j\right)\frac{\exp\left(-\frac{ j^{\beta}u_j^2}2\right)}{\sqrt{2\prm j^{-\beta}}}du_j\\
&=\prod_{j=1}^d \frac1{\sqrt{2\prm  j^{-\beta}}}\int_{\R}\exp\left(-(\gamma^{-1} + j^{\beta})\frac{u_j^2}2+\frac{1}{\gamma} y_ju_j\right)du_j\\
&=\prod_{j=1}^d \frac{\exp\left(\frac{\gamma^{-2}  y_j^2}{2(\gamma^{-1} + j^{\beta})}\right)}{\sqrt{2\prm j^{-\beta}}}\int_{\R}\exp\left(-(\gamma^{-1} + j^{\beta})\frac{\left(u_j-\frac{\gamma^{-1} y_j}{\gamma^{-1} + j^{\beta}}\right)^2}2\right)du_j\\
&=\prod_{j=1}^d\sqrt{\frac{ j^{\beta}}{\gamma^{-1} + j^{\beta}}}\exp\Biggl(\frac{\gamma^{-2}  y_j^2}{2(\gamma^{-1} + j^{\beta})}\Biggr)\\
&=\prod_{j=1}^d\sqrt{\frac{\gamma j^\beta}{1+\gamma j^\beta}}\exp\Biggl(\frac{\gamma^{-1} y_j^2}{2(1+\gamma j^\beta)}\Biggr).
\end{align*}

Thus,
 \[
 \propo\bigl(\rn(\cdot,y)\bigr)^2=\prod_{j=1}^d \frac{\gamma j^\beta}{1+\gamma j^\beta}\exp\Biggl(\frac{\gamma^{-1} y_j^2}{1+\gamma j^\beta}\Biggr)
 \]
and
 \[\propo\bigl(\rn(\cdot,y)^2\bigr)=\prod_{j=1}^d\sqrt{\frac{\gamma j^\beta}{2+\gamma j^\beta}}\exp\Biggl(\frac{2\gamma^{-1} y_j^2}{2+\gamma j^\beta}\Biggr),
\]
Taking the corresponding ratio gives the expression for $\rho.$
\end{proof}

\begin{proof}[Analysis of scalings of $\rho$]
Here we show how to obtain the scalings in Table \ref{table:efdtaurhoscalings}. 
Taking logarithms in \eqref{formulaforrho}
\begin{equation}\label{twosums}
\log(\rho)=\sum_{j=1}^d \log\Biggl(  \frac{\frac{j^{-\beta}}{\gamma} + 1}{\sqrt{2\frac{j^{-\beta}}{\gamma} + 1}} \Biggr)    +    \sum_{j=1}^d \biggl(   \frac{2}{2+\gamma j^\beta} - \frac{1}{1+\gamma j^\beta}    \biggr)\gamma^{-1} y_j^2. 
\end{equation}
Note that every term of both sums is positive. In the small noise regimes the first sum dominates, whereas in the large $d$, $\beta \searrow 1$ the second does.
We show here how to find the scaling of $\gamma \to 0$ when $d=\infty.$ 

We have that 
\begin{align*}
\log(\rho)&\ge \sum_{j=1}^\infty \log\Biggl(  \frac{\frac{j^{-\beta}}{\gamma} + 1}{\sqrt{2\frac{j^{-\beta}}{\gamma} + 1}} \Biggr) \\
&\approx \int_1^{f(\gamma)} \log\Biggl(  \frac{\frac{x^{-\beta}}{\gamma} + 1}{\sqrt{2\frac{x^{-\beta}}{\gamma} + 1}} \Biggr) \,dx + \int_{f(\gamma)}^\infty \log\Biggl(  \frac{\frac{x^{-\beta}}{\gamma} + 1}{\sqrt{2\frac{x^{-\beta}}{\gamma} + 1}} \Biggr) \,dx
\end{align*}
where $f(\gamma)$ is a function of $\gamma$ that we are free to choose. Choosing $f(\gamma)= \gamma^{-1/\beta - \epsilon}$ ($\epsilon$ small) the first integral dominates the second one and, for small $\gamma,$
$\log(\rho) \ge \gamma^{-1/\beta - \epsilon} \log(\gamma^{-\epsilon \beta /2})$
from where the result in Table \ref{table:efdtaurhoscalings} follows. The joint large $d$, small $\gamma$ scalings can be established similarly.

When the second sum in \eqref{twosums} dominates, the scalings hold in probability. To illustrate this, we study here how to derive the large $d$ limit with $\beta<1$. 
Without loss of generality we can assume in what follows that each $y_j$ is centered, i.e. $y_j\sim \G(0,\gamma)$ instead of $y_j\sim \G\bigl((Ku)_j^\dagger,\gamma\bigr).$ This is justified since, for any $c>0,$
\begin{equation*}
\rp(y_j^2 \ge c) = \rp( |y_j|\ge c^{1/2}) \ge \rp( |y_j- (Ku)_j^\dagger|\ge c^{1/2}).
\end{equation*}
Neglecting the first sum in \eqref{twosums}, which can be shown to be of lower order in $d$, we get
\begin{equation*}
\sum_{j=1}^d \biggl(   \frac{2}{2+\gamma j^\beta} - \frac{1}{1+\gamma j^\beta}    \biggr)\gamma^{-1} y_j^2 = S(y,d). 
\end{equation*}
Using that $\E y_j^2 = \gamma$,
\begin{align*}
\E \log(\rho)&\ge  \sum_{j=1}^d \biggl(   \frac{2}{2+\gamma j^\beta} - \frac{1}{1+\gamma j^\beta}    \biggr)\\
&\approx  \int_{1}^d \biggl(   \frac{2}{2+\gamma x^\beta} - \frac{1}{1+\gamma x^\beta}    \biggr) \, dx \approx d^{1-\beta}=:m(d).
\end{align*}
Also, since $\Var(y_j^2) = 3\gamma^{2},$ 
\begin{align*}
\Var \log(\rho)&\ge  \sum_{j=1}^d \biggl(   \frac{2}{2+\gamma j^\beta} - \frac{1}{1+\gamma j^\beta}    \biggr)^2 \gamma^2\\
&\approx  \int_{1}^d \biggl(   \frac{2}{2+\gamma x^\beta} - \frac{1}{1+\gamma x^\beta}    \biggr)^2 \, dx \approx d^{1-2\beta}=:c(d).
\end{align*}

Thus we have 
\begin{align*}
\rp\Big(\log(\rho) \ge m(d)/2\Big) &\ge \rp\Big(S(y,d) \ge m(d)/2\Big)\\
&\ge \rp\Big(S(y,d) \ge \E S(y,d)/2\Big)\\
&\ge \rp\Big(|S(y,d)- \E S(y,d)| \le \E S(y,d)/2\Big)\\
&= 1- \rp\Big(|S(y,d)- \E S(y,d)| \ge \E S(y,d)/2\Big)\\
&\ge 1- \rp\Big(|S(y,d)- \E S(y,d)| \ge m(d)/2\Big)\\
&\ge 1- 4\frac{c(d)}{m(d)^2} \to 1.
\end{align*}
\end{proof}

\subsection{Proofs Section \ref{sec:FIL}}   \label{PS4}
\label{sec:PFIL}
The following lemma will be used in the proof of Theorem \ref{theoremcomparisoncollapse}. It justifies the use of the cyclic property in calculating certain traces in the infinite dimensional setting.

\begin{lemma}
Suppose that $A=S^*S,$ where $S=\Gamma^{-1/2}K\Sigma^{1/2}$ as in Assumption \ref{Generalinvprobassumption} is bounded. Then 
\begin{equation*}
\tau = \tr(A) = \tr( \Gamma^{-1} K \Sigma K^*).
\end{equation*}
 Therefore, using the equivalence in Table \ref{table:filteringequiv} we have that $\tau_{st}$ and $\tau_{op}$  admit the following equivalent expressions:
\begin{equation}
\label{eq:t1}
\tau_{st}=\tr\bigl(R^{-1} H (MPM^* + Q) H^* \bigr)
\end{equation}
and
\begin{equation}
\label{eq:t2}
\tau_{op}=\tr\bigl((R+HQH^*)^{-1} HMPM^*H^*\bigr).
\end{equation}
\end{lemma}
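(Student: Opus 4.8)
The plan is to first prove the single general identity $\tau = \tr(A) = \tr(\Gamma^{-1}K\Sigma K^*)$ under the standing hypothesis that $S=\Gamma^{-1/2}K\Sigma^{1/2}$ (equivalently $A=S^*S$) is bounded, and then obtain the two displayed formulas for $\tau_{st}$ and $\tau_{op}$ by substituting the standard and optimal triples $(K,\Sigma,\Gamma)$ recorded in Table \ref{table:filteringequiv}. The second part is purely a matter of bookkeeping, so essentially all of the work is in the general identity.

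For the general identity I would proceed in two steps. Since $A=S^*S$ is bounded, $S$ is a bounded operator, so Lemma \ref{lem:tr}(iv) applies and gives
\[
\tau=\tr(A)=\tr(S^*S)=\tr(SS^*)=\tr\bigl(\Gamma^{-1/2}K\Sigma K^*\Gamma^{-1/2}\bigr),
\]
with the convention that the value is $+\infty$ when $S$ is not Hilbert--Schmidt; here $SS^*=\Gamma^{-1/2}(K\Sigma K^*)\Gamma^{-1/2}$ is self-adjoint and positive, so its trace is unambiguously defined (possibly infinite). The second step is to pass from the symmetric form $\Gamma^{-1/2}(K\Sigma K^*)\Gamma^{-1/2}$ to $\Gamma^{-1}(K\Sigma K^*)$. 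Writing $B:=K\Sigma K^*$, which is bounded, one has the algebraic identity $\Gamma^{-1}B=\Gamma^{-1/2}\bigl(\Gamma^{-1/2}B\Gamma^{-1/2}\bigr)\Gamma^{1/2}$, exhibiting $\Gamma^{-1}B$ as a similarity transform of $SS^*$ by $\Gamma^{1/2}$; this is exactly the cyclic trace manipulation underlying Lemma \ref{lem:tr}(v), which I would invoke to conclude $\tr(\Gamma^{-1/2}B\Gamma^{-1/2})=\tr(\Gamma^{-1}B)$, again allowing the common value $+\infty$.

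The hardest point is this second step, because $\Gamma$ is only assumed bounded and positive (not bounded below), so $\Gamma^{-1}$ and $\Gamma^{-1/2}$ are generally unbounded and the cyclic property of the trace is not automatic; in particular Lemma \ref{lem:tr}(ii) cannot be applied directly. I would therefore argue through the eigenvalue correspondence used to prove Lemma \ref{lem:tr}(v)--(vi): $\Gamma^{-1}Bv=\lambda v$ holds if and only if $SS^*(\Gamma^{1/2}v)=\lambda(\Gamma^{1/2}v)$, since multiplying by the bounded injective operator $\Gamma^{1/2}$ is reversible. This sets up a multiplicity-preserving bijection between the eigenvalues of $\Gamma^{-1}B$ and those of the self-adjoint operator $SS^*$, and under Assumption \ref{a:filt} (countable pure point spectrum) each trace is the sum of its eigenvalues, so the two sums agree term by term, with a consistent treatment of the divergent case. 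This is precisely the infinite-dimensional matrix-similarity mechanism that justifies Lemma \ref{lem:tr}(v).

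Finally, the ``therefore'' part follows by substitution. Inserting the standard-proposal triple $K=H$, $\Sigma=MPM^*+Q$, $\Gamma=R$ into $\tr(\Gamma^{-1}K\Sigma K^*)$ yields \eqref{eq:t1}, namely $\tau_{st}=\tr\bigl(R^{-1}H(MPM^*+Q)H^*\bigr)$, while the optimal-proposal triple $K=HM$, $\Sigma=P$, $\Gamma=R+HQH^*$ yields \eqref{eq:t2}, namely $\tau_{op}=\tr\bigl((R+HQH^*)^{-1}HMPM^*H^*\bigr)$. In each case the boundedness of $A_{st}$ and $A_{op}$ assumed in Assumption \ref{a:filt} supplies precisely the hypothesis needed to run the general argument, completing the proof.
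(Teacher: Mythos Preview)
Your proposal is correct and follows essentially the same two-step route as the paper: first invoke Lemma~\ref{lem:tr}(iv) to pass from $\tr(S^*S)$ to $\tr(SS^*)$, then invoke Lemma~\ref{lem:tr}(v) to pass from the symmetric form $\tr(\Gamma^{-1/2}K\Sigma K^*\Gamma^{-1/2})$ to $\tr(\Gamma^{-1}K\Sigma K^*)$, and finally substitute the standard and optimal triples from Table~\ref{table:filteringequiv}. The paper's proof is terse at the second step, simply noting that $SS^*$ is bounded and $\Gamma^{1/2}$ is bounded before citing Lemma~\ref{lem:tr}(v); your added paragraph spelling out the similarity $\Gamma^{-1}B=\Gamma^{-1/2}(SS^*)\Gamma^{1/2}$ and the resulting eigenvalue bijection is exactly the mechanism behind Lemma~\ref{lem:tr}(v), so you are not taking a different route but rather unpacking the same citation.
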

\begin{proof}
 Using Lemma \ref{lem:tr}(iv) we have that 
$\tau = \tr(S^*S) = \tr(S S^*).$ Now note that $SS^*=\Gamma^{-1/2} K \Sigma K^* \Gamma^{-1/2}$ is bounded since $A$ is, and that $\Gamma^{1/2}$ is also bounded, hence we can use Lemma \ref{lem:tr}(v) to get the desired result.
\end{proof}

\begin{proof}[Proof of Theorem \ref{theoremcomparisoncollapse}]
Using the previous lemma,
\begin{align*}
\tau_{st} &= \tr\Bigl(R^{-1} HMPM^*H^* \Bigr) + \tr\Bigl(R^{-1}  HQH^* \Bigr) \\
&\ge \tr\Bigl( R^{-1}  HMPM^*H^* \Bigr)\\
&\ge \tr\Bigl((R+HQH^*)^{-1}  HMPM^*H^* \Bigr)=\tau_{op},
\end{align*}
where the first inequality holds because $R$ is positive-definite and $HQH^*$ is positive semi definite, and the second one follows from Lemma \ref{lem:tr}(vi).

If $\tr(HQH^* R^{-1})<\infty$ then there is $c>0$ such that, for all $x,$ $\|HQH^*x\| \le c \|Rx\|$. Hence applying again Lemma \ref{lem:tr}(vi) for both directions of the equivalence, we obtain that
\begin{align*}
\tau_{op}=\tr\Bigl((R+HQH^*)^{-1} HMPM^*H^* \Bigr)<\infty &\iff \tr\Bigl( R^{-1} HMPM^*H^* \Bigr)<\infty \\
&\iff \tau_{st}<\infty.
\end{align*}
\end{proof}

\end{document}